\newtheorem{theorem}{Theorem}[section]
\newtheorem{lemma}[theorem]{Lemma}
\newtheorem{remark}{Remark}[section]
\newcommand{\ech}{\color{black}  ~}    
\newcommand{\bea}{\begin{eqnarray*}}
\newcommand{\eea}{\end{eqnarray*}}
\newcommand{\bean}{\begin{eqnarray}}
\newcommand{\eean}{\end{eqnarray}}
\newcommand{\bfX}{{\bf X}}
\newcommand{\V}{{\rm Var}}
\newcommand{\C}{{\rm Cov}}
\newcommand{\sg}{\Sigma}
\newcommand{\what}{\widehat}
\newcommand{\calC}{\mathcal{C}}
\newcommand{\calD}{\mathcal{D}}
\newcommand{\calU}{\mathcal{U}}
\newcommand{\bbP}{\mathbb{P}} 
\newcommand{\bbR}{\mathbb{R}}
\newcommand{\bbE}{\mathbb{E}}
\begin{document}

\begin{frontmatter}
	
	\title{Minimax Posterior Convergence Rates and Model Selection Consistency in High-dimensional DAG Models based on Sparse Cholesky Factors}
	\runtitle{Convergence Rate and Selection Consistency for DAG Model}
	
	\author{\fnms{Kyoungjae} \snm{Lee}\corref{}\thanksref{m1}\ead[label=e1]{klee25@nd.edu}}
	\and
	\author{\fnms{Jaeyong} \snm{Lee}\thanksref{m2}\ead[label=e2]{leejyc@gmail.com}}
	\and
	\author{\fnms{Lizhen} \snm{Lin}\thanksref{m1}\ead[label=e3]{lizhen.lin@nd.edu}}
	\address{Department of Applied and Computational\\ 
		Mathematics and Statistics\\
		The University of Notre Dame\\
		Notre Dame, Indiana 46556\\
		USA \\
		\printead{e1}}
	\address{Department of Statistics \\
		Seoul National University\\
		1 Gwanak-ro, Gwanak-gu\\
		Seoul 08826\\
		South Korea\\
		\printead{e2}}
	\address{Department of Applied and Computational\\ 
		Mathematics and Statistics\\
		The University of Notre Dame\\
		Notre Dame, Indiana 46556\\
		USA \\
		\printead{e3}}
	\affiliation{The University of Notre Dame\thanksmark{m1} and Seoul National University\thanksmark{m2}}
	
	\runauthor{Kyoungjae Lee, Jaeyong Lee and Lizhen Lin}
	
	\begin{abstract}
		In this paper, we study the high-dimensional sparse directed acyclic graph (DAG) models under the empirical sparse Cholesky prior.
		Among our results, strong model selection consistency or graph selection consistency is obtained under more general conditions than those in the existing literature. 
		Compared to \cite{cao2016posterior}, the required conditions are weakened in terms of the dimensionality, sparsity and lower bound of the nonzero elements in the Cholesky factor.
		Furthermore, our result does not require the irrepresentable condition, which is necessary for Lasso type methods.
		We also derive the posterior convergence rates for precision matrices and Cholesky factors with respect to various matrix norms.
		The obtained posterior convergence rates are the fastest among those of the existing Bayesian approaches.
		In particular, we prove that our posterior convergence rates for Cholesky factors are the minimax  or at least nearly minimax depending on the relative size of true sparseness for the entire dimension.
		The simulation study confirms that  the proposed method outperforms the competing methods. 
	\end{abstract}
	
	\begin{keyword}[class=MSC]
		\kwd[Primary ]{62C20}
		\kwd[; secondary ]{62F15}
		\kwd{62C12.}
	\end{keyword}
	
	\begin{keyword}
		\kwd{DAG model}
		\kwd{Precision matrix}
		\kwd{Cholesky factor}
		\kwd{Posterior convergence rate}
		\kwd{Strong model selection consistency}
	\end{keyword}
\end{frontmatter}

\section{Introduction}

Detecting the dependence structure of multivariate data is one of important and challenging tasks,
especially when the number of variables is much larger than the sample size. 
Due to  advancements in technology, such data are routinely collected in a wide range of areas including genomics, climatology, proteomics and neuroimaging.
The estimation of the covariance (or precision) matrix is crucial to reveal the dependence structure.
Under the high-dimensional setting, however, the traditional sample covariance matrix is no longer a consistent estimator of the true covariance matrix \citep{johnstone2009consistency}.
For the consistent estimation of the high-dimensional covariance or precision matrices, various restrictive matrix classes have been proposed to reduce the number of effective parameters. 
They include the bandable matrices \citep{bickel2008regularized,cai2010optimal,cai2012adaptive,banerjee2014posterior}, sparse matrices \citep{cai2012minimax, cai2012optimal,banerjee2015bayesian} and low-dimensional structural matrices such as the sparse spiked covariance \citep{cai2015optimal,gao2015rate} and sparse factor models \citep{fan2008high,pati2014posterior}.
When the class of sparse matrices is of interest, the sparsity pattern can be encoded in many different ways.
Sparsity  can be imposed on the covariance matrix, precision matrix or Cholesky factor, which lead  to  different graph models.
In this paper, we focus on  imposing  sparsity  on the Cholesky factor of the precision matrix.

Consider a sample of data $X_1,\ldots,X_n \overset{i.i.d.}{\sim} N_p(0, \sg_n)$, where $N_p(\mu, \sg)$ is the $p$-dimensional normal distribution with the mean vector $\mu \in \bbR^p$ and covariance matrix $\sg \in \bbR^{p\times p}$.
For every positive definite matrix $\Omega_{n} =  \sg_n^{-1}$, the modified Cholesky decomposition (MCD) guarantees the  existence of unique Cholesky factor $A_n$ and diagonal matrix $D_n$ such that $\Omega_{n}= (I_p -A_n)^T D_n^{-1} (I_p - A_n)$.
The sparsity of a Gaussian directed acyclic graph (DAG)  can be uniquely encoded  by the Cholesky factor $A_n$ through the structure of the graph. 
In this paper, we assume that the parent ordering of the variables is known, which is a common assumption  used in the  literature such as in \cite{ben2015high}, \cite{khare2016convex}, \cite{yu2017learning} and \cite{cao2016posterior}.  
The details on this concept will be provided in subsection \ref{subsec:gaussian DAG}. 
For the estimation of Cholesky factor $A_n$, the banded assumption and the sparsity assumption are two popular assumptions. 
Under the banded assumption, the elements of the matrix far from the diagonal are assumed to be all zero, while under the sparsity assumption,  there is no constraint on the zero-pattern other than  assuming most of the entries are zero.
In recent years, various penalized likelihood estimators have been proposed with the sparsity assumption on $A_n$ \citep{huang2006covariance,rothman2010new,shojaie2010penalized,van2013ell,khare2016convex}
and banded assumption on $A_n$ \citep{yu2017learning}.


On the Bayesian side, relatively few works  have dealt with  asymptotic properties of the posteriors of high-dimensional Gaussian DAG models.
Posterior convergence rates for the precision matrices with $G$-Wishart priors \citep{roverato2000cholesky} were derived by \cite{banerjee2014posterior} and \cite{xiang2015high}, where $G$ is a decomposable graph. 
Note that a decomposable graph can be converted to a perfect DAG, a special case of the DAGs, by ignoring directions.
\cite{lee2017estimating} obtained the posterior convergence rates and minimax lower bounds for the precision matrices, but  only  bandable Cholesky factors were considered.
Posterior convergence rates for the precision matrices  as well as  strong model selection consistency  were recently derived by \cite{cao2016posterior} for  sparse DAG models.
However, their results are not adaptive to the unknown sparsity $s_0$, and the conditions required  for obtaining such results  are somewhat restrictive.

In this paper, we consider high-dimensional  sparse Gaussian DAG models where sparsity is imposed via the sparse Cholesky factor. 
We adopt an empirical Bayes approach with a fractional likelihood.
The empirical Bayes approach is justified by showing desirable asymptotic properties of the induced posterior such as strong model selection consistency and optimal posterior convergence rates.    
In addition, our theoretical results are adaptive to the unknown sparsity $s_0$.

There are four main contributions of this work.
First, we show  strong model selection consistency under much more general conditions than those in the literature.  
Specifically, the required conditions on the dimensionality, sparsity, structure of the Cholesky factor $A_n$ and the lower bound of the nonzero elements in $A_n$ (the {\it beta-min} condition, which will be described later) are significantly weakened.
Second, we derive the minimax or nearly minimax posterior convergence rates for the Cholesky factors under two scenarios:  with or without the beta-min condition  for the true Cholesky factor.
We show that at least one of the posterior convergence rates is  minimax  depending on the relative size of true sparseness for the entire dimension.
To the best of our knowledge, this is the first result on minimax posterior convergence rates in high-dimensional DAG models.
Third, we obtain the posterior convergence rates for precision matrices with respect to the spectral norm and matrix $\ell_\infty$ norm, which is the fastest among those of existing Bayesian approaches. 
Compared to \cite{cao2016posterior}, we achieve faster posterior convergence rate under more general conditions, except the bounded eigenvalue condition.
Furthermore, their results depend on the unknown sparsity $s_0$, whereas ours do not. 
Fourth, our method significantly improves the model selection performance in practice.
In particular, our method outperforms the other state-of-the-art methods in a simulation study. 
The theoretical choice of hyperparameters provided good  guidelines for practical performance.
Note that the choice of the hyperparameter, the individual edge probability $q_n$, in the hierarchical DAG-Wishart prior \citep{cao2016posterior} can be problematic in practice, as the posterior with the theoretical choice of $q_n$ tends to be stuck at very small size models.

The rest of paper is organized as follows.
In section \ref{sec:prelim}, we introduce notations, Gaussian DAG models, the empirical sparse Cholesky prior, the fractional posterior and the parameter class for the precision matrices.
In section \ref{sec:main},  strong model selection consistency,  posterior convergence rates and minimax lower bounds for the Cholesky factor, and posterior convergence rates for the precision matrices are established.
A simulation study focusing on the model selection property are represented in section \ref{sec:numerical}. 
The proofs of the main results are provided in the supplemental article \citep{lee2018supp}.

\section{Preliminaries}\label{sec:prelim}
\subsection{Norms and Notations}
For any $a ,b \in \bbR$, we denote $a\vee b$ and $a \wedge b$ as the maximum and minimum of $a$ and $b$, respectively.
For any $a \in \bbR$, we denote $\lfloor a \rfloor$ as the largest integer strictly smaller than $a$.
For any sequences $a_n$ and $b_n$, $a_n=o(b_n)$ denotes $a_n/b_n \to 0$ as $n\to\infty$. We denote $a_n = O(b_n)$, or equivalently $a_n \lesssim b_n$, if $a_n \le C b_n$ for some constant $C>0$, where $C$ is an universal constant.
We denote the indicator function for a set $A$ as $I(\cdot \in A)$ or $I_A(\cdot)$.
For a given $p$-dimensional vector $u = (u_1,\ldots,u_p)^T$ and set $S \subseteq \{1,\ldots, p\}$, we define $u_{S} = (u_{j})^T_{j\in S} \in \bbR^{|S|}$, where $|S|$ is the cardinality of $S$. 
For given index sets $S$, $S' \subseteq \{1,\ldots, p\}$ and real matrix $B \in \bbR^{p\times p}$, we denote $B_{(S, S')}$ as the $|S|\times |S'|$ submatrix consisting only of $S$th columns and $S'$th rows of $B$, and let $B_{S} = B_{(S, S)}$.
For a real matrix $B$, we denote $S_B$ as the index set for nonzero elements of $B$ and call $S_B$ the {\it support} of $B$.
We define $\calC_p$ as the class of all $p\times p$ dimensional positive definite matrices. 
For any $p\times p$ symmetric matrix $B$, $\lambda_{\min}(B)$ and $\lambda_{\max}(B)$ are the minimum and maximum eigenvalues of $B$, respectively.

For any $p$-dimensional vector $u=(u_1,\ldots,u_p)^T$, we define vector norms $\|u\|_1 = \sum_{j=1}^p |u_j|$, $\|u\|_2 = (\sum_{j=1}^p u_j^2)^{1/2}$ and $\|u\|_{\max} = \max_{1\le j\le p} |u_j|$.
For any $p\times p$ matrix $B=(b_{ij})$, we define the spectral norm, matrix $\ell_1$ norm, matrix $\ell_\infty$ norm and Frobenius norm by
\bea
\|B\| &=& \sup_{x\in\bbR^p \atop \|x\|_2=1} \|B x\|_2 \,\,=\,\, \left( \lambda_{\rm max}(B^T B) \right)^{1/2} , \\
\|B\|_1 &=& \sup_{x\in\bbR^p \atop \|x\|_1=1} \|B x \|_1 \,\,=\,\, \max_{1\le j\le p} \sum_{i=1}^p |b_{ij}| ,\\
\|B\|_{\infty} &=& \sup_{x\in\bbR^p \atop \|x\|_{\max}=1} \|B x \|_{\max} \,\,=\,\, \max_{1\le i\le p} \sum_{j=1}^p |b_{ij}|, \quad\text{ and} \\
\|B \|_F &=& \Big(\sum_{i=1}^p \sum_{j=1}^p  b_{ij}^2  \Big)^{1/2},
\eea
respectively.

For a given positive integer $m$, we denote $\chi^2_m$ as the chi-square distribution with degrees of freedom $m$.
For any random variables $Y_1,Y_2$ and $Y_3$, we denote $Y_1 \overset{d}{\equiv} Y_2 \oplus Y_3$ if the distribution of $Y_1$ is equal to that of $Y_2+Y_3$, and $Y_2$ and $Y_3$ are independent.
For given positive numbers $a$ and $b$, $Gamma(a,b)$ and $IG(a,b)$ are the gamma distribution and inverse-gamma distribution with shape parameter $a$ and rate parameter $b$, respectively. 
$Beta(a,b)$ is the beta distribution whose density function at $x\in (0,1)$ is proportional to $x^{a-1}(1-x)^{b-1}$.
We denote $N_p(X \mid \mu,\sg)$ as the density function of $N_p(\mu,\sg)$ at $X \in \bbR^p$.
We denote the inverse-Wishart distribution by $IW_p(\nu, \Phi)$, where  the degree of freedom and scale matrix are $\nu>p-1$ and $\Phi \in \calC_p$, respectively.

\subsection{Gaussian DAG Models}\label{subsec:gaussian DAG}

We consider the model
\bean\label{model}
X_1,\ldots,X_n \mid \Omega_n &\overset{i.i.d.}{\sim}& N_p(0, \Omega_n^{-1}),
\eean
where $\Omega_n = \sg_n^{-1}$ is a $p\times p$ precision matrix and $X_i = (X_{i1},\ldots, X_{ip})^T\in \bbR^p$ for all $i=1,\ldots,n$. 
The MCD guarantees that there exists unique lower triangular matrix $A_n=(a_{jl})$ and diagonal matrix $D_n=diag(d_j)$ such that
\bean\label{chol}
\Omega_n &=& (I_p - A_n)^T D_n^{-1} (I_p - A_n) ,
\eean
where $a_{jj} =0$ and $d_j >0$ for all $j=1,\ldots, p$. 
Let $S_{A_n}$ be the support of the Cholesky factor $A_n$, and $S_j$ be the support of the $j$th row of $A_n$.
Let $\bbP_{\Omega_{n}}$ and $\bbE_{\Omega_{n}}$ be the probability measure and expectation corresponding to the model \eqref{model}, respectively. 

The model \eqref{model} can be interpreted as a Gaussian DAG model depending on the sparsity pattern of $A_n$.
For a set of vertices $V = \{1,\ldots,p\}$ and a set of directed edges $E$, a graph $\calD = (V,E)$ is said to be a DAG if there is no directed cycles.
It is also called the Bayesian network or belief network.
In this paper, we assume that the variables have a known natural ordering in which no edges exist from larger vertices to smaller vertices. 
It has been commonly assumed in literature including \cite{shojaie2010penalized}, \cite{ben2015high}, \cite{khare2016convex}, \cite{yu2017learning} and \cite{cao2016posterior}. 
There are relatively few works on DAG models when the  ordering of variables is unknown \citep{kalisch2007estimating,rutimann2009high,van2013ell}. 
As discussed in \cite{van2013ell}, when the ordering is unknown, a very different technique is needed relative to the known ordering case. 


For $i \in V$, define the set of all $i$'s parents as the subset of $V$ smaller than $i$ and sharing an edge with $i$ and denote it as $pa_i(\calD)$.  
Any multivariate Gaussian distribution that obeys the directed Markov property with respect to a DAG $\calD$ is said to be a {\it Gaussian DAG model over} $\calD$.
To be specific, if $X = (X_{1},\ldots, X_{p})^T \sim N_p(0, \Omega^{-1})$
and $N_p(0, \Omega^{-1})$ belongs to a Gaussian DAG model over $\calD$, then 
\bea
X_{j} &\perp& \{X_{j'} \}_{j'< j ,\, j' \notin pa_j(\calD)}  \,\,\Big| \,\, (X)_{pa_j(\calD) } ,
\eea
for each $j =1,\ldots,p$. 
Furthermore, if we adopt the MCD as in \eqref{chol}, with the known ordering of variables, $N_p(0,\Omega^{-1})$ belongs to a Gaussian DAG model over $\calD$ if and only if $a_{jl} =0$ whenever $l \notin pa_j(\calD)$ \citep{cao2016posterior}.
In other words, \emph{the support of $A$ uniquely determines a DAG $\calD$} under the known ordering assumption.  \ech
The model $X=(X_1,\ldots,X_p)^T \sim N_p(0,\Omega^{-1})$ given $S_{A}$ is equivalent to a Gaussian DAG model, and it can be represented as a linear autoregressive model,
\bean\label{model2}
\begin{split}
	{X}_{1} \mid d_1 \,\,&\sim\,\, N(0,\, d_1 ), \\
	{X}_{j} \mid a_{S_j}, d_j, S_j \,\,&\overset{ind}{\sim}\,\,  N \Big( \sum_{l\in S_j} {X}_{l} a_{jl} ,\, d_j \Big) ,~ ~j=2,\ldots,p ,
\end{split}
\eean
where $a_{S_j} = a_{j,S_j} =(a_{j j'})_{j'\in S_j}^T$. 
For more details on the expression \eqref{model2}, refer to \cite{bickel2008regularized} and \cite{ben2015high}.
The autoregressive model interpretation enables us to adopt the priors introduced in the linear regression literature. 
Since $a_{S_j}$ corresponds to nonzero elements among $a_j = (a_{j1},\ldots, a_{j,j-1})^T$, one can use a prior designed for sparse regression coefficient vectors for $a_{j}$, which is our strategy introduced in section \ref{subsec:SCprior}.

In this paper, we consider the high-dimensional setting where $p = p_n$ is a function of $n$ increasing to infinity as $n\to \infty$ and $p \ge n$.
We assume that the data were generated from a true precision matrix $\Omega_{0n}$, where $\sg_{0n} = \Omega_{0n}^{-1}$ is the true covariance matrix.
Denote the MCD \eqref{chol} of the true precision matrix by $\Omega_{0n} = (I_p - A_{0n})^T D_{0n}^{-1}(I_p- A_{0n})$, where $A_{0n} =(a_{0,jl})$, $a_{0j}= (a_{0,j1},\ldots, a_{0,j,j-1})^T$ and $D_{0n} = diag(d_{0j})$.
For notational convenience, let $\bbP_0= \bbP_{\Omega_{0n}}$ and $\bbE_0 = \bbE_{\Omega_{0n}}$.

We now define some notations related to the data set. Let $\bfX_n =(X_1,\ldots,X_n)^T$ $\in \bbR^{n\times p}$ be the data of size $n$, and $\tilde{X}_j = (X_{1j},\ldots, X_{nj})^T \in \bbR^n$ be the $j$th column of $\bfX_n$.
For a given index set $S \subseteq \{1,\ldots,p\}$, let $\bfX_{S}=(\tilde{X}_{j})_{j \in S} \in \bbR^{n\times |S|}$ be the data matrix consisting only of $S$th columns of $\bfX_n$.
Let $Z_{ij} = (X_{i1},\ldots, X_{i,j-1})^T \in \bbR^{j-1}$ and $\tilde{Z}_j = (Z_{1j},\ldots, Z_{nj})^T \in \bbR^{n\times (j-1)}$ for all $j=2,\ldots,p$. 

For a given positive integer $1\le s \le p$, we define $\Psi_{\max}(s)^2 = \sup_{S:0< |S| \le s} \lambda_{\max}( \bfX_{S}^T \bfX_{S})$ and $\Psi_{\min}(s)^2 =\inf_{S:0< |S| \le s} \lambda_{\min}( \bfX_{S}^T \bfX_{S})$, where the supremum and infimum are taken over all index sets $S \subseteq \{1,\ldots,p\}$.
\emph{We say that the restricted eigenvalue condition is met for some integer $s$ if $n^{-1}\Psi_{\min}(s)^2$ is bounded away from zero uniformly for all large $n$. }
This condition has been used in the high-dimensional regression literature to control the behavior of the design matrix. 
The autoregressive model representation \eqref{model2} connects the eigenvalues of the precision matrix $\Omega_{0n}$ with those of the design matrix in \eqref{model2} because the quantity $\bfX_{S_j}$ corresponds to the design matrix based on the representation.
Thus, the bounded eigenvalue assumption \hyperref[A1]{\rm (A1)} in section \ref{subsec:class} essentially corresponds to the restricted eigenvalue condition.


\subsection{Empirical Sparse Cholesky  Prior}\label{subsec:SCprior}

In this paper, we suggest the following prior distribution for our model:
\begin{eqnarray}\label{SCprior}
\begin{split}
a_{S_j} \mid d_j, S_j \,\,&\overset{ind}{\sim}\,\, N_{|S_j|} \left( \what{a}_{S_j},\,\,   \frac{d_j}{\gamma } \big(\bfX_{S_j}^T \bfX_{S_j}\big)^{-1}  \right) ,~~ j=2,\ldots,p ,\\
\pi(d_{j})  \,\,&\overset{i.i.d.}{\propto}\,\, d_j^{-\nu_0/2 -1 }, ~~ j=1,\ldots,p , \\
\pi_j(S_j=S_j') \,\,&\propto\,\, \binom{j-1}{|S_j'|}^{-1} f_{nj}(|S_j'|) , \,\, j=2,\ldots,p ,\,\, S_j' \subseteq \{1,\ldots, j-1\},\\
f_{nj}(|S_j'|) \,\,&\propto\,\, c_1^{-|S_j'|} p^{-c_2 |S_j'|} I(0\le |S_j'| \le R_j \wedge (j-1)), \,\, j=2,\ldots,p,
\end{split}
\end{eqnarray}
for some positive constants $\nu_0,c_1,c_2,R_2,\ldots,R_p$ and $\gamma$, where $f_{nj}$ is a probability mass function on $\{0,1,\ldots, R_j\wedge(j-1) \}$ and $\what{a}_{S_j}  = \big(\bfX_{S_j}^T \bfX_{S_j}\big)^{-1} \bfX_{S_j}^T \tilde{X}_j$. 
The proposed prior is empirical in the sense that it depends on the data, so we call the prior \eqref{SCprior} the empirical sparse Cholesky (ESC) prior.  
To obtain the desired asymptotic properties, appropriate conditions for hyperparameters in \eqref{SCprior} will be introduced in section \ref{sec:main}. 
Note that the prior for $d_j$ can be generalized to the proper prior $IG(\nu_0/2, \nu_0')$ for some constant $\nu_0'>0$ and the results in section \ref{sec:main} also hold for this prior choice. 
However, for  computational convenience, we describe and prove the main results with the improper prior $\pi(d_j) \propto d_j^{-\nu_0/2-1}$.

For the conditional prior of $a_j$ given $d_j$, we first introduce zero components through the prior $\pi_j$ and impose the Zellner's g-prior \citep{zellner1986assessing} on the nonzero components, $a_{S_j}$. 
The use of Zellner's g-prior simplifies the calculation of the marginal posterior for $S_j$. 
\cite{martin2017empirical} suggested a similar prior in the high-dimensional linear regression model.
Also note that the ESC prior has a connection to the DAG-Wishart prior \citep{ben2015high, cao2016posterior}.
Theorem 7.3 in \cite{ben2015high} shows that the DAG-Wishart prior on $(A_n, D_n)$ given a DAG implies the inverse-gamma distribution on $d_j$ and multivariate normal distribution on the nonzero elements of $a_j$ given $d_j$,  where  $(a_j, d_j)$ are mutually independent for all $j=1,\ldots,p$.
Thus, the ESC prior \eqref{SCprior} is quite close to the DAG-Wishart prior when the support of $A_n$ is given.

\cite{cao2016posterior} used the DAG-Wishart prior to recover the sparse DAG and estimate the precision matrix in high-dimensional settings. 
Thus, their prior on $(A_n,D_n)$ is quite close to ours, and can be viewed as a set of priors for autoregressive model \eqref{model2} as discussed in the previous paragraph.
For the support of DAGs, they imposed the element-wise sparsity using independent Bernoulli distributions with the hyperparameter $q_n$, which has a nice interpretation as the individual edge probability. 
Based on the hierarchical DAG-Wishart prior, they obtained the strong model selection consistency for the DAG and the posterior convergence rate for the precision matrix with respect to the spectral norm.
However, they did not directly adopt the autoregressive model interpretation as in \eqref{model2}, which is different from our approach.   
By using the ESC prior, we can adopt state-of-the-art techniques on selection consistency for the regression coefficient \citep{martin2017empirical} and achieve the strong model selection consistency under much weaker conditions than those in \cite{cao2016posterior}.
Furthermore, compared to the existing literature, we obtain faster posterior convergence rates for precision matrices and Cholesky factors under weaker conditions using the techniques introduced by \cite{lee2017optimal, lee2017estimating} and \cite{martin2017empirical}.
Indeed, the posterior convergence rates for Cholesky factors are nearly or exactly optimal depending on the relative size of true sparseness for the entire dimension.

\subsection{$\alpha$-posterior Distribution}

We suggest adopting the fractional likelihood with power $\alpha \in (0,1)$,   
\bean\label{flik}
L_n(A_n, D_n)^\alpha &=& \prod_{i=1}^n \Big\{ N_p \big( X_i \mid 0,\, (I_p-A_n)^{-1} D_n ((I_p-A_n)^T)^{-1}  \big) \Big\}^\alpha .
\eean 
The use of fractional likelihood has received increased attention in recent years \citep{martin2014asymptotically, syring2015scaling, miller2018robust}.
In this paper, we use the fractional likelihood mainly because of its appealing theoretical properties under relatively weaker conditions compared to the actual posterior \citep{bhattacharya2016bayesian}.
In the proof of the main results of this paper, the use of the fractional likelihood enables us to effectively deal with the ratio of estimated residual variances $\what{d}_{S_j}$ (the  proof of Theorem \ref{thm:sel}) and the ratio of likelihood $L_{nj}(a_j, d_j)$ (the  proof of Lemma 10.2), where $\what{d}_{S_j}$ and $L_{nj}(a_j, d_j)$ will be defined later.

Here we give a more detailed justification of using the fractional likelihood.
The proposed conditional prior for $a_{S_j}$ in \eqref{SCprior} tracks the data closely because it is centered at the least square estimate. 
It may cause the unexpected inconsistency \citep{walker2001bayesian}.
The fractional likelihood approach can prohibit it by preventing the posterior from following the data too closely.
To be more specific, the use of fractional likelihood can be interpreted as an empirical Bayes procedure by considering
\bea
L_n(A_n, D_n)^\alpha \, \pi(A_n, D_n) 
&=& L_n(A_n, D_n) \, \frac{\pi(A_n, D_n)}{L_n(A_n, D_n)^{1-\alpha}}.
\eea
Hence, the resulting posterior consists of an ordinary likelihood function and a data-dependent prior $\pi(A_n,D_n)/L_n(A_n,D_n)^{1-\alpha}$. 
Note that the power $\alpha$ {\it only appears in the prior}.
From this point of view, the prior is rescaled by a fractional likelihood which has an effect of penalizing parameter values that track the data too closely, while the penalty effect is controlled by the {\it hyperparameter} $\alpha$.

The choice of $\alpha$ can be important from a practitioner's point of view even though {\it theoretical results in this paper hold for any choice of} $0<\alpha<1$.
In practice, we suggest using $\alpha$ close to 1 to mimic the usual likelihood in finite sample scenario if there is no suspect of model failure, i.e. misspecification.
As long as one chooses $\alpha$ sufficiently close to 1, e.g. $\alpha=0.999$ or $\alpha =0.9999$, our experience confirms that the $\alpha$-posterior can be hardly distinguishable from the ``usual" posterior even in a finite sample scenario. 

\begin{remark}
	\cite{grunwald2017inconsistency} suggested using $I$-log-SafeBayes (or $R$-log-SafeBayes) to determine $\alpha$, which gives the minimizer $\hat{\alpha}$ of the posterior-expected posterior-randomized loss of prediction (or its variant).
	The induced posterior is robust to model misspecification in some cases \citep{grunwald2017inconsistency}.
\end{remark}

The prior \eqref{SCprior} and fractional likelihood \eqref{flik} lead to the following joint posterior distribution,
\begin{eqnarray}\label{post}
\begin{split}
a_{S_j} \mid d_{j}, S_j, \bfX_n \,\,&\overset{ind}{\sim}\,\, N_{|S_j|} \left(  \what{a}_{S_j}   , ~\frac{d_{j}}{(\alpha+\gamma)}\big(\bfX_{S_j}^T \bfX_{S_j}\big)^{-1}  \right), \quad j=2,\ldots,p, \\
d_{j} \mid S_j, \bfX_n \,\,&\overset{ind}{\sim} \,\, IG \left(\frac{\alpha n + \nu_0 }{2}, \frac{\alpha n}{2}  \what{d}_{S_j} \right) , \quad j=1,\ldots, p ,\\
\pi_\alpha(S_j \mid \bfX_n) \,\,&\propto\,\, \pi_j(S_j) \left(1+ \frac{\alpha}{\gamma} \right)^{-\frac{|S_j|}{2}} (\what{d}_{S_j})^{- \frac{\alpha n + \nu_0}{2}}, \quad j=2,\ldots,p ,
\end{split}
\end{eqnarray}
where $\what{d}_{S_j} = n^{-1} \tilde{X}_j^T( I_n - \tilde{P}_{S_j} ) \tilde{X}_j$ and $\tilde{P}_{S_j} = \bfX_{S_j} (\bfX_{S_j}^T \bfX_{S_j})^{-1}\bfX_{S_j}^T $. 
We refer to the posterior \eqref{post} as the $\alpha$-posterior and denote it by $\pi_\alpha(\cdot \mid \bfX_n)$ to clarify that we consider the $\alpha$-fractional likelihood.
Throughout the paper, $\alpha \in (0,1)$ is a fixed constant.

\subsection{Parameter Class}\label{subsec:class}


For given positive constants $0<\alpha<1$, $0<\epsilon_0< 1/2$, $C_{\rm bm}$ and a sequence of positive integers $s_0$, we introduce  conditions \hyperref[A1]{\rm (A1)}-\hyperref[A4]{\rm (A4)} for the true precision matrix:\\

\noindent
{\bf (A1)}\label{A1} $\epsilon_0 \le \lambda_{\min}(\Omega_{0n})\le \lambda_{\max}(\Omega_{0n})\le \epsilon_0^{-1}$.\\
{\bf  (A2)}\label{A2} $\max_{1\le j\le p} \sum_{l=1}^p I(a_{0,jl}\neq 0) \le s_{0}$.\\
{\bf  (A3)}\label{A3} 
\bea
\min_{(j,l): a_{0,jl}\neq 0} |a_{0,jl}|^2 &\ge& { \frac{16}{\alpha(1-\alpha) \, \epsilon_0^2 (1-2\epsilon_0)^2 } \,\,  C_{\rm bm}  \frac{\log p}{n} }  .
\eea
{\bf  (A4)}\label{A4} $\max_{1\le l\le p} \sum_{j=2}^p I(a_{0,jl}\neq 0) \le s_0$. \\

Condition \hyperref[A1]{\rm (A1)} ensures that the eigenvalues of $\Omega_{0n}$ are bounded by fixed constants, which has been commonly used for the estimation of the high-dimensional precision matrices \citep{ren2015asymptotic,cai2016estimating,banerjee2015bayesian} as well as the high-dimensional DAGs \citep{yu2017learning,khare2016convex}. 
In this paper, condition \hyperref[A1]{\rm (A1)} is mainly used to get upper bounds of $d_{0j}$, $d_{0j}^{-1}$ and $\|A_{0n}\|$.

Condition \hyperref[A2]{\rm (A2)} restricts the number of nonzero elements in each row of $A_{0n}$ to be smaller than $s_0$. Note that $s_0$ may increase to infinity as $n$ gets larger. 
In our setting, it is equivalent to say that the cardinality of $pa_j(\calD_0)$ is less than $s_0$ for any $j=2,\ldots,p$, where $\calD_0$ is the DAG corresponding to $A_{0n}$.  

Condition \hyperref[A3]{\rm (A3)} is the well-known {\it beta-min} condition, which determines the lower bound for the nonzero signals. 
The beta-min condition has been used for the exact support recovery of the high-dimensional linear regression coefficients \citep{wainwright2009information,buhlmann2011statistics,castillo2015bayesian,yang2016computational,martin2017empirical} as well as the high-dimensional DAGs \citep{khare2016convex,yu2017learning,cao2016posterior}.

Condition \hyperref[A4]{\rm (A4)} restricts the number of nonzero elements in each column of $A_{0n}$ to be smaller than $s_0$.  
In other words,  the number of edges directed from any vertex is less than $s_0$.
This assumption is required to deal with the posterior probability of $\|A_n - A_{0n}\|_1$.
Note that if we consider only the banded structure for the Cholesky factor as in \cite{yu2017learning}, conditions \hyperref[A2]{\rm (A2)} and \hyperref[A4]{\rm (A4)} automatically hold for some $s_0$.

Now, we define a class of precision matrices
\bea
\calU_p \,\,=\,\, \calU_p(\epsilon_0, s_0, \alpha, C_{\rm bm})  &=& \bigg\{ \Omega \in \calC_p: \,\, \Omega \text{ satisfies \hyperref[A1]{\rm (A1)}-\hyperref[A3]{\rm (A3)}}\,\,  \bigg\} .
\eea
In section \ref{sec:main}, we show that one can achieve the strong model selection consistency for any $\Omega_{0n}\in \calU_p$. 
Furthermore, we derive the posterior convergence rates for $A_{0n}$ and show that these are optimal or nearly optimal for the class $\calU_p$ (or $\calU_p$ without condition \hyperref[A3]{\rm (A3)}).

\begin{remark}
	\cite{cao2016posterior} weakened the bounded eigenvalue condition \hyperref[A1]{\rm (A1)} by replacing a constant $\epsilon_0$ with a sequence $\epsilon_{0,n}$, which can go to zero at certain rate.
	Our results also still hold under the similar weakened bounded eigenvalue condition with $\epsilon_{0,n}$, but it will sacrifice the other conditions.
	For example, by using a sequence $\epsilon_{0,n}$ in place of a fixed $\epsilon_0$ in the proof of Theorem \ref{thm:sel}, one can see that $s_0 \log p \le C n \epsilon_{0,n}^2$ for some $C>0$ and the beta-min condition \hyperref[A3]{\rm (A3)} with $\epsilon_{0,n}$ in place of $\epsilon_0$ are required.
\end{remark}

\section{Main Results}\label{sec:main}

We introduce Condition \hyperref[condP]{(P)} on the hyperparameters in the ESC prior \eqref{SCprior}, which is necessary for the results in this section.
Note that this condition is for the hyperparameters of the prior distribution, which does not affect the true parameter space. 
\\

{\bf Condition (P)}\label{condP} {\it 
	Assume that $\nu_0=o(n)$, $c_1=O(1)$, $c_2 \ge 2$ and $\gamma=O(1)$. 
	For given positive constants $0<\alpha<1$ and $0<\epsilon_0 <1/2$ used in conditions \hyperref[A1]{(A1)} and \hyperref[A3]{(A3)}, assume that $R_j  =  \lfloor n\,(\log p)^{-1} \{(\log n)^{-1} \vee c_3 \} \rfloor$ for any $j=2,\ldots,p$ and some small constant  $0<c_3 < (\epsilon')^2 \epsilon_0^2 /\{128(1+2\epsilon_0)^2 \} $, where $\epsilon' = \{(1-\alpha)/10\}^2$. 
}  \\

The condition $c_2 \ge 2$ is similar to the condition $\kappa \ge 2$ in \cite{yang2016computational}.
Note that the constants $c_1$ and $c_2$ in the ESC prior control the row-wise sparsity of the Cholesky factor $A_{n}$: large values of them make the posterior prefer small values for $|S_j|$. 
Thus, the above condition means that we need certain amount of penalty on $|S_j|$ to achieve desirable asymptotic properties.    
The condition on $R_j$ means that $R_j$ is of order $n(\log p)^{-1}$ and smaller than $n(\log p)^{-1} (\epsilon')^2 \epsilon_0^2 /\{128(1+2\epsilon_0)^2 \}$, so it can be replaced by the condition $R_j = \lfloor n \,(\log p)^{-1} (\epsilon')^2 \epsilon_0^2 /\{128(1+2\epsilon_0)^2 \} \rfloor$. 
To assure $s_0 \le R_n$, we will assume that $s_0 \le n(\log p)^{-1} c_3/2$ later.
In general, assuming $s_0 = O(n(\log p)^{-1} )$ or even $s_0 = o(n(\log p)^{-1})$ is essential to prove theoretical properties such as selection consistency and convergence rates. 
However, it can be unrealistically small for some finite sample size $n$.
More importantly, the quantity $\epsilon_0$ is unknown in typical applications, so it is desirable to make the prior work for any choice of $\epsilon_0$.
Condition \hyperref[condP]{(P)} argues that there is such a prior.
We suggest choosing a small enough $c_3$ so that $R_j$ can be regarded as $R_j = \lfloor n \,(\log p \cdot \log n)^{-1} \rfloor$ for finite samples.

\begin{remark}
	\cite{yang2016computational} suggested a prior for the linear model  similar to the ESC prior but for the mean vector of the prior $\pi(a_{S_j}\mid d_j, S_j)$, they used zero mean vector while we used $\what{a}_{S_j}$.
	There are two consequences from the use of the data-dependent mean $\what{a}_{S_j}$.
	First, we do not need an upper bound condition for $\| \bfX_{S_{0j}} a_{0, S_{0j}}\|_2$ or $\|a_{0, S_{0j}}\|_2$, while \cite{yang2016computational} assumed $\| \bfX_{S_{0j}} a_{0, S_{0j}}\|_2 \le g \, d_{0j} \log p$, where $g = \gamma^{-1}$ in this paper.
	It is known that this type of condition is required if we use the Zellner's $g$-prior with zero mean \citep{shang2011consistency}.
	Second, to prove model selection consistency, \cite{yang2016computational} assumed $g = p^{2c}$ for some $c\ge 1/2$ corresponding to $\gamma = p^{-2c}$ in our notation.
	This is the so-called information paradox of Zellner's $g$-priors \citep{liang2008mixtures}.
	We do not require this condition and just assume $\gamma = O(1)$.
\end{remark}

\subsection{Strong Model Selection Consistency}\label{subsec:sel}

When the recovery of the DAG is of interest, it is desirable to use a Bayesian procedure that guarantees the strong model selection consistency.
We show that the $\alpha$-posterior warrants this property under mild conditions.
As mentioned earlier, the Gaussian DAG model has an interpretation as a sequence of autoregressive model \eqref{model2}, which enables us to  adopt the state-of-the-art techniques for the selection consistency of the regression coefficient in \cite{martin2017empirical}. 

To use the results in \cite{martin2017empirical}, there are two main issues  that need to be addressed.
 The first   is the {\it restricted eigenvalue condition} for the design matrix.
In our setting, the design matrices consist of columns of data matrix $\bfX_n$, thus each row follows a multivariate normal distribution.
We show that under the bounded eigenvalue condition \hyperref[A1]{(A1)}, the restricted eigenvalue condition   for any integer $R = o(n)$ automatically holds on some {\it large} set $N^c$ having $\bbP_0$-probability tending to 1 (Lemma 9.1 in Supplementary Material). 
A similar result appears in \cite{narisetty2014bayesian}. 
The second issue is more challenging than the first. 
\cite{martin2017empirical} considered only the known (fixed) residual variance case, which corresponds to the known $d_{0j}$ case in our setting.
The assumption on the known residual variance results in a relatively straightforward proof for selection consistency. 
We extended their techniques to the unknown residual variance case by applying (non-central) chi-square concentration inequalities for the estimated residual variances $\what{d}_{S_j}$ for some index set $S_j$, which is motivated by \cite{shin2015scalable}. 
It reveals that the ratio of the marginal posteriors $\pi_\alpha(S_j \mid \bfX_n) / \pi_\alpha(S_{0j}\mid \bfX_n)$ actually behaves like the ratio of the conditional posteriors given $d_{0j}$, $\pi_\alpha(S_j \mid d_{0j}, \bfX_n) / \pi_\alpha(S_{0j} \mid d_{0j}, \bfX_n)$, with $\bbP_0$-probability tending to 1, where $S_{0j}$ is the index set for the nonzero elements in the $j$th row of $A_{0n}$.

We also note here that unlike the Lasso type (or its variants) of results with the random design matrix \citep{wainwright2009sharp}, our theory does not require the {\it irrepresentable condition} on the true covariance matrix.
For example, \cite{yu2017learning} and \cite{khare2016convex} require the irrepresentable condition for the asymptotic properties of estimators in DAG models.
See section IV of \cite{wainwright2009sharp} for more details on the irrepresentable condition.


\begin{theorem}[Strong model selection consistency]\label{thm:sel} 
	For given positive constants $0< \alpha < 1$, $0<\epsilon_0< 1/2$, $C_{\rm bm} > c_2 +2$ and an integer $s_0$, assume that $\Omega_{0n}$ satisfies conditions \hyperref[A1]{\rm (A1)}, \hyperref[A2]{\rm (A2)} and \hyperref[A3]{\rm (A3)}, i.e. $\Omega_{0n} \in \calU_p$.
	Consider model \eqref{model} and the ESC prior \eqref{SCprior} with Condition \hyperref[condP]{\rm (P)}.
	If $s_0\log p \le n \, c_3/2$,
	\bea
	\sup_{\Omega_{0n}\in \calU_p} \bbE_0 \Big[  \pi_\alpha(S_{A_n} \neq S_{A_{0n}} \mid \bfX_n) \Big] &=& o(1).
	\eea
\end{theorem}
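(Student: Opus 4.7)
The plan is to reduce the joint selection event to a per-row problem and then invoke high-dimensional regression techniques in the spirit of Martin et al.\ (2017), adapted to the unknown residual variance setting. Since $S_{A_n}=\bigcup_{j=2}^p S_j$ and analogously for $S_{A_{0n}}$, and the closed-form marginal posterior of $S_j$ in the third line of (6) decouples the rows conditional on the supports, a union bound gives
\[
\pi_\alpha(S_{A_n}\neq S_{A_{0n}} \mid \bfX_n) \;\le\; \sum_{j=2}^{p} \pi_\alpha(S_j \neq S_{0j} \mid \bfX_n),
\]
so it suffices to show each $\bbE_0\pi_\alpha(S_j\neq S_{0j}\mid\bfX_n)$ is $o(p^{-1})$, uniformly in $j$ and in $\Omega_{0n}\in\calU_p$. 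Under Condition~(P) and $s_0\log p\le nc_3/2$, the true row support satisfies $|S_{0j}|\le s_0\le R_j$, so $S_{0j}$ is in the prior's support, and the natural quantity to bound is $\sum_{S\ne S_{0j}} R_j(S)$, where
\[
R_j(S) \;=\; \frac{\pi_\alpha(S_j=S \mid \bfX_n)}{\pi_\alpha(S_j=S_{0j}\mid \bfX_n)} \;=\; \frac{\pi_j(S)}{\pi_j(S_{0j})}\Bigl(1+\tfrac{\alpha}{\gamma}\Bigr)^{-\frac{|S|-|S_{0j}|}{2}}\Bigl(\tfrac{\what d_S}{\what d_{S_{0j}}}\Bigr)^{-\frac{\alpha n+\nu_0}{2}}.
\]

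I would split this sum into overfitted supports $\{S\supsetneq S_{0j}\}$ and supports missing a true parent $\{S\not\supseteq S_{0j}\}$, and control each on a favorable event $N^c$ on which (i) Lemma~9.1 delivers a restricted eigenvalue bound simultaneously over all supports of size $\le R_j$ and (ii) the residual variance $\what d_{S_{0j}}$ concentrates at $d_{0j}$. For overfitted $S$, conditionally on the design, $n(\what d_{S_{0j}}-\what d_S)/d_{0j}$ is a central chi-square on $|S|-|S_{0j}|$ degrees of freedom, so standard tail bounds yield $(\what d_S/\what d_{S_{0j}})^{-(\alpha n+\nu_0)/2}\lesssim p^{(\alpha/2)(|S|-|S_{0j}|)}$ off a negligible event; combined with the prior penalty $p^{-c_2(|S|-|S_{0j}|)}$ and the combinatorial factor from summing over nested $S$, the condition $c_2\ge 2$ makes the overfitted contribution $o(p^{-1})$. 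For $S\not\supseteq S_{0j}$, the beta-min condition (A3) combined with the restricted eigenvalue bound forces $\what d_S\ge\what d_{S_{0j}}\bigl(1+c\,C_{\rm bm}|S_{0j}\setminus S|\log p/n\bigr)$ on $N^c$, hence $(\what d_S/\what d_{S_{0j}})^{-(\alpha n+\nu_0)/2}\lesssim p^{-c\alpha C_{\rm bm}|S_{0j}\setminus S|/2}$; the explicit constant $16/\{\alpha(1-\alpha)\epsilon_0^2(1-2\epsilon_0)^2\}$ in (A3) together with the hypothesis $C_{\rm bm}>c_2+2$ is calibrated precisely to beat the $\binom{j-1}{|S|}$-type combinatorial count.

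The main obstacle is the extension to unknown residual variance. Martin et al.\ (2017) treat only the known-variance case, where $R_j(S)$ is a clean Gaussian exponential; here, after integrating the fractional likelihood against the improper prior $d_j^{-\nu_0/2-1}$, the large exponent $-(\alpha n+\nu_0)/2$ attaches to the random $\what d_S$ and the analysis must handle the variability of the residual variance for every candidate support simultaneously. The technical bridge, carried out via (non-central) chi-square concentration along the lines of Shin et al.\ (2015), is to show that with $\bbP_0$-probability $1-o(p^{-1})$ one has $\what d_{S_{0j}}=d_{0j}(1+o(1))$ uniformly in $j$ and a uniform-in-$S$ lower bound on $\what d_S-\what d_{S_{0j}}$ of the form asserted in the preceding paragraph. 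This effectively reduces $R_j(S)$ to its known-variance analogue with $d_{0j}$ plugged in, and the fractional likelihood exponent $\alpha\in(0,1)$ is what cleanly separates a usable Gaussian exponent from a data-dependent prior rescaling. Once this uniform concentration is in hand, the per-row bounds above go through and summing over $j=2,\ldots,p$ absorbs into the final $o(1)$ rate.
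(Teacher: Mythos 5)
Your architecture matches the paper's: row-wise union bound, split into overfitted ($S\supsetneq S_{0j}$) and underfitted ($S\not\supseteq S_{0j}$) supports, a restricted-eigenvalue event from the Gaussian design, and chi-square concentration to reduce the unknown-variance ratio $\what{d}_{S}/\what{d}_{S_{0j}}$ to a known-variance analogue. However, two of your concrete steps would fail as sketched. For the overfitted case, you bound $(\what{d}_S/\what{d}_{S_{0j}})^{-(\alpha n+\nu_0)/2}\lesssim p^{(\alpha/2)k}$ (with $k=|S|-|S_{0j}|$) ``off a negligible event.'' Since $n(\what{d}_{S_{0j}}-\what{d}_S)/d_{0j}\sim\chi^2_k$, that bound requires $\chi^2_k\le k\log p$, an event whose complement has probability $\approx p^{-k/2}$; union-bounding over the $\approx p^k$ overfitted sets of each size makes the exceptional mass $p^{k/2}\to\infty$, and even the main term $p^{(\alpha/2)k}$, multiplied by the effective count $\binom{s}{s_{0j}}\le R_j^k$ and summed over $p$ rows, is not $o(1)$ under the stated hypothesis $c_2\ge 2$. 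The missing idea is to take the expectation rather than a tail bound: given the design, $\what{d}_{S}/\what{d}_{S_{0j}}\sim Beta((n-|S|)/2,(|S|-|S_{0j}|)/2)$, and the negative moment $\bbE_0(\what{d}_S/\what{d}_{S_{0j}})^{-(\alpha n+\nu_0)/2}$ is \emph{finite precisely because} $\alpha<1$ (one needs $n(1-\alpha)-\nu_0-|S|>0$, which Condition (P) guarantees) and is bounded by $(2/(1-\alpha))^{k/2}$ — a dimension-free constant per added variable. That is the actual role of the fractional likelihood here, and it is what makes $c_2\ge2$ suffice.

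For the underfitted case, your asserted inequality $\what{d}_S\ge\what{d}_{S_{0j}}\bigl(1+c\,C_{\rm bm}|S_{0j}\setminus S|\log p/n\bigr)$ on a high-probability event is false in general: decomposing $n(\what{d}_{S_{0j}}-\what{d}_{S})$ gives a negative signal term $-\|(I_n-\tilde{P}_{S})\tilde{Z}_ja_{0j}\|_2^2$ (which beta-min controls), a Gaussian cross term, \emph{and} a central chi-square term with $|S|-|S\cap S_{0j}|$ degrees of freedom coming from the spurious coordinates of $S$. A support that misses a single true parent but contains many spurious variables absorbs enough noise that $\what{d}_S$ can drop below $\what{d}_{S_{0j}}$, so no uniform lower bound of your form holds; a union bound over the $\approx p^{R_j}$ candidate sets cannot rescue it. The paper's proof keeps this chi-square term inside the expectation, pays for it via its moment generating function (yielding a factor $\nu_2^{|S|-|S\cap S_{0j}|}$ with $\nu_2$ a constant, again finite only because $\alpha+\nu_0/n<1+\min(\what{Q}_n)$), and absorbs it with the prior penalty $p^{-c_2|S|}$ — this is exactly where the hypothesis $C_{\rm bm}>c_2+2$ is consumed, not merely in beating the $\binom{j-1}{|S|}$ count. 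Only the ``denominator'' quantities (the event $N_{S_j,\alpha,\chi^2}^c$ controlling $\what{Q}_n$, whose per-set probabilities are $e^{-cn}$ and hence survive a $p^{R_j}$ union bound thanks to $R_j\log p\le c_3n$ with $c_3$ small) are pushed into a high-probability event. Your proposal needs to be restructured along these lines to close.
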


The assumption $s_0\log p=o(n)$ or $s_0\log p \le cn$ for some constant $c>0$ is widely used in the high-dimensional sparse covariance or precision matrix estimation literature.
In Theorem \ref{thm:sel}, we assume less restrictive condition $s_0 \log p \le n\, c_3/2$, which automatically guarantees $s_0 \le R_j$ for all $j=2,\ldots,p$.
Note that the constant $c_3$ is defined in Condition \hyperref[condP]{(P)}.

It is worthwhile to compare our result to those of \cite{cao2016posterior}, \cite{yu2017learning} and \cite{khare2016convex}.
 Note that in these works it  is  also assumed that the ordering of variables is known. 
\cite{cao2016posterior} showed the strong model selection consistency using the hierarchical DAG-Wishart prior.
They assumed variants of conditions \hyperref[A1]{\rm (A1)}, \hyperref[A2]{\rm (A2)} and \hyperref[A3]{\rm (A3)}.
First, they relaxed  condition \hyperref[A1]{(A1)} by letting $\epsilon_{0,n} \to 0$ such that $(\log p/n)^{1/2 -1/(2+k)}  = o(\epsilon_{0,n}^4)$ for some $k>0$, instead of a fixed $\epsilon_0>0$.
Second, they assumed the same condition \hyperref[A2]{\rm (A2)} but further assumed $s_0^{2+k} \sqrt{\log p/n} = o(1)$ and  $(\log p/n)^{k/(4k+8)} \log n =o(1)$ and considered only the DAGs with the total number of edges at most $8^{-1} s_0 (n/ \log p)^{(1+k)/(2+k)}$, which can be restrictive. 
Note that, when $p\ge n$, it does not include the banded Cholesky factor having $s_0$ nonzero elements for each row.
Third, they assumed somewhat strong beta-min condition compared with \hyperref[A3]{(A3)}, which requires
$\min_{j,l: a_{0,jl}\neq 0} |a_{0,jl}|^2 \ge M_n^2 s_0^2 \epsilon_{0,n}^{-1}\, (\log p/n )^{1/(2+k)}$ 
for some $k>0$ and some sequence $M_n \to \infty$.
Thus, their assumptions on the tuple $(n, p, s_0)$ as well as the parameter class are  much more restrictive than  ours, except for the bounded eigenvalue condition.
Furthermore, the choice of hyperparameter in the hierarchical DAG-Wishart prior depends on the unknown sparsity parameter $s_0$, thus it is not adaptive to the unknown parameter.
More specifically, the hyperparameter $q_n$ in the hierarchical DAG-Wishart prior should be set at $q_n = s_0(\log p/n)^{1/(2+k)}$ for some $k>0$ to achieve the strong model selection consistency.

\cite{yu2017learning} suggested a penalized maximum likelihood estimation for the Cholesky factor of the precision matrix and proved the exact signed support recovery under the condition $\rho^{-2} \|D_{0n}\| \epsilon_0^{-1} (12\pi^2 s_0 + 32)\log p < n$.
They considered the class of precision matrices satisfying condition \hyperref[A1]{(A1)} and having a banded structure with the row-specific bandwidths $s_{0j}= |S_{0j}|$ such that $a_{0,jl} = 0$ for all $1\le l < j - s_{0j}$ and $2\le j \le p$. 
Thus, by taking $s_0 = \max_j s_{0j}$, their class satisfies conditions \hyperref[A2]{(A2)} and \hyperref[A4]{(A4)}.
They also assumed the beta-min condition, 
$\min_{j,l: a_{0,jl}\neq 0} |d_{0j}^{-1/2} a_{0,jl}| \ge 8 \rho^{-1} \sqrt{ 2 \|D_{0n}\| \log p/n } \big(4 \max_j \| \sg_{0n, S_{0j}}^{-1} \|_\infty + 5\epsilon_0^{-1} \big)$.
In general, it holds that $\| \sg_{0n, S_{0j}}^{-1} \|_\infty = O(s_{0j}^{1/2})$ without further assumption, thus the above condition implies that the minimum nonzero $|d_{0j}^{-1/2} a_{0,jl}|$ is bounded below by $ \sqrt{s_0 \log p/n}$ with respect to a constant multiple, thus stronger than  condition \hyperref[A3]{(A3)}. 
Furthermore, they assumed the irrepresentable condition 
$$\max_{2\le j\le p} \max_{1\le l \le j \atop l \in S_{0j}^c} \| (\sg_{0n})_{(l, S_{0j})} (\sg_{0n, S_{0j}})^{-1} \|_1 \le \frac{6(1- \rho)}{\pi^2}$$
for some constant $\rho \in (0,1]$.
Therefore, they only considered the banded Cholesky factor and used somewhat strong beta-min condition as well as the irrepresentable condition.
However, the comparison with our result (Theorem \ref{thm:sel}) is not straightforward because their exact signed support recovery property is stronger than the selection consistency proved in Theorem \ref{thm:sel}.

\cite{khare2016convex} proved the signed support recovery property of the convex sparse Cholesky selection (CSCS) method when the data vectors $X_1,\ldots,X_n$ are random sample from  a  sub-Gaussian distribution.
They assumed condition \hyperref[A1]{(A1)} as well as the (stronger) variants of conditions \hyperref[A2]{(A2)} and \hyperref[A3]{(A3)}: they assumed $\sum_{j=2}^p s_{0j} = o(n/\log n)$ (which is stronger than $s_0 \log p \le n c_3/2$) and 
$\min_{j,l: a_{0,jl}\neq 0} |a_{0,jl}|^2 \ge M_n s_0^2 \log n/n$ for some $M_n\to\infty$.
Furthermore, they considered only the moderate high-dimensional setting, i.e. $p = O(n^c)$ for some constant $c>0$.
They also required the irrepresentable condition similar to  those in \cite{yu2017learning}.

\subsection{Posterior Convergence Rates for Cholesky Factors}\label{subsec:conv_chol}

In this subsection, we derive the posterior convergence rates for the Cholesky factors in two different scenarios depending on the existence of the beta-min condition \hyperref[A3]{(A3)}.
At first, under the beta-min condition, we show the posterior convergence rates and minimax lower bounds with respect to the matrix $\ell_\infty$ norm and Frobenius norm.
The obtained posterior convergence rates are {\it nearly} minimax, and become exactly minimax if $\log p =O(s_0)$ and $\log j = O(s_{0j})$ for all $j=2,\ldots,p$.
We also derive the posterior convergence rate and minimax lower bound with respect to the matrix $\ell_\infty$ norm without the beta-min condition.
The obtained posterior convergence rate turns out to be nearly minimax, and it will be exactly minimax if $s_0 \le p^\beta$ for some $0<\beta<1$.
Note that regardless of the relation between $s_0$ and $p$, at least one of the scenarios achieves the minimax rate.
Especially, we attain the minimax rate for both scenarios if $C \log p \le s_0 \le p^\beta$ for some constant $C>0$.
Figure \ref{fig:minimax} describes the range for $s_0$ in which the minimax rate can be obtained.

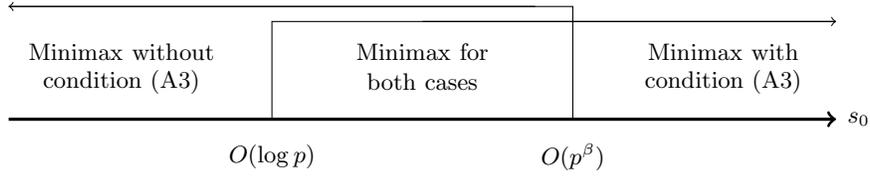
\begin{figure}\centering
	\begin{tikzpicture}
	\draw (-2,0) --(-2,1.3) -- (5.5,1.3);
	\draw [->] (-0,1.3) -- (5.5,1.3);
	\draw (2,0) --(2,1.5) -- (-5.5,1.5);
	\draw [->] (1.5,1.5) -- (-5.5,1.5);
	\draw [->][very thick] (-5.5,0) -- (5.5,0);
	\node at (5.8,0) {$s_0$};
	\node at (-2,-0.5) {$O(\log p)$};
	\node at (2,-0.5) {$O(p^\beta)$};
	\node at (-4,0.9) {Minimax without};
	\node at (-4,0.5) {condition (A3)};
	\node at (-0,0.9) {Minimax for};
	\node at (-0,0.5) {both cases};
	\node at (4,0.9) {Minimax with};
	\node at (4,0.5) {condition (A3)};
	\end{tikzpicture}
	\caption{
		For a given $0<\beta<1$, it describes the range for $s_0$ in which the minimax rate for the Cholesky factor can be obtained. 
		(A3) means the beta-min condition.
	}
	\label{fig:minimax}
\end{figure}

\subsubsection{Posterior Convergence Rates for Cholesky Factors under Beta-min Condition}

Define $\what{A}_n = (\what{a}_{jl} )$, where $(\what{a}_{jl})_{l \in  S_{0j}} = \what{a}_{S_{0j}}$ and $(\what{a}_{jl})_{l\in S_{0j}^c} = 0$.
Thus, $\what{A}_n$ is the empirical estimates of $A_{0n}$ with  true support $S_{A_{0n}}$.
To obtain the posterior convergence rate for the Cholesky factor, we use a divide and conquer strategy that is similar to \cite{lee2017optimal, lee2017estimating}.
Specifically, we decompose the posterior contraction probability into two parts as follows:
\bean\label{twoparts_chol}
&& \pi_\alpha \Big( \|A_n - A_{0n} \| \ge 2\epsilon_n' \mid \bfX_n  \Big) \nonumber \\
&\le& \pi_\alpha \Big(\|A_n - \what{A}_{n} \| \ge \epsilon_n' \mid \bfX_n \Big) 
+ \pi_\alpha \Big(\|\what{A}_{n} - A_{0n} \| \ge \epsilon_n' \mid \bfX_n \Big) \quad \,\,
\eean
for some positive sequence $\epsilon_n'$.
As in subsection \ref{subsec:sel}, we concentrate on a {\it large} set $N^c$ allowing us to handle the posterior contraction probability easily.
The first part of the right hand side of \eqref{twoparts_chol} describes how the posterior distribution concentrates around the empirical estimate $\what{A}_n$.
We use the selection consistency result in Theorem \ref{thm:sel}, and we focus only on the set $S_{A_n} = S_{A_{0n}}$.
It enables us to deal with the posterior distribution for $A_n$ easily, but with a cost of the beta-min condition \hyperref[A3]{(A3)} which is usually not essential for the convergence rate results.
Through the posterior distribution \eqref{post} given $S_{A_n} = S_{A_{0n}}$, we can obtain the contraction probability for $\|A_n -\what{A}_n\|$ using the concentration inequality for the chi-square random variables.
By taking expectation to the second part of the right hand side of \eqref{twoparts_chol}, it gives the contraction probability of $\what{A}_n$, $\bbP_0 \big[ \|\what{A}_{n} - A_{0n} \| \ge \epsilon_n'  \big]$.

\begin{theorem}[Posterior convergence rates for $A_{0n}$ with beta-min condition]\label{thm:chol_factor}
	For given positive constants $0<\alpha<1$, $0<\epsilon_0< 1/2$, $C_{\rm bm} > c_2+2$ and an integer $s_0$, assume that $\Omega_{0n}$ satisfies conditions \hyperref[A1]{\rm (A1)}, \hyperref[A2]{\rm (A2)} and \hyperref[A3]{\rm (A3)}, i.e. $\Omega_{0n} \in \calU_p$.
	Consider model \eqref{model} and the ESC prior \eqref{SCprior} with Condition \hyperref[condP]{\rm(P)}.
	If $s_0 \log p =o(n)$,
	\bea
	\sup_{\Omega_{0n}\in \calU_p} \bbE_0 \Big[\pi_\alpha\big( \|A_n - A_{0n} \|_\infty \ge K_{\rm chol} \sqrt{s_0}\left(\frac{s_0 + \log p}{n} \right)^{1/2}   \,\,\big|\,\, \bfX_n \big)  \Big] 
	&=& o(1) , \\
	\sup_{\Omega_{0n}\in \calU_p} \bbE_0 \Big[\pi_\alpha\big( \|A_n - A_{0n} \|_F^2 \ge K_{\rm chol}   \frac{ \sum_{j=2}^p (s_{0j} +  \log j)}{n}   \,\,\big|\,\, \bfX_n \big)  \Big] 
	&=& o(1) 
	\eea
	for some constant $K_{\rm chol} >0$.
\end{theorem}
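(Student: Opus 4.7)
The plan is to reduce the problem to the true-support event via Theorem \ref{thm:sel}, then exploit the explicit Gaussian–Inverse-Gamma structure of the conditional $\alpha$-posterior on that event, and finally combine posterior fluctuation bounds with sampling-distribution bounds on the OLS estimator $\what{A}_n$.

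\textbf{Step 1: reduction via selection consistency.} By Theorem \ref{thm:sel}, $\bbE_0[\pi_\alpha(S_{A_n}\neq S_{A_{0n}}\mid\bfX_n)]=o(1)$ uniformly on $\calU_p$, so it suffices to establish both displayed statements with $\pi_\alpha(\cdot\mid\bfX_n)$ replaced by $\pi_\alpha\big(\cdot\cap\{S_{A_n}=S_{A_{0n}}\}\mid\bfX_n\big)$. On the event $\{S_j=S_{0j}\text{ for all }j\}$ the posterior \eqref{post} factorizes across rows, and for each $j$ we have the exact conditional law
\begin{eqnarray*}
a_{S_{0j}}\mid d_j,\bfX_n &\sim& N_{|S_{0j}|}\Big(\what{a}_{S_{0j}},\,\tfrac{d_j}{\alpha+\gamma}(\bfX_{S_{0j}}^T\bfX_{S_{0j}})^{-1}\Big),\\
d_j\mid\bfX_n &\sim& IG\Big(\tfrac{\alpha n+\nu_0}{2},\,\tfrac{\alpha n}{2}\what{d}_{S_{0j}}\Big).
\end{eqnarray*}
This reduces the analysis to quadratic forms of independent Gaussians and inverse-gammas, row by row.

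\textbf{Step 2: split into posterior fluctuation and estimator error.} Following \eqref{twoparts_chol}, write $\|A_n-A_{0n}\|_\bullet \le \|A_n-\what{A}_n\|_\bullet + \|\what{A}_n-A_{0n}\|_\bullet$ for $\bullet\in\{\infty,F\}$. On $\{S_j=S_{0j}\}$, row $j$ of $A_n-\what{A}_n$ is supported on $S_{0j}$, so
\begin{eqnarray*}
\|A_n-\what{A}_n\|_\infty &\le& \max_j\sqrt{s_{0j}}\,\|a_{S_{0j}}-\what{a}_{S_{0j}}\|_2 \le \sqrt{s_0}\max_j\|a_{S_{0j}}-\what{a}_{S_{0j}}\|_2,\\
\|A_n-\what{A}_n\|_F^2 &=& \sum_{j=2}^p\|a_{S_{0j}}-\what{a}_{S_{0j}}\|_2^2,
\end{eqnarray*}
and similarly for $\what{A}_n-A_{0n}$, since this matrix is supported on $S_{A_{0n}}$ by construction.

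\textbf{Step 3: posterior fluctuation bounds.} Work on the ``large'' set $N^c$ from Lemma 9.1 on which the restricted eigenvalue condition holds; combined with (A1) this gives $\lambda_{\min}(\bfX_{S_{0j}}^T\bfX_{S_{0j}})\gtrsim n\epsilon_0$ uniformly in $j$. Integrating out $d_j$, the quantity $(\alpha+\gamma)\|a_{S_{0j}}-\what{a}_{S_{0j}}\|_2^2\cdot\lambda_{\min}(\bfX_{S_{0j}}^T\bfX_{S_{0j}})/d_j$ is stochastically dominated by a $\chi^2_{s_{0j}}$ variable, while $d_j$ concentrates around $\what{d}_{S_{0j}}=O_{\bbP_0}(d_{0j})$ by standard inverse-gamma/$\chi^2$ concentration (this is exactly where the fractional likelihood earns its keep). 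A chi-square tail of the form $\Pr(\chi^2_k\ge k+2\sqrt{kt}+2t)\le e^{-t}$ with $t=C(s_{0j}+\log j)$ then yields, with $\bbP_0$-probability $1-o(1)$,
\begin{eqnarray*}
\pi_\alpha\Big(\|a_{S_{0j}}-\what{a}_{S_{0j}}\|_2^2\ge C'\tfrac{s_{0j}+\log j}{n}\,\Big|\,\bfX_n\Big)
&\le& e^{-c(s_{0j}+\log j)},
\end{eqnarray*}
for each $j$. Summing the Frobenius bound over $j$ gives the required $\sum_j(s_{0j}+\log j)/n$ rate; replacing $\log j$ by $\log p$ and taking a union bound over $j$ gives the $\ell_\infty$ rate with the extra $\sqrt{s_0}$ factor.

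\textbf{Step 4: estimator error.} Under $\bbP_0$, conditional on $\bfX_{S_{0j}}$ one has $\tilde X_j=\bfX_{S_{0j}}a_{0,S_{0j}}+\epsilon_j$ with $\epsilon_j\sim N_n(0,d_{0j}I_n)$, so $\what{a}_{S_{0j}}-a_{0,S_{0j}}\sim N(0,d_{0j}(\bfX_{S_{0j}}^T\bfX_{S_{0j}})^{-1})$. Hence, again on $N^c$, $\|\what{a}_{S_{0j}}-a_{0,S_{0j}}\|_2^2$ is stochastically dominated by $(d_{0j}/(n\epsilon_0))\cdot\chi^2_{s_{0j}}$. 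The same chi-square tail, combined with a union bound for the $\ell_\infty$ statement and direct summation for the Frobenius statement, matches the rates in Step 3.

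\textbf{Main obstacle.} The non-trivial piece is controlling the inverse-gamma part uniformly in $j$ so that the ratio $d_j/d_{0j}$ can be absorbed into an $O(1)$ constant simultaneously for all $p-1$ rows with $\bbP_0$-probability $1-o(1)$; this requires a tail inequality sharp enough that a union bound still leaves a $1-o(1)$ event, which is where the $s_{0j}+\log j$ (rather than just $s_{0j}$) in the Frobenius bound originates and where the $(1-\alpha)$ factor in (A3) shows up after tracking constants. Once this is in place, assembling the two halves through the decomposition in Step 2 and invoking Theorem \ref{thm:sel} gives both claims with a common constant $K_{\rm chol}$.
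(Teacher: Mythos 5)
Your proposal is correct and takes essentially the same route as the paper's proof: reduction to the event $\{S_{A_n}=S_{A_{0n}}\}$ via Theorem \ref{thm:sel}, the decomposition \eqref{twoparts_chol}, chi-square tail bounds for the conditional Gaussian posterior of $a_{S_{0j}}-\what{a}_{S_{0j}}$ on a high-probability design event with $d_j$ localized as in Lemma \ref{lemma:d_supp} (this is the content of Lemma \ref{lemma:An_post}), and concentration of the least-squares error $\what{A}_n-A_{0n}$ (Lemma \ref{lemma:Nsets4} together with Lemma \ref{lemma:freq_conv}). The only cosmetic difference is that you bound $\|\what{a}_{S_{0j}}-a_{0,S_{0j}}\|_2$ via its exact conditional normal law given the design, whereas the paper does it deterministically on the event where the sample covariances concentrate; both yield the same rates.
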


\cite{khare2016convex} obtained the convergence rate $\sum_{j=2}^p s_{0j} \lambda_n$ for estimating the Cholesky factor under the spectral norm in  a moderately high-dimensional setting, i.e.  $p = O(n^c)$ for some constant $c>0$, where $\lambda_n$ is the tuning parameter in CSCS method.
They also assumed condition \hyperref[A1]{(A1)} as well as the (stronger) variants of conditions \hyperref[A2]{(A2)} and \hyperref[A3]{(A3)} as described in section \ref{subsec:sel}.   
Because they assumed $\sqrt{\sum_{j=2}^p s_{0j} \log p/n}$ $= o(\lambda_n)$, $\sum_{j=2}^p s_{0j} \lambda_n$ is strictly slower than $(\sum_{j=2}^p s_{0j})^{3/2} \sqrt{\log p/n}$ in term of the rate, which implies that their convergence rate is slower than the posterior convergence rate obtained in this paper.

In fact, it turns out that the posterior convergence rates in Theorem \ref{thm:chol_factor} are nearly optimal.
Theorem \ref{theorem:AnLbound_betamin} describes that the rates of the frequentist minimax lower bounds for the class $\calU_p$, which are of independent interests.  Note that the rates of   Theorem \ref{thm:chol_factor}  are  exactly optimal if $\log p =O(s_0)$ and $\log j = O(s_{0j})$ for all $j=2,\ldots,p$ matching the minimax rates of Theorem \ref{theorem:AnLbound_betamin}.
The key idea for proving the minimax lower bounds is to break down the model \eqref{model} into a set of linear regression models.

\begin{theorem}[Minimax lower bounds for $A_{0n}$ with beta-min condition]\label{theorem:AnLbound_betamin}
	For given positive constants $0<\alpha<1$, $\epsilon_0$, $C_{\rm bm}$ and an integer $s_0$, assume that $\Omega_{0n}$ satisfies conditions \hyperref[A1]{\rm (A1)}, \hyperref[A2]{\rm (A2)} and \hyperref[A3]{\rm (A3)}, i.e. $\Omega_{0n} \in \calU_p$.
	Consider model \eqref{model}. Then,
	\bea
	\inf_{\what{A}_n} \sup_{\Omega_{0n} \in \calU_p} \bbE_0 \|\what{A}_n - A_{0n} \|_\infty &\ge& c \cdot \frac{s_0}{\sqrt{n}} , \\
	\inf_{\what{A}_n} \sup_{\Omega_{0n} \in \calU_p} \bbE_0 \|\what{A}_n - A_{0n} \|_F^2 &\ge& c \frac{\sum_{j=2}^p s_{0j} }{n}
	\eea
	for some constant $c>0$, where the infimum  is taken over all possible estimators $\what{A}_n$.
\end{theorem}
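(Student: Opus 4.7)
The strategy is to exploit the autoregressive representation \eqref{model2}, which decomposes the Gaussian DAG model into a collection of sparse linear regressions, and then to apply Assouad's lemma to a suitably constructed packing. Since $p_A(X) = \prod_{j=1}^p N(X_{j}; Z_j^{\top} a_j, d_j)$, for any two Cholesky factors $A, A'$ sharing the same diagonal $D$,
\[
KL(P_{A}^n, P_{A'}^n) \;=\; \sum_{j=2}^p \frac{n}{2 d_j}(a_j-a_j')^{\top} \Sigma_{A,<j}(a_j-a_j'),
\]
where $\Sigma_{A,<j}$ is the $(j-1)\times(j-1)$ leading principal block of $\Omega_A^{-1}$; under (A1) this is bounded by $\epsilon_0^{-1} I_{j-1}$.

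For the $\|\cdot\|_\infty$ lower bound, I would fix a row index $j^\star$ with $s_{0j^\star}=s_0$ and a support $S^\star\subseteq\{1,\ldots,j^\star-1\}$ of size $s_0$. Taking $D=I_p$, consider the family $\{A_\tau : \tau\in\{-1,+1\}^{s_0}\}$ with $(A_\tau)_{j^\star,l}=A_0 + \tau_l\delta$ for $l\in S^\star$ and all other entries zero, where $A_0 = 2\sqrt{C_{\rm bm}\log p/n}$ and $\delta=c/\sqrt n$ for a small $c>0$. Then (A2) is trivial, (A3) holds since $A_0-\delta \ge \sqrt{C_{\rm bm}\log p/n}$ for $p$ large, and (A1) holds because $\|A_\tau\|=o(1)$. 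For neighbors at Hamming distance one only one summand survives, and it equals $\tfrac{n(2\delta)^2}{2 d_{j^\star}}(\Sigma_{A_\tau,<j^\star})_{ll} = O(1)$. For any estimator $\what A_n$ define $\hat\tau_l=\mathrm{sign}(\what a_{j^\star,l}-A_0)$; whenever $\hat\tau_l\ne\tau_l$ we have $|\what a_{j^\star,l}-(A_\tau)_{j^\star,l}|\ge\delta$, so $\|\what A_n-A_\tau\|_\infty \ge \|\what a_{j^\star}-(A_\tau)_{j^\star}\|_1 \ge \delta\,H(\hat\tau,\tau)$. Assouad's lemma then gives $\inf_{\what A_n}\sup_\tau\bbE_\tau\|\what A_n-A_\tau\|_\infty \ge c' s_0 /\sqrt n$.

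For the Frobenius lower bound I would extend the packing to perturb all rows simultaneously: for each $j\ge 2$ fix $S_{0j}$ with $|S_{0j}|=s_{0j}$, index by $\tau=(\tau^{(2)},\ldots,\tau^{(p)})$ with $\tau^{(j)}\in\{-1,+1\}^{s_{0j}}$, and set $(A_\tau)_{j,l}=A_0+\tau^{(j)}_l\delta$ for $l\in S_{0j}$. The KL again decomposes row-wise and for Hamming-one neighbors only a single summand survives, bounded by $Cn\delta^2 = O(1)$. Using the coordinate-wise sign recovery $\hat\tau$, one has $\|\what A_n-A_\tau\|_F^2\ge\delta^2\,H(\hat\tau,\tau)$ (summed over all rows), and Assouad yields $\inf_{\what A_n}\sup_\tau\bbE_\tau\|\what A_n-A_\tau\|_F^2 \ge c''\delta^2\sum_{j}s_{0j} = c''\sum_{j=2}^p s_{0j}/n$.

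The main technical obstacle is ensuring the packing remains inside $\calU_p$; namely, that the bounded-eigenvalue condition (A1) is preserved when many small entries are present at once. For the $\ell_\infty$ case only one row is perturbed, so $\|A_\tau\|=o(1)$ is immediate. For the Frobenius case I would choose a banded-style support pattern in which each column is hit by at most $s_0$ rows; then $\|A_\tau\|\le \sqrt{\|A_\tau\|_1\|A_\tau\|_\infty}=O(s_0\sqrt{\log p/n})=o(1)$ under the standard sparsity assumption $s_0^2 \log p = o(n)$, so each $\Omega_\tau=(I-A_\tau)^{\top}(I-A_\tau)$ has eigenvalues bounded away from $0$ and $\infty$. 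The remaining verifications (KL arithmetic and the Assouad conclusion) are routine.
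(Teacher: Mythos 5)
Your proposal is correct and follows essentially the same route as the paper: the paper likewise reduces the problem row-by-row to the minimax lower bound for a fixed-support linear regression coefficient in $\ell_1$ (resp. $\ell_2$) loss, and then simply cites the standard Assouad-type result (Example 13.12 of Duchi's lecture notes) that you carry out explicitly. The only differences are cosmetic — your centering constant $A_0$ must be scaled to match the full beta-min constant $16C_{\rm bm}/\{\alpha(1-\alpha)\epsilon_0^2(1-2\epsilon_0)^2\}$ in (A3), and your explicit verification that the Frobenius packing satisfies (A1) (which the paper omits entirely) is a point in your favor rather than a gap.
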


\subsubsection{Posterior Convergence Rates for Cholesky Factors without Beta-min Condition}

For a given positive constant $\epsilon_0$ and a sequence of positive integers $s_0$, we define a class of precision matrices, 
\bea
\calU_p^0 \,\,=\,\, \calU_p^0(\epsilon_0, s_0)  &=& \bigg\{ \Omega \in \calC_p: \,\, \Omega \text{ satisfies \hyperref[A1]{\rm (A1)} and \hyperref[A2]{\rm (A2)}}\,\,  \bigg\}. 
\eea
Note that in the definition of $\calU_p^0$, we \emph{do not require the beta-min condition.}
Theorem \ref{theorem:chol_minimax_Ubound} gives the posterior convergence rate for the class  $\calU_p^0$.
For the Theorem \ref{theorem:chol_minimax_Ubound}, we use the ESC prior \eqref{SCprior} but let $d_j \sim IG(\nu_0/2, \nu_{0}')$ for some constant $\nu_{0}'>0$ instead of $\pi(d_j) \propto d_j^{-\nu_0/2 -1}$.
We call this the modified ESC (MESC) prior.
As mentioned before, Theorems \ref{thm:sel}, \ref{thm:chol_factor} and \ref{thm:conv} in section \ref{sec:main} also hold for the MESC prior, but we describe Theorems \ref{thm:sel}, \ref{thm:chol_factor} and \ref{thm:conv} with the ESC prior for the computational convenience.

We consider the denominator and numerator of the posterior probability $\pi_\alpha(\|A_n - A_{0n} \|_\infty \ge \epsilon_n')$ separately, for some positive sequence $\epsilon_n'$.
For any $j=2,\ldots,p$, let $R_{nj}(a_j, d_j) = L_{nj}(a_j, d_j)/ L_{nj}(a_{0j}, d_{0j})$ be the likelihood ratio, where 
$$L_{nj}(a_{j}, d_j) = (2\pi d_j)^{-n/2} \exp \big\{- \| \tilde{X}_j - \tilde{Z}_j a_{j} \|_2^2/(2d_j)  \big\}.$$
Dealing with the likelihood ratio $R_{nj}(a_j, d_j)$ is one of the main tasks for proving Theorem \ref{theorem:chol_minimax_Ubound}. 
Lemma 10.1, Lemma 10.2 and Lemma 10.3 in Supplementary Material describe how we can deal with the likelihood ratio $R_{nj}(a_j,d_j)$ in the denominator and numerator.

\begin{theorem}[Posterior convergence rate for $A_{0n}$ without beta-min condition]\label{theorem:chol_minimax_Ubound}
	For a given positive constant $0<\alpha <1$, $0<\epsilon_0< 1/2$ and an integer $s_0$, assume that $\Omega_{0n}$ satisfies conditions \hyperref[A1]{\rm (A1)} and \hyperref[A2]{\rm (A2)}, i.e. $\Omega_{0n} \in \calU_p^0$.
	Consider model \eqref{model} with  the MESC prior with Condition \hyperref[condP]{\rm(P)}.
	If $s_0 \log p = o(n)$ and $\nu_0 =O(1)$,
	then
	\bea
	\sup_{ \Omega_{0n} \in \calU_p^0 } \bbE_0 \left[  \pi_\alpha\big( \|A_n - A_{0n} \|_\infty \ge K_{\rm chol}' \, s_0 \left(\frac{ \log p}{n} \right)^{1/2}   \,\,\big|\,\, \bfX_n \big) \right] &=& o(1)
	\eea
	for some constant $K_{\rm chol}' >0$.
\end{theorem}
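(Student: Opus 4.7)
The plan is to reduce the matrix $\ell_\infty$ control to a union bound over rows and then apply the fractional-posterior machinery separately to each autoregressive regression in \eqref{model2}. Since
\[
\|A_n - A_{0n}\|_\infty \;=\; \max_{2\le j\le p}\sum_{l=1}^{j-1}|a_{jl}-a_{0,jl}| \;=\; \max_{2\le j\le p}\|a_j-a_{0j}\|_1,
\]
it suffices to show, with $\epsilon_n' = K_{\rm chol}' s_0\sqrt{\log p/n}$, that $\sup_j \bbE_0 \pi_\alpha(\|a_j-a_{0j}\|_1\ge \epsilon_n'\mid \bfX_n)\le o(1/p)$, after which a union bound over the $p-1$ rows finishes the argument. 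Throughout I would work on the high-probability event $N^c$ of Lemma~9.1 where the restricted eigenvalue condition holds for sparsity levels up to $R_j$, so that $n^{-1}\lambda_{\min}(\bfX_S^T\bfX_S)\gtrsim 1$ uniformly for $|S|\le C s_0$.

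For each fixed $j$, I would write the row-wise posterior in ratio form:
\[
\pi_\alpha\bigl(\|a_j-a_{0j}\|_1\ge\epsilon_n'\mid\bfX_n\bigr)
\;=\;\frac{\sum_{S_j}\pi_j(S_j)\int R_{nj}(a_j,d_j)^\alpha \,\mathbf{1}\{\|a_j-a_{0j}\|_1\ge\epsilon_n'\}\,\pi(a_{S_j}\mid d_j,S_j)\pi(d_j)\,da_j\,dd_j}{\sum_{S_j}\pi_j(S_j)\int R_{nj}(a_j,d_j)^\alpha\,\pi(a_{S_j}\mid d_j,S_j)\pi(d_j)\,da_j\,dd_j}.
\]
The denominator is lower bounded by restricting to $S_j=S_{0j}$ and to a small KL-type neighborhood of $(a_{0j},d_{0j})$; using that the MESC prior assigns strictly positive density to $d_{0j}$ and that the Zellner-type conditional prior on $a_{S_{0j}}$ is centered at $\what a_{S_{0j}}$, Lemma~10.1 yields a bound of the form $\pi_j(S_{0j})\cdot e^{-c(s_{0j}+\log j)}$ (up to constants), which on $N^c$ is at least $e^{-C(s_0\log p)}$.

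The bulk of the work is in the numerator. I would split the sum over $S_j$ into two regimes. Regime~(i): $|S_j|>K s_0$ for a sufficiently large constant $K$. Here the prior mass factor $c_1^{-|S_j|}p^{-c_2|S_j|}\binom{j-1}{|S_j|}^{-1}$ from \eqref{SCprior} dominates any possible growth of the integrated likelihood ratio, since $c_2\ge 2$ yields $p^{-c_2|S_j|}\binom{j-1}{|S_j|}\le p^{-|S_j|}$ after combinatorial bookkeeping; a crude bound on $\bbE_0\int R_{nj}^\alpha d\pi$ (Lemma~10.3) suffices. Regime~(ii): $|S_j|\le K s_0$. Here Lemma~10.2 provides the key estimate $\bbE_0 R_{nj}(a_j,d_j)^\alpha \lesssim \exp\{-c\alpha(1-\alpha)\,n\|a_j-a_{0j}\|_{\bfX_n}^2/d_{0j}\}$ where $\|v\|_{\bfX_n}^2 = v^T\bfX^T\bfX v/n$ at the design. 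Because any such $a_j-a_{0j}$ is supported on at most $(K+1)s_0$ coordinates, the restricted eigenvalue bound gives $\|a_j-a_{0j}\|_{\bfX_n}^2\gtrsim \|a_j-a_{0j}\|_2^2\ge \|a_j-a_{0j}\|_1^2/\{(K+1)s_0\}\ge (\epsilon_n')^2/\{(K+1)s_0\}=K_{\rm chol}'^{\,2}\,s_0\log p/(K+1)$, producing a factor $e^{-c'\,K_{\rm chol}'^{\,2}\,s_0\log p}$ after taking expectation. Comparing with the denominator lower bound and summing $\binom{p}{k}\le p^k$ models per size $k\le Ks_0$ (absorbed by $p^{-c_2|S_j|}$ with the margin $c_2\ge 2$), choosing $K_{\rm chol}'$ large enough makes the ratio $o(1/p)$.

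The main obstacle is Regime~(ii): obtaining a sufficiently sharp expected likelihood-ratio bound that is uniform over the $\binom{p-1}{k}$ subsets of each size $k\le Ks_0$ and over all $(a_j,d_j)$ satisfying the $\ell_1$ lower bound, without leveraging the beta-min condition. Unlike the proof of Theorem~3.2, where selection consistency lets us condition on $S_j=S_{0j}$ and reduce everything to a chi-square calculation, here we must integrate over $d_j$ using the proper prior $IG(\nu_0/2,\nu_0')$ (this is why the MESC prior is required rather than the improper prior), and the combinatorial explosion from indexing supports must be exactly balanced by the $p^{-c_2|S_j|}$ penalty in $f_{nj}$. Once Lemmas~10.1--10.3 are in place, the remaining steps are routine bookkeeping.
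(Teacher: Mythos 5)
Your proposal follows essentially the same route as the paper's proof: reduce $\|A_n-A_{0n}\|_\infty$ to a union bound over rows, write each row-wise posterior probability as a ratio, lower-bound the denominator by restricting to $S_j=S_{0j}$ (the paper's Lemma 10.1), dispose of supports with $|S_j|\gtrsim s_0$ via the prior penalty (Lemma 10.2), and on small supports use a H\"older/R\'enyi-type bound on $\bbE_0 R_{nj}^\alpha$ together with the restricted eigenvalue event to convert the $\ell_1$ deviation into a prediction-norm deviation of order $s_0\log p$ (Lemma 10.3), finally choosing $K_{\rm chol}'$ large. The decomposition, the key lemmas, and the role of the proper $IG$ prior on $d_j$ are all as in the paper, so the proposal is correct and not a genuinely different argument.
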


\cite{yu2017learning} obtained the convergence rate $\max_j \|\sg_{0n, S_{0j}}^{-1}\|_\infty \cdot \|A_{0n}\|_\infty s_0 \sqrt{\log p/n} + \max_j \|\sg_{0n, S_{0j}}^{-1}\|_\infty^2 s_0^2 \log p/n$ for the Cholesky factor with respect to the matrix $\ell_\infty$ norm.
As stated before, they assumed condition \hyperref[A1]{\rm (A1)}, the banded Cholesky factor structure (which corresponds to conditions \hyperref[A2]{\rm (A2)} and \hyperref[A4]{\rm (A4)} in this paper) and the irrepresentable condition.   
Note that their convergence rate coincides with ours only if $\|A_{0n}\|_\infty$ and $\max_j \|\sg_{0n, S_{0j}}^{-1}\|_\infty$ are bounded and $s_0^2 \log p = O(n)$.

To the best of our knowledge, it is the first result on the posterior convergence rate for the high-dimensional sparse Cholesky factor without the beta-min condition.
Interestingly, the obtained posterior convergence rate is the same with the minimax convergence rate for the $s_0$-sparse coefficient vector in the regression models when $s_0 \le p^{\beta}$ for some $0<\beta<1$.
Note that the condition $s_0 \le p^{\beta}$ is not restrictive in the high-dimensional setting, because this condition is met if $n \le p^{\beta}$.
Theorem \ref{theorem:chol_minimax_Lbound} confirms that the above posterior convergence rate is nearly minimax for any $\Omega_{0n} \in \calU_p^0$.
Similar to Theorem \ref{theorem:AnLbound_betamin}, the key idea for proving Theorem \ref{theorem:chol_minimax_Lbound} is to break down the model into a set of linear regression models.

\begin{theorem}[Minimax lower bound for $A_{0n}$ without beta-min condition]\label{theorem:chol_minimax_Lbound}
	For a given constant $\epsilon_0$ and an integer $s_0$, assume that $\Omega_{0n}$ satisfies conditions \hyperref[A1]{\rm (A1)} and \hyperref[A2]{\rm (A2)}, i.e. $\Omega_{0n} \in \calU_p^0$.
	Consider model \eqref{model}. 
	Then,
	\bea
	\inf_{\what{A}_n} \sup_{\Omega_{0n} \in \calU_p^0} \bbE_0 \|\what{A}_n - A_{0n} \|_\infty &\ge&
	c \cdot s_0 \left(\frac{ \log (p/s_0)}{n} \right)^{1/2} 
	\eea
	for some constant $c>0$.
\end{theorem}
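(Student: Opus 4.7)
The plan is to reduce the matrix minimax problem to the well-studied $\ell_1$ minimax lower bound for sparse linear regression, using the autoregressive representation \eqref{model2} to isolate a single row of the Cholesky factor. Concretely, I would restrict to the parametric sub-family of $\calU_p^0$ in which $D_{0n} = I_p$ and only the last row $a^* \in \bbR^{p-1}$ of $A_{0n}$ is allowed to be non-zero, subject to $\|a^*\|_0 \le s_0$ and $\|a^*\|_2 \le 1/2$. A direct rank-two perturbation computation shows that the eigenvalues of $\Omega_n(a^*) = (I_p - A_n(a^*))^T (I_p - A_n(a^*))$ lie in a fixed interval bounded away from $0$ and $\infty$ over this sub-family, so it sits inside $\calU_p^0$ (possibly after shrinking $\epsilon_0$). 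Since the matrix $\ell_\infty$ norm is the maximum absolute row sum,
$$ \|\what{A}_n - A_{0n}(a^*)\|_\infty \;\ge\; \sum_{l=1}^{p-1} |\what{a}_{pl} - a_l^*| \;=\; \|\what{a}_p - a^*\|_1, $$
so it suffices to lower bound the $\ell_1$ minimax risk for an $s_0$-sparse vector.

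Next I would invoke the autoregressive decomposition: the only factor in the joint law of $\bfX_n$ that depends on $a^*$ is the conditional law of $\tilde{X}_p$ given $\tilde{Z}_p$, which is $N_n(\tilde{Z}_p a^*, I_n)$, a Gaussian linear regression with i.i.d.\ Gaussian design (the rows of $\tilde{Z}_p$ are $N_{p-1}(0, I_{p-1})$) and $s_0$-sparse coefficient vector $a^*$. At this stage I would apply a standard Fano-type argument: construct a Gilbert--Varshamov packing of the $s_0$-sparse vectors in $\{0, \pm \epsilon_n\}^{p-1}$ with $\epsilon_n = c\sqrt{\log(p/s_0)/n}$, producing at least $\exp(c' s_0 \log(p/s_0))$ vectors that are pairwise $\ell_1$-separated by order $s_0\epsilon_n$. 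For any two packing points $a,a'$, the KL divergence between the induced regression likelihoods, averaged over $\tilde{Z}_p$, is $\tfrac{1}{2}\bbE\|\tilde{Z}_p(a-a')\|_2^2 = \tfrac{n}{2}\|a-a'\|_2^2 = O(n s_0 \epsilon_n^2) = O(s_0 \log(p/s_0))$, which can be made a sufficiently small fraction of the packing log-cardinality by choosing $c$ small. Fano's inequality then yields $\bbE_0 \|\what{a}_p - a^*\|_1 \gtrsim s_0\sqrt{\log(p/s_0)/n}$, and combining with the display above delivers the theorem.

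The main obstacle I anticipate is handling the random-design nature of $\tilde{Z}_p$: classical sparse-regression lower bounds are typically stated for deterministic designs, so one must either average the KL divergence over the law of $\tilde{Z}_p$ (transparent here because $\bbE[\tilde{Z}_p^T \tilde{Z}_p] = n I_{p-1}$ and the KL is linear in the Gram matrix) or condition on a $\bbP_0$-high-probability event on which the restricted Gram matrices $n^{-1}\tilde{Z}_{p,S}^T \tilde{Z}_{p,S}$ for $|S| \le 2s_0$ are close to $I_{|S|}$, using standard Wishart concentration. Modulo this technicality, containment of the packing in $\calU_p^0$ is automatic, since $\|a^*\|_2 \le \sqrt{s_0}\,\epsilon_n = o(1)$ whenever $s_0 \log(p/s_0) = o(n)$, which is implicit in the regime of interest.
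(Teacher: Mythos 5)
Your proposal is correct and follows essentially the same route as the paper: both exploit that the matrix $\ell_\infty$ norm is the maximum absolute row sum to reduce the problem to the $\ell_1$ minimax lower bound for a single $s_0$-sparse Gaussian regression row (the paper's proof is exactly the reduction $\|\what{A}_n - A_{0n}\|_\infty \ge \|\what{a}_j - a_{0j}\|_1$ followed by a citation of Ye and Zhang (2010) for the rate $s_0\sqrt{\log(p/s_0)/n}$). The only difference is that you unpack that citation into an explicit sub-family inside $\calU_p^0$, a Varshamov--Gilbert packing and a Fano argument, and you explicitly average the KL divergence over the random design, all of which the paper leaves implicit.
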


\begin{remark}
	If we assume $s_0 \le p^{\beta}$ for some $0<\beta<1$, then $\log (p/s_0)$ has the same rate with $\log p$, 
	and the rate of the mininax lower bound in Theorem \ref{theorem:chol_minimax_Lbound} becomes $s_0 \sqrt{\log p/n}$.
	This assumption is reasonable in the high-dimensional setting.
\end{remark}

\subsection{Posterior Convergence Rates for Precision Matrices}\label{subsec:conv}

In this subsection, we derive the posterior convergence rates for the precision matrices with respect to various matrix norms.
Define $\what{\Omega}_n = (I_p - \what{A}_n)^T \what{D}_n^{-1} (I_p - \what{A}_n)$, where $\what{A}_n$ and $\what{D}_n= diag(\what{d}_{S_{0j}})$ are the empirical estimates of $A_{0n}$ and $D_{0n}$ with the true support $S_{A_{0n}}$.
Similar to the previous subsection, we use the divide and conquer strategy to deal with the posterior probability.
For the recovery of $\Omega_{0n} = (I_p - A_{0n})^T D_{0n}^{-1} (I_p - A_{0n})$, we further assume  condition \hyperref[A4]{\rm (A4)}.
For given positive constants $\epsilon_0, C_{\rm bm}$ and a sequence of positive integers $s_0$, define the parameter class as follows:
\bea
\calU_p^* \,\,=\,\,\calU_p^*(\epsilon_0, s_0, \alpha, C_{\rm bm})  &=& \bigg\{ \Omega \in \calC_p: \,\, \Omega \text{ satisfies \hyperref[A1]{\rm (A1)}-\hyperref[A4]{\rm (A4)}}\,\,  \bigg\} .
\eea
Theorem \ref{thm:conv} shows the posterior convergence rates for the precision matrix with respect to the spectral norm and matrix $\ell_\infty$ norm.

\begin{theorem}[Posterior convergence rates for $\Omega_{0n}$]\label{thm:conv}
	For given positive constants $0< \alpha < 1$, $0<\epsilon_0< 1/2$, $C_{\rm bm}> c_2+2$ and an integer $s_0$, assume that $\Omega_{0n}$ satisfies conditions \hyperref[A1]{\rm (A1)}-\hyperref[A4]{\rm (A4)}, i.e. $\Omega_{0n} \in \calU_p^*$.
	Consider model \eqref{model} and the ESC prior \eqref{SCprior} with Condition \hyperref[condP]{\rm(P)} and $\nu_0^2 = O(n\log p)$.
	If $s_0^{3/2}(s_0 + \log p) =o(n)$, then
	\bea
	&&\sup_{\Omega_{0n}\in \calU_p^*} \bbE_0 \Big[\pi_\alpha\big( \|\Omega_n - \Omega_{0n} \| \ge K_{\rm conv} s_0^{3/4}\left(\frac{s_0 + \log p}{n} \right)^{1/2}   \,\,\big|\,\, \bfX_n \big)  \Big] 
	= o(1) ,
	\eea
	and, if $s_0 (s_0 + \log p) =o(n)$, then
	\bea
	\sup_{\Omega_{0n}\in \calU_p^*} \bbE_0 \Big[\pi_\alpha\big( \|\Omega_n - \Omega_{0n} \|_\infty \ge K_{\rm conv} \cdot \|I_p - A_{0n}\|_\infty s_0 \left(\frac{s_0 + \log p}{n} \right)^{1/2}   \,\,\big|\,\, \bfX_n \big)  \Big] 
	= o(1)
	\eea
	for some constant $K_{\rm conv} >0$.
\end{theorem}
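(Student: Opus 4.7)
The plan is a divide-and-conquer argument that isolates the $\alpha$-posterior fluctuation from the frequentist estimation error by inserting the oracle plug-in estimator $\what{\Omega}_n = (I_p - \what{A}_n)^T\what{D}_n^{-1}(I_p - \what{A}_n)$. Writing $B = I_p - A_n$, $\hat B = I_p - \what{A}_n$, the triangle inequality gives $\|\Omega_n - \Omega_{0n}\|_* \le \|\Omega_n - \what{\Omega}_n\|_* + \|\what{\Omega}_n - \Omega_{0n}\|_*$ for $\|\cdot\|_*$ equal to either the spectral or matrix $\ell_\infty$ norm, and the algebraic identity
\bea
\Omega_n - \what{\Omega}_n &=& B^T(D_n^{-1} - \what{D}_n^{-1})B + (\what{A}_n - A_n)^T\what{D}_n^{-1}\hat B \\
&& +\,\, B^T\what{D}_n^{-1}(\what{A}_n - A_n)
\eea
reduces both halves to controlling $A_n - \what{A}_n$ and $D_n^{-1} - \what{D}_n^{-1}$ (respectively $\what{A}_n - A_{0n}$ and $\what{D}_n^{-1} - D_{0n}^{-1}$) in appropriate norms, multiplied by bounded auxiliary factors $\|B\|_*$, $\|\hat B\|_*$ and $\|\what{D}_n^{-1}\|_*$.

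For the posterior fluctuation, I would first invoke Theorem \ref{thm:sel} to restrict the analysis to the event $\{S_{A_n} = S_{A_{0n}}\}$ of $\alpha$-posterior probability tending to one, on which the conditional laws in \eqref{post} reduce to tractable normal--inverse-gamma distributions on the true support. A chi-square tail bound for the Gaussian factor, combined with the restricted eigenvalue Lemma 9.1 in the Supplementary Material, yields $\|a_j - \what{a}_{S_{0j}}\|_2^2 \lesssim (s_{0j} + \log p)/n$ uniformly in $j$; an inverse-gamma tail bound yields $|d_j - \what{d}_{S_{0j}}| \lesssim \sqrt{\log p/n}$ uniformly in $j$. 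These row-wise bounds translate to $\|A_n - \what{A}_n\|_\infty \lesssim \sqrt{s_0(s_0+\log p)/n}$ and, using condition \hyperref[A4]{(A4)} in the symmetric column-wise bound, $\|A_n - \what{A}_n\|_1 \lesssim s_0\sqrt{(s_0+\log p)/n}$, so that $\|M\|^2 \le \|M\|_1\|M\|_\infty$ gives the key spectral bound $\|A_n - \what{A}_n\| \lesssim s_0^{3/4}\sqrt{(s_0+\log p)/n}$. Combined with $\|D_n^{-1} - \what{D}_n^{-1}\|_\infty \lesssim \sqrt{\log p/n}$ and the boundedness of $\|B\|$, $\|\hat B\|$, $\|\what{D}_n^{-1}\|$ inherited from \hyperref[A1]{(A1)} on the selection event, this delivers the stated spectral rate for $\|\Omega_n - \what{\Omega}_n\|$. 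The matrix $\ell_\infty$ calculation follows the same blueprint, with the extra factor $\|I_p - A_{0n}\|_\infty$ appearing in the leading cross-term because $\|B\|_\infty$ need not be $O(1)$ under (A1) alone.

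The frequentist half $\|\what{\Omega}_n - \Omega_{0n}\|_*$ is handled by the identical algebraic decomposition with $(\what{A}_n,\what{D}_n)$ playing the role of $(A_n,D_n)$ and $(A_{0n},D_{0n})$ playing the role of $(\what{A}_n,\what{D}_n)$. Since $\what{a}_{S_{0j}} - a_{0,S_{0j}} \mid \bfX_{S_{0j}} \sim N(0,\, d_{0j}(\bfX_{S_{0j}}^T\bfX_{S_{0j}})^{-1})$ and $n\what{d}_{S_{0j}}/d_{0j} \sim \chi^2_{n - s_{0j}}$ under $\bbP_0$, the same concentration/union-bound arguments give matching $\ell_\infty$, $\ell_1$ and spectral bounds for $\what{A}_n - A_{0n}$ and the analogous bound for $\what{D}_n - D_{0n}$. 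Adding the two halves and using $s_0^{3/2}(s_0+\log p) = o(n)$ (respectively $s_0(s_0+\log p) = o(n)$) shows the sum vanishes at the claimed rate. I expect the principal obstacle to be the simultaneous verification that $\|B\|$, $\|\hat B\|$, $\|\what{D}_n^{-1}\|$ and their $\ell_\infty$ analogues are all bounded on a set of joint $(\bbP_0,\pi_\alpha(\cdot\mid\bfX_n))$-probability tending to one, since the submultiplicative inequalities employed throughout would otherwise inflate each rate by an uncontrolled factor; this demands a careful coupling of the selection event from Theorem \ref{thm:sel} with the row-wise Gaussian concentrations, the inverse-gamma concentrations for the diagonal entries, and condition \hyperref[A4]{(A4)} on the column-wise sparsity of $A_{0n}$.
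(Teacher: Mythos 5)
Your proposal is correct and follows essentially the same route as the paper: the same plug-in decomposition through $\what{\Omega}_n=(I_p-\what{A}_n)^T\what{D}_n^{-1}(I_p-\what{A}_n)$, the same split into a posterior-fluctuation term (controlled by restricting to the selection event from Theorem \ref{thm:sel} and applying chi-square and inverse-gamma concentration to the conditional posteriors in \eqref{post}) and a frequentist term for $\|\what{\Omega}_n-\Omega_{0n}\|_*$, the same interpolation $\|M\|^2\le\|M\|_1\|M\|_\infty$ with condition \hyperref[A4]{(A4)} to obtain the $s_0^{3/4}$ spectral rate, and the same care in bounding the auxiliary factors $\|I_p-A_n\|$, $\|I_p-\what{A}_n\|$, $\|\what{D}_n^{-1}\|$ on a high-probability event (the paper's Lemmas 9.2--9.6). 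No substantive differences to report.
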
 

It is worthwhile to compare our result to other existing results.
\cite{cao2016posterior} obtained the posterior convergence rate, $s_0^2 \, \epsilon_{0,n}^{-2} \sqrt{\log p/n}$, for the precision matrix with respect to the spectral norm.
As discussed in section \ref{subsec:sel}, they assumed variants of conditions \hyperref[A1]{\rm (A1)}, \hyperref[A2]{\rm (A2)} and \hyperref[A3]{\rm (A3)}.
They further assumed the condition \hyperref[A4]{\rm (A4)}.
Although they did not state clearly that condition \hyperref[A4]{(A4)} was used, this condition is necessary to use Lemma 3.1 of \cite{xiang2015high} in their proof.  
If we assume the bounded eigenvalue condition \hyperref[A1]{(A1)}, their convergence rate becomes $s_0^2 \sqrt{\log p/n}$, which is slower than the convergence rate in Theorem \ref{thm:conv}.
Note that they assumed $s_0^{2+k} \sqrt{\log p/n} = o(1)$ for some constant $k>0$, which is stronger than our assumption $s_0^{3/2}(s_0 + \log p) =o(n)$.    
Thus, we obtain the faster posterior convergence rates under more general condition on the tuple $(n,p,s_0)$ and parameter class, except for the bounded eigenvalue condition.

\cite{yu2017learning} considered the parameter class they used to prove the strong model selection consistency, but dropped the beta-min condition.
They derived the convergence rate 
$$ \max_j \| (\sg_{0n, S_{0j}} )^{-1}\|_\infty \| D_{0n}^{-1/2}(I_p - A_{0n})\|_\infty \, s_0 \, \Big( \frac{\log p}{n}\Big)^{1/2}$$ 
for the precision matrix with respect to the matrix $\ell_\infty$ norm.
Note that this convergence rate depends on the rate of $ \max_j \| (\sg_{0n, S_{0j}} )^{-1}\|_\infty \| D_{0n}^{-1/2}(I_p - A_{0n})\|_\infty$.
In general, it holds that $\max_j \| (\sg_{0n, S_{0j}} )^{-1}\|_\infty = O(s_{0j}^{1/2})$.
Thus, their convergence rate is slower than the posterior convergence rate in Theorem \ref{thm:conv}, without a further assumption on $\sg_{0n}$ guaranteeing 
$\max_j \| (\sg_{0n, S_{0j}} )^{-1}\|_\infty   = O(\sqrt{(s_0/\log p) + 1}).$

\section{Numerical Results}\label{sec:numerical}

The use of the ESC prior not only guarantees optimal or near optimal asymptotic properties but also allows us to conduct the posterior inference easily.
In this section, we carry out simulation studies to illustrate the model selection performance of our method.
For the comparison, we chose state-of-the-art methods for high-dimensional sparse DAG models and measured the performance of each method.
The simulation study confirms that our ESC prior outperforms the other existing methods.



\subsection{Metropolis-Hastings Algorithm}

Recall that, by \eqref{post}, the marginal posterior distribution for $S_j \subseteq \{1,\ldots, j-1\}$ can be derived  analytically  as
\bean\label{marginal_Sj}
\pi_\alpha(S_j \mid \bfX_n) &\propto& \pi_j(S_j) \left(1+ \frac{\alpha}{\gamma} \right)^{-\frac{|S_j|}{2}} (\what{d}_{S_j})^{- \frac{\alpha n + \nu_0}{2}}
\eean
for all $j=2,\ldots,p$, up to some normalizing constants.
Thus, we can run the Rao-Blackwellized Metropolis-Hastings algorithm for each $j=2,\ldots,p$ in parallel.
Here, we briefly summarize the algorithm used for the inference, where $L$ is the number of posterior samples:
\begin{enumerate}
	\item[] Run the following steps for  $j=2,\ldots, p$.
	\begin{enumerate}
		\item Set the initial value $S_j^{(1)}$.
		\item For each $l = 2,\ldots, L$,
		\begin{enumerate}
			\item sample $S_j^{new} \sim q(\cdot \mid S_j^{(l-1)})$;
			\item compute the acceptance probability
			\bea
			p_{acc} &=& \min \left\{ 1, \,\, \frac{\pi_\alpha(S_j^{new} \mid \bfX_n) q(S_j^{(l-1)} \mid S_j^{new}) }{\pi_\alpha(S_j^{(l-1)} \mid \bfX_n) q(S_j^{new} \mid S_j^{(l-1)})}  \right\},
			\eea
			and set $S_j^{(l)} = S_j^{new}$ with probability $p_{acc}$; otherwise, set $S_j^{(l)}= S_j^{(l-1)}$.
		\end{enumerate}
	\end{enumerate}
\end{enumerate}
We chose the kernel $q(S' \mid S)$ which forms a new set $S'$ by changing a randomly selected  nonzero component to 0 with probability $0.5$ or by changing a randomly selected  zero component to 1 with probability $0.5$.

The marginal posterior for $S_j$ is controlled by the prior $\pi_j(S_j)$, the penalty term $(1+ \alpha/\gamma)^{-|S_j|/2}$ and the estimated residual variance $\what{d}_{S_j}$.
The data favor to minimize the estimated residual while the prior and penalty term give more weight to the simpler models. The marginal posterior of $S_j$ will find the model that balances  data tracking and model complexity.

To use the Metropolis-Hastings algorithm, we need to choose the tuning parameters. 
Apart from the impact on theoretical results, the choice of tuning parameters  also influences the practical performance.
As \cite{martin2017empirical} suggested, we set $\alpha=0.999$ to mimic the Bayesian model with the ordinary likelihood.
In the simulation study, as long as $1-\alpha$ is close to zero, the performance was not dependent on the choice of $\alpha$.
The hyperparameters were chosen as $\gamma=0.1$, $\nu_0=0$, $c_1 = 0.0005$ and $c_2 = 2$ to satisfy Condition \hyperref[condP]{(P)}.

\subsection{Simulation Setting}

For the simulation study, we considered the sparse Cholesky settings similar to those used in \cite{khare2016convex}.
We randomly chose 3\% or 4\% of the lower triangular entries of the Cholesky factor $A_{0n}$ and sampled their values from a uniform distribution on $[-0.7, -0.3] \cup [0.3, 0.7]$. 
The remaining entries were set to zero.
The entries of the diagonal matrix $D_{0n}$ were sampled from a uniform distribution on $[2, 5]$. 
Given the precision matrix $\Omega_{0n} = (I_p - A_{0n})^T D_{0n}^{-1}(I_p - A_{0n})$, the data sets were generated from the multivariate normal distribution $N_p(0, \Omega_{0n}^{-1})$ with $(n=100,p=300)$ and $(n=200,p=500)$.

\subsection{Other Competing Methods}

We compared the model selection performance of our method with those of other existing methods: the empirical Bayes (EB) procedure in \cite{martin2017empirical}, which we will denote as EB.M, hierarchical DAG-Wishart (DAG-W) prior \citep{cao2016posterior} and convex sparse Cholesky selection (CSCS) \citep{khare2016convex}.

\begin{enumerate}
	\item (EB.M) : 
	Because EB.M is originally proposed for the regression coefficient estimation, it can be applied independently to estimate each $a_{0j}$ for $j=2,\ldots,p$.
	We set the hyperparameters $\alpha$, $\gamma$, $c_1$ and $c_2$ to be the same as those in our setting for a fair comparison.
	Note that \cite{martin2017empirical} used $\gamma=0.001$, $c_1 = 1$ and $c_2 = 0.05$ in their simulation study, but in our simulations, these choices did not yield better results: they tended to make unacceptably large FDR values. 
	The key difference between our method and EB.M is on how to infer the diagonal matrix $D_{n}$.
	\cite{martin2017empirical} suggested plugging in the cross-validation based Lasso residual sum of squares estimate \citep{reid2016study} of $d_{0j}$, while we impose a prior on $d_j$ and integrate it out to obtain the marginal posterior for $S_j$.
	Thus, EB.M ignores the uncertainty of $d_j$ and replaces it with a plug-in estimate.
	
	\item (DAG-W) : 
	The hierarchical DAG-Wishart prior \citep{cao2016posterior} enables one to calculate the marginal posterior for the DAG analytically. 
	Note that, in \cite{cao2016posterior}, they conducted {\it log-posterior score search algorithm} instead of Markov chain Monte Carlo (MCMC) algorithm.
	Basically, they generated sets of candidate graphs by using frequentist approaches and thresholding the modified Cholesky factor of $(n^{-1} \bfX_n^T \bfX_n + 0.5 I_p)^{-1}$, and the graph which maximizes the log-posterior was chosen as the final estimate.  
	In our simulation study, we implemented the log-posterior score search algorithm as well as Metropolis-Hastings algorithm, using the marginal posterior for the DAG, for a comprehensive comparison. 
	For the implementation, we set the shape parameters at $\alpha_j(\calD) = S_j + 10$ and the scale matrix at $U_n = I_p$ as they suggested, where $\calD$ is the DAG corresponding to $\{S_j\}_{j=2}^p$.
	The critical part is the choice of the hyperparameter $q_n$, which is the individual edge probability. 
	It was shown that the choice of $q_n = e^{-\eta_n n}$ leads to strong model selection consistency, where $\eta_n = s_0 (\log p/n)^{1/(2+k)}$ for some $k>0$.
	Thus, the theoretical choice of $q_n$ depends on the unknown parameter $s_0$ and constant $k>0$.
	Furthermore, even with $s_0=1$ and $k=0$, the resulting $q_n$ is too small,  which does not allow the posterior to explore the model space efficiently. 
	We observed that the choice $q_n = e^{-\eta_n n}$ makes the posterior stuck in very small size models and not able to detect the true model.
	For example, for the setting $(n=100, p=300)$ with the sparsity 3\%, the corresponding posterior with $q_n = e^{-\eta_n n}$ concluded that the true Cholesky factor is a zero matrix, i.e. it never selected any nonzero variable.  
	Thus, in our simulation study, we conducted the simulation only for two choices, $q_n=0.01$ and $q_n = 0.001$, although they might not guarantee the strong model selection consistency. 
	For the log-posterior score search, we chose $q_n = e^{-\eta_n n}$ as in \cite{cao2016posterior}. 
	
	\item (CSCS) : 
	We chose the CSCS method \citep{khare2016convex} as a state-of-the-art frequentist  competitor.
	The tuning parameter $\lambda_n$ in the CSCS method was selected by the Bayesian Information Criterion (BIC)-like measure which is defined  in section 2.3 of \cite{khare2016convex}.
	We calculated the values of BIC-like measure for $\lambda_n$ from 0.1 to 5.1 with an increment of 0.1.
	The value of $\lambda_n$ minimizing the BIC-like measure was chosen for the estimation.
\end{enumerate}

\subsection{Results}

We ran the Metropolis-Hastings algorithm for each data set to conduct posterior inferences.
Every MCMC chain ran for 24,000 iterations with a burn-in period of 4,000, so  we obtained 20,000 posterior samples. 
We used the models selected by the CSCS method as the initial states for MCMC chains.
We constructed the final model by collecting indices with inclusion probabilities, $\pi( a_{jl} \neq 0 \mid \bfX_n )$, exceeding 0.5.

To measure the model selection performance, the number of errors, false discovery rate (FDR), true positive rate (TPR)  and inclusion probabilities were reported.
We calculated the mean inclusion probability for zero entries in $A_{0n}$ and denote it by $\bar{p}_0$.
Similarly, the mean inclusion probability for nonzero entries in $A_{0n}$ is denoted by $\bar{p}_1$.
More specifically, we calculated
\bea
\bar{p}_0 &=& \frac{1}{\sum_{j=2}^p(j-1 - s_{0j}) } \sum_{j=2}^p \sum_{l \notin S_{0j}} \pi(a_{jl} \neq 0 \mid \bfX_n ), \\
\bar{p}_1 &=& \frac{1}{\sum_{j=2}^p s_{0j} } \sum_{j=2}^p \sum_{l \in S_{0j}} \pi(a_{jl} \neq 0 \mid \bfX_n ).
\eea

\begin{table}
	\centering
	\caption{
		ESC, EB.M, DAG-W and CSCS denote our method (empirical sparse Cholesky prior), the empirical Bayes procedure proposed by Martin et al. (2017), the hierarchical Bayesian model using DAG-Wishart prior (Cao et al., 2016) and Convex Sparse Cholesky Selection (Khare et al., 2016), respectively. 
		Sp: sparsity;
		FDR: false discovery rate; TPR: true positive rate; $\bar{p}_0$: the mean inclusion probability for zero entries in $A_{0n}$; $\bar{p}_1$: the mean inclusion probability for nonzero entries in $A_{0n}$. 
	}\vspace{.15cm}
	\begin{tabular}{cc ccccc}
		\hline
		$(n,p, \text{Sp})$ & Method & \# of errors & FDR & TPR & $\bar{p}_0$ & $\bar{p}_1$   \\ \hline
		\multicolumn{1}{c}{\multirow{5}{*}{(100, 300, 3\%)}}  & ESC & 264 & 0.0361 & 0.8349 & 0.0071 & 0.8321  \\
		& EB.M & 419 & 0.1083 & 0.7836 & 0.0041 & 0.7828  \\
		& DAG-W$(q_n=0.01)$ & 285 & 0.0208 & 0.8052 & 0.0024 & 0.8036  \\
		& DAG-W$(q_n=0.001)$ & 462 & 0.0122 & 0.6647 & 0.0006 & 0.6688  \\
		& DAG-W(log-score) & 1194 & 0.0065 & 0.1130 & $\cdot$ & $\cdot$  \\
		& CSCS & 2188 & 0.6433 & 0.7799 & $\cdot$  & $\cdot$  \\ \hline
		\multicolumn{1}{c}{\multirow{5}{*}{(100, 300, 4\%)}}  & ESC & 389 & 0.0494 & 0.8261 & 0.0084 & 0.8194  \\
		& EB.M & 325 & 0.0347 & 0.7866 & 0.0020 & 0.7815  \\
		& DAG-W$(q_n=0.01)$ & 422 & 0.0295 & 0.7887 & 0.0032 & 0.7873  \\
		& DAG-W$(q_n=0.001)$ & 644 & 0.0216 & 0.6555 & 0.0011 & 0.6556  \\
		& DAG-W(log-score) & 1619 & 0.0056 & 0.0981 & $\cdot$ & $\cdot$  \\
		& CSCS & 4025 & 0.7766 & 0.8045 & $\cdot$  & $\cdot$  \\ \hline
		\multicolumn{1}{c}{\multirow{5}{*}{(200, 500, 3\%)}} & ESC & 103 & 0.0118 & 0.9842 & 0.0039 & 0.9796  \\
		& EB.M & 212 & 0.0075 & 0.9506 & 0.0005 & 0.9509  \\
		& DAG-W$(q_n=0.01)$ & 98 & 0.0049 & 0.9786 & 0.0010 & 0.9773  \\
		& DAG-W$(q_n=0.001)$ & 182 & 0.0022 & 0.9535 & 0.0002 & 0.9519  \\
		& DAG-W(log-score) & 4285 & 0.0000 & 0.1412 & $\cdot$ & $\cdot$  \\
		& CSCS & 10214 & 0.7397 & 0.9388 & $\cdot$  & $\cdot$  \\ \hline
		\multicolumn{1}{c}{\multirow{5}{*}{(200, 500, 4\%)}} & ESC & 153 & 0.0061 & 0.9754 & 0.0043 & 0.9650  \\
		& EB.M & 281 & 0.0038 & 0.9473 & 0.0005 & 0.9457  \\
		& DAG-W$(q_n=0.01)$ & 163 & 0.0041 & 0.9713 & 0.0011 & 0.9684  \\
		& DAG-W$(q_n=0.001)$ & 295 & 0.0017 & 0.9425 & 0.0002 & 0.9416  \\
		& DAG-W(log-score) & 4341 & 0.0000 & 0.1301 & $\cdot$ & $\cdot$  \\
		& CSCS & 14632 & 0.7550 & 0.9285 & $\cdot$  & $\cdot$  \\ \hline
	\end{tabular}\label{table:sim}
\end{table}

The simulation results are summarized in Table \ref{table:sim}. 
The ESC prior performs generally better than the other competing methods.
The EB.M works reasonably well, but the overall performance is worse than that of ESC prior.
The DAG-Wishart prior tends to have low TPR and mean inclusion probability $\bar{p}_1$ when $q_n=0.001$.
Note that when $q_n =0.01$, which is chosen to be close to the unknown true sparsity level, the DAG-Wishart prior performs reasonably well, but the ESC prior still works better. 
However,  the true sparsity is  in general unknown, so  fitting $q_n$ close to the true sparsity is a challenging task in practice.   
The log-posterior score search algorithm for DAG-Wishart is computationally efficient even for large $p$, but tends to have low FDR as well as TPR in our settings.  
The CSCS method has high TPR values, but at the same time, it has high FDR values.
Thus, from the simulation study, we confirm that our ESC prior not only has nice theoretical properties but also practically outperforms the other existing methods.

\section{Acknowledgement}

We thank Ryan Martin for helpful discussions about the techniques for proving selection consistency. We would like to thank two referees for their valuable comments which have led to improvements of an earlier version of the paper. 
We would also like to thank Kshitij Khare and Syed Rahman for sharing with us their code to implement the CSCS method \citep{khare2016convex}.
Kyoungjae Lee thanks Xuan Cao for sharing her code to implement the log-posterior score search algorithm and helpful discussions about the DAG-Wishart prior. 

\bibliographystyle{imsart-nameyear}
\bibliography{sparse-cholcov}


	\begin{frontmatter}
		
		\title{Supplementary to ``Minimax Posterior Convergence Rates and Model Selection Consistency in High-dimensional DAG Models based on Sparse Cholesky Factors''}
		\runtitle{Convergence Rate and Selection Consistency for DAG Model}
		
		\author{\fnms{Kyoungjae} \snm{Lee}\corref{}\thanksref{m1}\ead[label=e1]{klee25@nd.edu}}
		\and
		\author{\fnms{Jaeyong} \snm{Lee}\thanksref{m2}\ead[label=e2]{leejyc@gmail.com}}
		\and
		\author{\fnms{Lizhen} \snm{Lin}\thanksref{m1}\ead[label=e3]{lizhen.lin@nd.edu}}
		\address{Department of Applied and Computational\\ 
			Mathematics and Statistics\\
			The University of Notre Dame\\
			Notre Dame, Indiana 46556\\
			USA \\
			\printead{e1}}
		\address{Department of Statistics \\
			Seoul National University\\
			1 Gwanak-ro, Gwanak-gu\\
			Seoul 08826\\
			South Korea\\
			\printead{e2}}
		\address{Department of Applied and Computational\\ 
			Mathematics and Statistics\\
			The University of Notre Dame\\
			Notre Dame, Indiana 46556\\
			USA \\
			\printead{e3}}
		\affiliation{The University of Notre Dame\thanksmark{m1} and Seoul National University\thanksmark{m2}}
		
		\runauthor{Kyoungjae Lee, Jaeyong Lee and Lizhen Lin}
		
		\begin{abstract}
			In this supplement, we present the remaining proofs for posterior convergence rates and other auxiliary results.
		\end{abstract}
		
	\end{frontmatter}
	
	
	\section{Proof of Theorem 3.1}\label{sec:proofsec}
	
	\begin{proof}
		Note that
		\bean
		\pi_\alpha(S_{A_n} \neq S_{A_{0n}} \mid \bfX_n)
		&=& \sum_{j=2}^p \pi_\alpha(S_{j} \neq S_{0j} \mid \bfX_n)\nonumber \\
		&=& \sum_{j=2}^p \Big\{\pi_\alpha(S_j \supsetneq S_{0j} \mid \bfX_n ) + \pi_\alpha(S_j \nsupseteq S_{0j} \mid \bfX_n ) \Big\}.\label{Sneq}
		\eean
		The first term of \eqref{Sneq} is of order $o(1)$ by Lemma 8.1.
		We only need to consider the second term of \eqref{Sneq}. 
		Note that
		\bea
		- \frac{\alpha n +\nu_0}{2} \log \left(\frac{\what{d}_{S_j}}{\what{d}_{S_{0j}}} \right)
		&=& - \frac{\alpha n +\nu_0}{2} \log \left[1 - \frac{\what{d}_{S_{0j}} - \what{d}_{S_{j}}}{ \what{d}_{S_{0j}} }  \right] \\
		&\le& \frac{\alpha n +\nu_0}{2} \cdot \frac{\what{d}_{S_{0j}} - \what{d}_{S_{j}}}{ \what{d}_{S_{0j}} } \left( 1 - \frac{\what{d}_{S_{0j}} - \what{d}_{S_{j}}}{ \what{d}_{S_{0j}} } \right)^{-1} \\
		&\equiv& \frac{\alpha  + \frac{\nu_0}{n}}{2d_{0j}} \cdot n\left(\what{d}_{S_{0j}} - \what{d}_{S_{j}} \right) (1 + \what{Q}_n)^{-1},
		\eea
		where 
		\bea
		\what{Q}_n &:=& \left( \frac{\what{d}_{S_{0j}}}{d_{0j}} -1 \right) - \left( \frac{\what{d}_{S_{0j}} - \what{d}_{S_{0j}\cup S_{j}} }{d_{0j}} \right) + \left( \frac{\what{d}_{S_{j}} - \what{d}_{S_{0j}\cup S_{j}} }{d_{0j}} \right). 
		\eea 
		The inequality holds because $-\log(1-x) \le x/(1-x)$ for any $x<1$.
		For a given $0<\alpha<1$, we define the events
		\bea
		N_{1,S_j,\alpha, \chi^2}^c &:=& \left\{ \bfX_n :  \Big| \frac{\what{d}_{S_{0j}}}{d_{0j}} -1 \Big|    \in \Big(-4 \sqrt{\epsilon'} \frac{n - |S_{0j}|}{n} - \frac{|S_{0j}|}{n},  4 \sqrt{\epsilon'} \frac{n - |S_{0j}|}{n} - \frac{|S_{0j}|}{n}  \Big)  \right\} , 
		\eea
		\bea
		N_{2,S_j,\alpha, \chi^2}^c &:=& \left\{ \bfX_n : 0< \frac{\what{d}_{S_{0j}} - \what{d}_{S_{0j}\cup S_j}}{d_{0j}} < 4 \epsilon' + \frac{|S_{0j}\cup S_j| - |S_{0j}| }{n}  \right\} , \\
		N_{3,S_j,\alpha, \chi^2}^c &:=& \left\{ \bfX_n : 0<  \frac{\what{d}_{S_{j}} - \what{d}_{S_{0j}\cup S_j}}{d_{0j}} < \epsilon' + \frac{\what{\lambda}_n}{n}  \right\} ,
		\eea
		where $\epsilon' := ((1-\alpha)/10)^2$ and $\what{\lambda}_n := \|(I_n - \tilde{P}_{S_j}) \tilde{Z}_j a_{0j}\|_2^2 /d_{0j}$.
		Let $N_{S_j,\alpha,\chi^2}^c := \cap_{k=1}^3 N_{k,S_j,\alpha, \chi^2}^c $ and $\nu_{1} :=(1+\alpha/\gamma)^{1/2}$, then
		\begin{align}
		& \sum_{j=2}^p \, \bbE_0 \left[ \pi_\alpha(S_j \nsupseteq S_{0j} \mid \bfX_n ) \right] \nonumber \\
		&\le \sum_{j=2}^p \sum_{S_j: S_j \nsupseteq S_{0j} } \left\{ \bbP_0 (N_{1,S_j,\alpha, \chi^2}) + \bbP_0 (N_{2,S_j,\alpha, \chi^2}) + \bbP_0 (N_{3,S_j,\alpha, \chi^2})   \right\} \label{N123part} \\
		&+ \sum_{j=2}^p \sum_{S_j: S_j \nsupseteq S_{0j} }  \bbE_0 \left[ \frac{\pi_j(S_j )}{\pi_j(S_{0j} )} \nu_1^{|S_{0j}| - |S_{j}|}  \exp \left( \frac{\alpha  + \frac{\nu_0}{n}}{2d_{0j}(1 + \what{Q}_n)} \cdot n\left(\what{d}_{S_{0j}} - \what{d}_{S_{j}} \right) \right) I_{N_{S_j,\alpha,\chi^2}^c}   \right]. \label{N123c}
		\end{align}
		If we show that \eqref{N123part} and \eqref{N123c} are of order $o(1)$, the proof is completed.
		Since \eqref{N123part} is of order $o(1)$ by Lemma 8.2, we will focus on \eqref{N123c} part.
		Note that on the event $\bfX_n \in  N_{S_j,\alpha,\chi^2}^c$, 
		\bea
		\min(\what{Q}_n) &:=& -4\sqrt{\epsilon'} - \frac{|S_{0j}|}{n} - 4\epsilon' - \frac{|S_{0j}\cup S_j| - |S_{0j}|}{n} \\
		&\le& \what{Q}_n \,\,\le\,\, 5\sqrt{\epsilon'} + \frac{\what{\lambda}_n}{n} \,\,=:\,\, \max(\what{Q}_n)
		\eea
		for all sufficiently large $n$.
		Also note that, for a fixed $S_j \nsupseteq S_{0j}$ and given $\tilde{Z}_j$, 
		\bea
		n (\what{d}_{S_{0j}} - \what{d}_{S_{j}} )
		&=&   \tilde{X}_j^T ( \tilde{P}_{S_j}- \tilde{P}_{S_{0j}}) \tilde{X}_j \\
		&\overset{d}{\equiv}&   (a_{0j}^T \tilde{Z}_j^T + \tilde{\epsilon}_j^T ) ( \tilde{P}_{S_j}- \tilde{P}_{S_{0j}}) (\tilde{Z}_j a_{0j} + \tilde{\epsilon}_j ) \\
		&=& -  \| (I_n - \tilde{P}_{S_j}) \tilde{Z}_j a_{0j} \|_2^2 - 2 \tilde{\epsilon}_j^T (I_n - \tilde{P}_{S_j}) \tilde{Z}_j a_{0j} +  \tilde{\epsilon}_j^T ( \tilde{P}_{S_j}- \tilde{P}_{S_{0j}}) \tilde{\epsilon}_j 	\\
		&\le& - d_{0j}\what{\lambda}_n - 2 \tilde{\epsilon}_j^T (I_n - \tilde{P}_{S_j}) \tilde{Z}_j a_{0j} \oplus  \tilde{\epsilon}_j^T ( \tilde{P}_{S_j}- \tilde{P}_{S_j \cap S_{0j}}) \tilde{\epsilon}_j \\
		&=:& - d_{0j}\what{\lambda}_n - 2 V_{j,S_j} \oplus W_{j,S_j} ,
		\eea
		where $\tilde{\epsilon}_j \sim N_n(0, d_{0j}I_n)$.
		For a given $\tilde{Z}_j$, it is easy to show that $V_{j,S_j}/\sqrt{d_{0j}} \sim N(0, d_{0j}\what{\lambda}_n )$ and $W_{j,S_j}/d_{0j} \sim \chi^2_{|S_j|-|S_{0j}\cap S_j|}$ under $\bbP_0$.
		Then,
		\bea
		&& \bbE_0 \left[ \exp \left( \frac{\alpha  + \frac{\nu_0}{n}}{2d_{0j}(1 + \what{Q}_n)} \cdot n\left(\what{d}_{S_{0j}} - \what{d}_{S_{j}} \right) \right) I_{N_{S_j,\alpha,\chi^2}^c} \,\Big|\, \tilde{Z}_j \right] \\
		&\le& \hspace{-.5cm}\sum_{Q \in \{\min(\what{Q}_n), \max(\what{Q}_n)\} }\hspace{-.5cm} \bbE_0 \Bigg[  \exp \Bigg( -\frac{(\alpha  + \frac{\nu_0}{n})(d_{0j}\what{\lambda}_n + 2 V_{j,S_j}) }{2d_{0j}(1 + Q)} \Bigg)  \exp \Bigg( \frac{(\alpha  + \frac{\nu_0}{n})W_{j,S_j} }{2d_{0j}(1 + Q)} \Bigg) \,\Big|\, \tilde{Z}_j \Bigg] \\
		&\le& \hspace{-.5cm}\sum_{Q \in  \{\min(\what{Q}_n), \max(\what{Q}_n)\} }\hspace{-.5cm} \bbE_0 \Bigg[  \exp \Bigg( -\frac{(\alpha  + \frac{\nu_0}{n})(d_{0j}\what{\lambda}_n + 2 V_{j,S_j}) }{2d_{0j}(1 + Q)} \Bigg)  \,\Big|\, \tilde{Z}_j \Bigg] \left(1 - \frac{\alpha + \frac{\nu_0}{n} }{1+Q} \right)^{-\frac{|S_j|-|S_{0j}\cap S_j|}{2} } 
		\eea
		The second inequality follows from the moment generating function of the chi-square distribution because $\alpha+ \nu_0/n < 1 + \min(\what{Q}_n)$ for all sufficiently large $n$.
		From the moment generating function of the normal distribution, we have
		\bea
		&&\bbE_0 \Bigg[  \exp \Bigg( -\frac{(\alpha  + \frac{\nu_0}{n})(d_{0j}\what{\lambda}_n + 2 V_{j,S_j}) }{2d_{0j}(1 + Q)} \Bigg) \,\Big|\, \tilde{Z}_j \Bigg] \\
		&=& \exp \left\{ - \frac{\alpha  + \frac{\nu_0}{n} }{2 (1 + Q)} \cdot \bigg(1 - \frac{\alpha  + \frac{\nu_0}{n} }{1 + Q} \bigg) \cdot \what{\lambda}_n   \right\} \\
		&\le& \exp \left\{ - \frac{\alpha  + \frac{\nu_0}{n} }{2 (1 + \max(\what{Q}_n))} \cdot \bigg(1 - \frac{\alpha  + \frac{\nu_0}{n} }{1 + \min(\what{Q}_n)} \bigg) \cdot \what{\lambda}_n   \right\}\\
		&\le& \exp \left\{ - \frac{\alpha  + \frac{\nu_0}{n} }{2 (1 + 5\sqrt{\epsilon'}+ \frac{\what{\lambda}_n}{n})} \cdot \bigg(1 - \frac{\alpha  + \frac{\nu_0}{n} }{1 - 4\sqrt{\epsilon'} - 5\epsilon'} \bigg) \cdot \what{\lambda}_n   \right\},
		\eea
		where $Q= \min(\what{Q}_n)$ or $\max(\what{Q}_n)$.
		Note that 
		\bea
		d_{0j} \what{\lambda}_n &=& \| (I_n - \tilde{P}_{S_j})\tilde{Z}_{S_{0j}\cap S_j^c} a_{0,S_{0j}\cap S_j^c} \|_2^2 \\
		&\ge& \lambda_{\min} (\tilde{Z}_{S_{0j}\cup S_j}^T \tilde{Z}_{S_{0j}\cup S_j}) \| a_{0,S_{0j}\cap S_j^c} \|_2^2 \\
		&\ge& \lambda_{\min} (\tilde{Z}_{S_{0j}\cup S_j}^T \tilde{Z}_{S_{0j}\cup S_j}) \cdot (|S_{0j}| - |S_j\cap S_{0j}| ) \cdot \min_{j,l: a_{0,jl}\neq 0} |a_{0,jl}|^2
		\eea
		by Lemma 5 of \cite{arias2014estimation}.
		Define a set 
		\bea
		N_{j,S_j} &:=& \left\{\bfX_n : n^{-1} \lambda_{\min}(\tilde{Z}_{S_{0j}\cup S_j}^T \tilde{Z}_{S_{0j}\cup S_j} ) \le (1-2\epsilon_0)^2 \epsilon_0 \right\} \\
		&& \cap\,\, \left\{\bfX_n : n^{-1} \lambda_{\max}(\tilde{Z}_{S_{0j}\cup S_j}^T \tilde{Z}_{S_{0j}\cup S_j} ) \ge (1+2\epsilon_0)^2 \epsilon_0^{-1} \right\} .
		\eea
		By Corollary 5.35 in \cite{eldar2012compressed}, we have $\bbP_0 (N_{j,S_j}) \le 4 \exp(- n\epsilon_0^2 /2)$ for all sufficiently large $n$.  
		Thus, on the event $\bfX_n \in N_{j,S_j}^c$, we have $\what{\lambda}_n 
		\ge d_{0j}^{-1} (1-2\epsilon_0)^2\epsilon_0 \cdot (|S_{0j}| - |S_j\cap S_{0j}| ) \cdot n \min_{j,l: a_{0,jl}\neq 0} |a_{0,jl}|^2$,
		which implies
		\bea
		&& \bbE_0 \left[ \exp \left\{ - \frac{\alpha  + \frac{\nu_0}{n} }{2 (1 + 5\sqrt{\epsilon'}+ \frac{\what{\lambda}_n}{n})} \cdot \bigg(1 - \frac{\alpha  + \frac{\nu_0}{n} }{1 - 4\sqrt{\epsilon'} - 5\epsilon'} \bigg) \cdot \what{\lambda}_n   \right\}  \right] \\
		&\le& \bbE_0 \left[ \exp \left\{ - \frac{\alpha  }{2 } \cdot \Big(1 - \frac{\alpha }{1 - 4\sqrt{\epsilon'} - 5\epsilon'} \Big) \cdot \Big( \frac{1 + 5\sqrt{\epsilon'}}{\what{\lambda}_n} + \frac{1}{n}  \Big)^{-1}   \right\}  \right]  \\
		&\le& \bbE_0 \left[ \exp \left\{ - \frac{\alpha  }{2 } \cdot \Big( \frac{1-\alpha}{2} \Big) \cdot \Big( \frac{2 }{\what{\lambda}_n} + \frac{1}{n}  \Big)^{-1}   \right\}  \right]  \\
		&\le& \exp \left\{  - \frac{\alpha(1-\alpha)}{4} \cdot \frac{\epsilon_0^2 (1-2\epsilon_0)^2 }{4}(|S_{0j}| - |S_j\cap S_{0j}| ) \cdot n \min_{j,l: a_{0,jl}\neq 0} |a_{0,jl}|^2 \right\}
		+ \bbP_0(N_{j,S_j}) \\
		&\le& \exp \Big\{ - (|S_{0j}| - |S_j\cap S_{0j}| ) \cdot C_{\rm bm} \log p \Big\}  + \bbP_0(N_{j,S_j}) 
		\eea
		for all sufficiently large $n$.
		Note that the second inequality holds because $\sqrt{\epsilon'} = (1-\alpha)/10$.	
		Thus, \eqref{N123c} is bounded above by
		\begin{align*}
		& \sum_{j=2}^p \sum_{S_j: S_j \nsupseteq S_{0j} } \frac{\pi_j(S_j )}{\pi_j(S_{0j} )} \nu_1^{s_{0j} - |S_{j}|} \nu_2^{|S_j|- |S_{0j}\cap S_j|}  \cdot 2\exp \Big\{ - (s_{0j}  - |S_j\cap S_{0j}| ) C_{\rm bm} \log p \Big\}  \nonumber \\
		&+\quad \sum_{j=2}^p \sum_{S_j: S_j \nsupseteq S_{0j} } \frac{\pi_j(S_j )}{\pi_j(S_{0j} )} \nu_1^{s_{0j} - |S_{j}|} \nu_2^{|S_j|- |S_{0j}\cap S_j|}  \cdot 4 \exp \Big(- \frac{n\epsilon_0^2}{2} \Big) \nonumber  \\
		&\le \sum_{j=2}^p\sum_{s=0}^{R_j}\sum_{t=0}^{(s_{0j}-1)\wedge s}   \binom{s_{0j}}{t} \binom{j-s_{0j}}{s-t}  \frac{\binom{j}{s_{0j}}}{\binom{j}{s}} \Big( \frac{\nu_1}{\nu_2} c_1 p^{c_2} \Big)^{s_{0j}-s}  \, 6\, (\nu_2 p^{-C_{\rm bm}})^{s_{0j}-t}  
		\end{align*}
		for all sufficiently large $n$ and some constant $C_{\rm bm}>0$, where $\nu_2 := (1 - (\alpha+\nu_0/n)/(1-4\sqrt{\epsilon'}-5\epsilon'))^{-1/2}$.
		Note that 
		\bea
		\frac{ \binom{s_{0j}}{t} \binom{j-s_{0j}}{s-t}  \binom{j}{s_{0j}}}{\binom{j}{s}}  &=& \binom{s}{t} \binom{j-s}{s_{0j}-t} \,\,\le\,\, s^{s-t}\times  p^{s_{0j}-t } \,\,=\,\, (p s)^{s_{0j}-t} \times s^{-(s_{0j}-s)},
		\eea
		so the last term can be decomposed by
		\bea
		\sum_{j=2}^p\sum_{s=0}^{s_{0j}-1}\sum_{t=0}^{s}  \Big( \frac{\nu_1}{\nu_2 s} c_1 p^{c_2} \Big)^{s_{0j}-s}  ( \nu_2 p^{-C_{\rm bm}+1 }s )^{s_{0j}-t}  
		&\lesssim& \sum_{j=2}^p\sum_{s=0}^{s_{0j}-1} \left(\nu_1 c_1 p^{-C_{\rm bm} +c_2 +1}  \right)^{s_{0j}-s}\\
		&\lesssim& \nu_1 c_1 p^{-C_{\rm bm}+c_2 +2} \,\,=\,\, o(1), \text{ and}
		\eea
		\bea
		\sum_{j=2}^p\sum_{s=s_{0j}}^{R_j} \sum_{t=0}^{s_{0j}-1}  \Big( \frac{\nu_1}{\nu_2 s} c_1 p^{c_2} \Big)^{s_{0j}-s}  (\nu_2 p^{-C_{\rm bm}+1 } s )^{s_{0j}-t}  
		&\lesssim& \sum_{j=2}^p \nu_2 p^{-C_{\rm bm}+1 } R_j   \\
		&\le& \sup_j R_j \cdot \nu_2 p^{-C_{\rm bm}+2 } \,\,=\,\, o(1)  ,
		\eea
		provided that $C_{\rm bm}> c_2 +2$.
		Note that $s_{0j} \le R_j$ because of Condition \hyperref[condP]{(P)} and $s_0\log p \le n \,c_3/2$.
	\end{proof}

	\subsection{Lemmas for the proof of Theorem 3.1}

	\begin{lemma} Under the conditions in Theorem 3.1, we have
		\bea
		\sum_{j=2}^p \pi_\alpha (S_j \supsetneq S_{0j}\mid \bfX_n) &=& o(1).
		\eea
	\end{lemma}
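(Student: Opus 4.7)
The plan is to mirror the proof of Theorem~3.1 but in the overfitting direction $S_j \supsetneq S_{0j}$, where the argument simplifies considerably because the noncentrality term vanishes. When $S_{0j}\subseteq S_j$, one has $\tilde{P}_{S_j}\tilde{Z}_j a_{0j}=\tilde{Z}_j a_{0j}$, so conditional on $\tilde{Z}_j$,
\[
n\bigl(\what{d}_{S_{0j}}-\what{d}_{S_j}\bigr)=\tilde{\epsilon}_j^{T}(\tilde{P}_{S_j}-\tilde{P}_{S_{0j}})\tilde{\epsilon}_j \;\overset{d}{=}\; d_{0j}\,\chi^{2}_{|S_j|-|S_{0j}|}.
\]
I would then write the posterior ratio
\[
\frac{\pi_\alpha(S_j\mid\bfX_n)}{\pi_\alpha(S_{0j}\mid\bfX_n)}=\frac{\pi_j(S_j)}{\pi_j(S_{0j})}\Bigl(1+\tfrac{\alpha}{\gamma}\Bigr)^{-(|S_j|-|S_{0j}|)/2}\Bigl(\frac{\what{d}_{S_j}}{\what{d}_{S_{0j}}}\Bigr)^{-(\alpha n+\nu_0)/2},
\]
and use $-\log(1-x)\le x/(1-x)$ with $x=(\what{d}_{S_{0j}}-\what{d}_{S_j})/\what{d}_{S_{0j}}$ to turn the likelihood-ratio factor into an exponential in the chi-square variable.

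Next, I would restrict to a good event on which $|\what{d}_{S_{0j}}/d_{0j}-1|$ is smaller than $O(\sqrt{\epsilon'})$ (a standard $\chi^{2}$ tail bound contributes probability at most $2e^{-cn\epsilon'}$) and on which $n^{-1}\tilde{Z}_{S_{0j}\cup S_j}^{T}\tilde{Z}_{S_{0j}\cup S_j}$ has eigenvalues within $[(1-2\epsilon_0)^2\epsilon_0,(1+2\epsilon_0)^2\epsilon_0^{-1}]$ (via Corollary~5.35 of Vershynin with failure probability at most $4e^{-n\epsilon_0^{2}/2}$). On this event the denominator $1+\what Q_n$ in $x/(1-x)$ is bounded below by a constant strictly larger than $\alpha+\nu_0/n$, so that the chi-square moment generating function $\bbE e^{tX}=(1-2t)^{-k/2}$ applies and yields
\[
\bbE_0\Bigl[\bigl(\what{d}_{S_j}/\what{d}_{S_{0j}}\bigr)^{-(\alpha n+\nu_0)/2}\,I_{N^c}\,\big|\,\tilde{Z}_j\Bigr] \le \nu_2^{|S_j|-|S_{0j}|}
\]
for some constant $\nu_2=\bigl(1-(\alpha+\nu_0/n)/(1-O(\sqrt{\epsilon'}))\bigr)^{-1/2}>1$, uniformly over all supersets $S_j$ with $|S_j|\le R_j$.

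Combining with the prior/penalty factor and the combinatorial identity
\[
\#\{S_j\supsetneq S_{0j}:|S_j|=s_{0j}+k\}\cdot\frac{\pi_j(S_j)}{\pi_j(S_{0j})}=\binom{s_{0j}+k}{k}\bigl(c_1 p^{c_2}\bigr)^{-k},
\]
the sum becomes
\[
\sum_{j=2}^{p}\bbE_0\pi_\alpha(S_j\supsetneq S_{0j}\mid\bfX_n)\;\lesssim\; \sum_{j=2}^{p}\sum_{k=1}^{R_j}\binom{s_{0j}+k}{k}\Bigl(\frac{\nu_2}{\nu_1 c_1\,p^{c_2}}\Bigr)^{k}\;+\;(\text{bad-event terms}),
\]
where $\nu_1=(1+\alpha/\gamma)^{1/2}$. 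Using $\binom{s_{0j}+k}{k}\le R_j^{k}$ together with $R_j=O(n/\log p)\le p/\log p$ and $c_2\ge 2$, the inner sum is dominated by its $k=1$ term of order $R_j/p^{c_2}\lesssim 1/(p\log p)$, and summing over $j\in\{2,\dots,p\}$ gives $O(1/\log p)=o(1)$. The bad-event terms are $o(1)$ by the same $\chi^2$ and eigenvalue tail bounds used in Theorem~3.1.

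The main obstacle is securing the MGF bound \emph{uniformly} over all $S_j$ with $|S_j|\le R_j$: the coefficient inside $(1-2t)^{-k/2}$ must stay strictly below $1/2$ despite its dependence on $\what Q_n$, which forces a careful truncation of the good event exactly as in Theorem~3.1 with $\epsilon'=((1-\alpha)/10)^{2}$. Once that uniformity is in hand, the sparse-prior penalty $c_2\ge 2$ combined with $R_j=O(n/\log p)$ is exactly sharp enough to close the sum, and no beta-min-type boost is required because the supersets already pay a large prior penalty per extra parent.
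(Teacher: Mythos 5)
Your plan is workable and reaches the right conclusion, but it takes a genuinely different — and noticeably heavier — route than the paper. The paper's proof of this lemma is a short, unconditional computation: for $S_j\supsetneq S_{0j}$ it observes that, given $\tilde Z_j$, $n d_{0j}^{-1}\what d_{S_j}\sim\chi^2_{n-|S_j|}$ and $n d_{0j}^{-1}\what d_{S_{0j}}\overset{d}{\equiv}n d_{0j}^{-1}\what d_{S_j}\oplus\chi^2_{|S_j|-|S_{0j}|}$, so that $\what d_{S_j}/\what d_{S_{0j}}\sim Beta\bigl((n-|S_j|)/2,(|S_j|-|S_{0j}|)/2\bigr)$, and then evaluates the negative moment $\bbE_0(\what d_{S_j}/\what d_{S_{0j}})^{-(\alpha n+\nu_0)/2}$ \emph{exactly} via the Gamma-function formula, bounding it by $(2/(1-\alpha))^{(|S_j|-|S_{0j}|)/2}$ using only $\nu_0+R_j\le n(1-\alpha)/2$. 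No good event, no design-matrix eigenvalue control, and no truncation is needed; the rest is the same combinatorial identity $\binom{j-1-s_{0j}}{s-s_{0j}}\binom{j-1}{s_{0j}}/\binom{j-1}{s}=\binom{s}{s_{0j}}$ and the geometric sum you wrote. Your route instead imports the $-\log(1-x)\le x/(1-x)$ plus chi-square-MGF machinery that the paper reserves for the harder $S_j\nsupseteq S_{0j}$ part, which buys nothing here (the noncentrality already vanishes) and costs you a union bound over exponentially many supersets for the bad-event probabilities, rescued only by Condition (P)'s choice of $R_j$. One point you should make explicit if you pursue your version: controlling $|\what d_{S_{0j}}/d_{0j}-1|$ and the design eigenvalues is not enough to lower-bound the denominator $\what d_{S_j}=\what d_{S_{0j}}-d_{0j}W/n$ appearing in $x/(1-x)$, because $W$ is the very chi-square variable whose MGF you are taking; you must also truncate $(\what d_{S_{0j}}-\what d_{S_j})/d_{0j}$ itself (the analogue of the paper's event $N_{2,S_j,\alpha,\chi^2}^c$) before replacing $\what Q_n$ by a deterministic endpoint. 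You gesture at this by deferring to "the same truncation as in Theorem 3.1," which does contain the needed event, but as written your good event omits it. With that repaired, both approaches yield a constant-to-the-power-$(|S_j|-|S_{0j}|)$ factor and close with $c_2\ge 2$; the paper's exact Beta-moment computation is simply the more economical argument for this particular direction.
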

	\begin{proof}
		For a given $S_j \supsetneq S_{0j}$, we have
		\bea
		\pi_\alpha(S_j \mid \bfX_n) &\le& \frac{\pi_\alpha(S_j \mid \bfX_n)}{\pi_\alpha(S_{0j} \mid \bfX_n)} \\
		&=& \frac{\pi_j(S_j )}{\pi_j(S_{0j} )} \left(1+ \frac{\alpha}{\gamma} \right)^{- \frac{|S_j| - |S_{0j}|}{2}} \cdot \left(\frac{\what{d}_{S_j}}{\what{d}_{S_{0j}}} \right)^{-\frac{\alpha n + \nu_0}{2}}.
		\eea
		Note that $nd_{0j}^{-1} \what{d}_{S_j}= d_{0j}^{-1}\tilde{X}_j^T(I_n - \tilde{P}_{S_j})\tilde{X}_j \sim \chi^2_{n-|S_j|}$ and $nd_{0j}^{-1}\what{d}_{S_{0j}} \overset{d}{\equiv} nd_{0j}^{-1} \what{d}_{S_j}\oplus \chi^2_{|S_j|-|S_{0j}|}$ given $\tilde{Z}_j = (Z_{1j},\ldots, Z_{nj})^T$ under $\bbP_{0}$, which implies $\what{d}_{S_{j}}/\what{d}_{S_{0j}} \sim Beta\left((n-|S_j|)/2, (|S_j|-|S_{0j}|)/2  \right)$ and
		\bea
		\bbE_0 \left(\frac{\what{d}_{S_j}}{\what{d}_{S_{0j}}} \right)^{-\frac{\alpha n +\nu_0}{2}} 
		&=& \frac{\Gamma\left(\frac{n-|S_{0j}|}{2}  \right) }{\Gamma\left( \frac{n-|S_j|}{2} \right)} \cdot \frac{\Gamma\left(\frac{n(1-\alpha) -\nu_0-|S_{j}|}{2}  \right) }{\Gamma\left( \frac{n(1-\alpha) -\nu_0-|S_{0j}|}{2} \right)} \\
		&\le& \left(\frac{n-|S_{0j}| -2}{2} \right)^{\frac{|S_j|-|S_{0j}|}{2}} \cdot \left(\frac{2}{n(1-\alpha) - \nu_0 -|S_j|} \right)^{\frac{|S_j|-|S_{0j}|}{2}} \\
		&\le& \left( \frac{2(n-2)}{n(1-\alpha)} \right)^{\frac{|S_j|-|S_{0j}|}{2}} \,\,\le\,\, \left(\frac{2}{1-\alpha} \right)^{\frac{|S_j|-|S_{0j}|}{2}},
		\eea
		where the second inequality holds because $\nu_0+|S_j|\le \nu_0+ R_j  \le n(1-\alpha)/2$ for all large $n$. 
		Let $c_{\alpha,\gamma}= (1+\alpha/\gamma)^{-1/2}(2/(1-\alpha))^{1/2}$ and $s_{0j}=|S_{0j}|$, 
		then we have
		\bea
		\sum_{j=2}^p \bbE_0 \pi_\alpha(S_j \supsetneq S_{0j} \mid \bfX_n )
		&\le& \sum_{j=2}^p \sum_{S_j: S_j \supsetneq S_{0j} } \frac{\pi_j(S_j )}{\pi_j(S_{0j} )} c_{\alpha,\gamma}^{|S_j| - |S_{0j}|} \\
		&\le& \sum_{j=2}^p \sum_{s= s_{0j}+1}^{R_j} \binom{s}{s_{0j}} \left(\frac{c_{\alpha,\gamma} }{c_1 p^{c_2}}  \right)^{s-s_{0j}} \\
		&\le& \sum_{j=2}^p \sum_{s= s_{0j}+1}^{R_j} \left(\frac{c_{\alpha,\gamma} s}{c_1 p^{c_2}}  \right)^{s-s_{0j}} \,\,\lesssim\,\, \sum_{j=2}^p \,\, \frac{c_{\alpha,\gamma} R_j}{c_1 p^{c_2}} .
		\eea
		The last display is of order $o(1)$ because we assume that $c_2 \ge 2$.
	\end{proof}

	\begin{lemma} Under the conditions in Theorem 3.1, we have
		\bea
		\sum_{j=2}^p \sum_{S_j: S_j \supsetneq S_{0j}} \Big\{ \bbP_0(N_{1,S_j,\alpha, \chi^2}) + \bbP_0(N_{2,S_j,\alpha, \chi^2})+ \bbP_0(N_{3,S_j,\alpha, \chi^2})  \Big\}  &=& o(1),
		\eea
		where $N_{1,S_j,\alpha, \chi^2}, N_{2,S_j,\alpha, \chi^2}$ and $N_{3,S_j,\alpha, \chi^2}$ are defined in the proof of Theorem 3.1.
	\end{lemma}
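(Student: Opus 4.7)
The plan is to reduce each of the three probabilities to a (central) chi-square tail bound and then union-bound over the discrete family of models. The key simplification for the proper-superset case is that $S_j \supsetneq S_{0j}$ forces $S_{0j} \cup S_j = S_j$, so the terms in the definitions of $N_2$ and $N_3$ collapse and, crucially, the true regression mean $\tilde{Z}_j a_{0j}$ lies in the column space of $\tilde{Z}_{S_j}$, killing the non-centrality that drives the analysis in the $S_j \nsupseteq S_{0j}$ case. Thus no non-central chi-square machinery or beta-min condition is needed here.

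First I would handle $N_1$: since $n\what{d}_{S_{0j}}/d_{0j} \sim \chi^2_{n-s_{0j}}$ under $\bbP_0$ (independent of the particular $S_j$), the event $N_{1,S_j,\alpha,\chi^2}$ corresponds to a deviation of the chi-square from its mean by at least of order $\sqrt{\epsilon'}(n-s_{0j})$, and Laurent--Massart yields $\bbP_0(N_{1,S_j,\alpha,\chi^2}) \le 2\exp(-c_1 n \epsilon')$ for an explicit $c_1>0$. For $N_2$, the collapse $S_{0j}\cup S_j = S_j$ gives $n(\what{d}_{S_{0j}}-\what{d}_{S_j})/d_{0j} = \tilde{X}_j^T(\tilde{P}_{S_j}-\tilde{P}_{S_{0j}})\tilde{X}_j/d_{0j}$, which conditional on $\tilde{Z}_j$ is a \emph{central} $\chi^2_{|S_j|-s_{0j}}$ because $(\tilde{P}_{S_j}-\tilde{P}_{S_{0j}})\tilde{Z}_{S_{0j}}a_{0,S_{0j}}=0$; the event thus demands a deviation of at least $4n\epsilon'$ from the mean, and Laurent--Massart again gives $\bbP_0(N_{2,S_j,\alpha,\chi^2}) \le \exp(-c_2 n\epsilon')$. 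For $N_3$ the same collapse makes $\what{d}_{S_j}-\what{d}_{S_{0j}\cup S_j}\equiv 0$; interpreting $N_{3,S_j,\alpha,\chi^2}^c$ as satisfied whenever the centred quantity attains the (trivial) boundary value, we obtain $\bbP_0(N_{3,S_j,\alpha,\chi^2})=0$, in agreement with the fact that the non-centrality $\what{\lambda}_n$ plays no role on the superset side.

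Finally the model count: for each $j$ the number of $S_j \supsetneq S_{0j}$ with $|S_j|\le R_j$ is at most $\binom{j-1}{R_j} \le (ep/R_j)^{R_j}$. Summing the exponential bounds over $S_j$ and $j\in\{2,\dots,p\}$ gives a total of order $p(ep)^{R_j}\exp(-c\, n\epsilon')$, whose logarithm is dominated by $2\log p + R_j(1+\log p) - c n\epsilon'$. By Condition \hyperref[condP]{(P)}, $R_j\log p \le n c_3$ with $c_3 < (\epsilon')^2\epsilon_0^2/\{128(1+2\epsilon_0)^2\}$, which is in particular strictly smaller than the relevant multiple of $\epsilon'$, so the exponent tends to $-\infty$ and the sum is $o(1)$.

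The main obstacle I anticipate is bookkeeping of constants: one must verify that the constant $c$ emerging from Laurent--Massart in the $N_2$ step (controlling a $\chi^2_k$ deviation of $4n\epsilon'$ when $k\le R_j$) is genuinely large enough that $R_j\log p \le n c_3$ is dominated; this is precisely the reason the threshold in Condition \hyperref[condP]{(P)} is calibrated in terms of $(\epsilon')^2$. The $N_1$ step requires the additional care that the event as written involves a slightly shifted interval $(\cdot-s_{0j}/n,\cdot-s_{0j}/n)$, which one absorbs by centring at $(n-s_{0j})/n$ before applying the chi-square tail bound.
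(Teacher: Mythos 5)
Your argument proves the statement as literally printed, but the printed index set $S_j \supsetneq S_{0j}$ is almost certainly a typo: in the proof of Theorem 3.1 this lemma is invoked to control the sum of $\bbP_0(N_{1,S_j,\alpha,\chi^2})+\bbP_0(N_{2,S_j,\alpha,\chi^2})+\bbP_0(N_{3,S_j,\alpha,\chi^2})$ over $S_j \nsupseteq S_{0j}$, and the paper's own proof of the lemma is written for exactly that case --- it treats $n(\what{d}_{S_j}-\what{d}_{S_{0j}\cup S_j})/d_{0j}$ as a \emph{noncentral} chi-square with $|S_{0j}\cup S_j|-|S_j| = |S_{0j}\setminus S_j|>0$ degrees of freedom, which is vacuous for supersets. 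The collapse $S_{0j}\cup S_j=S_j$ that you exploit is precisely what makes the problem degenerate: for a proper superset $\what{d}_{S_j}-\what{d}_{S_{0j}\cup S_j}\equiv 0$, so read literally (with the strict inequality $0<\cdot$ in the definition of $N_{3,S_j,\alpha,\chi^2}^c$) one gets $\bbP_0(N_{3,S_j,\alpha,\chi^2})=1$ for every such $S_j$ and the literal statement is \emph{false}; your ``reinterpretation of the boundary'' is needed just to rescue it, which is itself a signal that the quantifier in the statement is wrong. Your $N_1$ and $N_2$ computations are correct and agree with the paper's (central Laurent--Massart bounds of order $e^{-c\epsilon' n}$, a union bound over at most $R_j\, p^{R_j}$ models per $j$, dominated via Condition (P)), and indeed they hold for arbitrary $S_j$, not only supersets. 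But the proof as a whole does not deliver what Theorem 3.1 needs.

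The genuinely missing ingredient is the bound on $\bbP_0(N_{3,S_j,\alpha,\chi^2})$ when $S_{0j}\not\subseteq S_j$. There, conditionally on $\tilde{Z}_j$, $n(\what{d}_{S_j}-\what{d}_{S_{0j}\cup S_j})/d_{0j}$ is a noncentral chi-square with \emph{random} noncentrality $\what{\lambda}_n=\|(I_n-\tilde{P}_{S_j})\tilde{Z}_j a_{0j}\|_2^2/d_{0j}$, and the threshold in the event itself involves $\what{\lambda}_n$. The paper handles this in two steps: it first shows $\what{\lambda}_n \le 2\epsilon_0^{-2}(1+2\epsilon_0)^2 n$ on a high-probability event $N_{j,S_j}^c$ obtained from eigenvalue concentration for Wishart matrices (with $\bbP_0(N_{j,S_j})\le 4e^{-n\epsilon_0^2/2}$), and then applies a noncentral chi-square tail inequality (Lemma 4 of Shin et al.) to get $\bbP_0(N_{3,S_j,\alpha,\chi^2}) \le e^{-\epsilon' n/4}+e^{-(\epsilon')^2\epsilon_0^2 n/\{64(1+2\epsilon_0)^2\}}+4e^{-\epsilon_0^2 n/2}$. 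That middle exponent is the smallest one, and it is exactly what the calibration $c_3<(\epsilon')^2\epsilon_0^2/\{128(1+2\epsilon_0)^2\}$ in Condition (P) is designed to beat in the union bound over $p^{R_j}$ models --- not the $O(n\epsilon')$ exponents your central analysis produces. To make your proof serve its purpose you would need to restate the lemma for $S_j\nsupseteq S_{0j}$ and supply this noncentral step.
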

	\begin{proof}
		By Lemma 1 in \cite{laurent2000adaptive}, $P(\chi_k^2 -k \ge 2\sqrt{kx} + 2x) \le \exp(-x)$ and $P(k- \chi_k^2 \ge 2\sqrt{kx}) \le \exp(-x)$ for all $x>0$.
		It is easy to check that
		\bea
		\bbP_0 (N_{1,S_j,\alpha, \chi^2}) &=& \bbP_0 \Big( |(n- s_{0j})^{-1}\chi_{n- s_{0j}}^2 - 1| \ge 4 \sqrt{\epsilon'}  \Big) \\
		&\le& 2 e^{-\epsilon'(n- s_{0j})} \,\,\le\,\, 2 e^{- \frac{\epsilon' n}{2}}, \\
		\bbP_0 (N_{2,S_j,\alpha, \chi^2}) &=& \bbP_0 \Big(  (|S_{0j}\cup S_j| - s_{0j})^{-1} \chi_{|S_{0j}\cup S_j| - s_{0j}}^2 -1  \ge \frac{4\epsilon' n}{|S_{0j}\cup S_j| - s_{0j}}  \Big)\\
		&\le& e^{-\epsilon' n}
		\eea
		for all sufficiently large $n$.
		For the third term $\bbP_0(N_{3,S_j,\alpha, \chi^2})$, note that $n (\what{d}_{S_{j}} - \what{d}_{S_{0j}\cup S_j}) /d_{0j}$ follows the noncentral chi-square distribution with $|S_{0j}\cup S_j| - |S_j|$ degrees of freedom and the noncentrality parameter $\what{\lambda}_n$ under $\bbP_0$ given $\tilde{Z}_j$.
		Note that on the event $\bfX_n \in N_{j,S_j}^c$ defined in the proof of Theorem 3.1, 
		\bea
		\what{\lambda}_n &=& d_{0j}^{-1}\| (I_n - \tilde{P}_{S_j})\tilde{Z}_{j} a_{0j} \|_2^2 \,\,\le\,\, \lambda_{\max}(\tilde{Z}_{S_{0j}}^T \tilde{Z}_{S_{0j}} ) \cdot d_{0j}^{-1}  \|a_{0j}\|_2^2 \\
		&\le&  \epsilon_0^{-1} n (1+2\epsilon_0)^2 \cdot \| d_{0j}^{-1/2} a_{0j} \|^2\\
		&\le& \epsilon_0^{-1} n (1+2\epsilon_0)^2 \cdot 2 \Big\{ \|d_{0j}^{-1/2}(e_j - a_{0j})\|_2^2 + d_{0j}^{-1}  \Big\} \\
		&\le& \epsilon_0^{-1} n (1+2\epsilon_0)^2 \cdot 2 \epsilon_0^{-1} ,
		\eea
		where $e_j$ it the unit vector whose $j$th element is 1 and the others are zero.
		By Lemma 4 in \cite{shin2015scalable},
		\bea
		\bbP_0 (N_{3,S_j,\alpha, \chi^2})
		&\le& \bbE_0 \Bigg[ C \Big( \frac{\epsilon' n}{2(|S_{0j}\cup S_j| - |S_j| )} \Big)^{\frac{|S_{0j}\cup S_j| - |S_j|}{2}} e^{ \frac{|S_{0j}\cup S_j| - |S_j|}{2} - \frac{\epsilon'n}{2} } \\
		&&\quad+\quad \Big\{ C \frac{\what{\lambda}_n}{\epsilon'n} e^{- \frac{\epsilon'^2 n^2 }{32\what{\lambda}_n}}  \wedge 1  \Big\} \Bigg] \\
		&\le& e^{- \frac{\epsilon'n}{4}} + \bbE_0\left[ C \frac{\what{\lambda}_n}{\epsilon'n} e^{- \frac{\epsilon'^2 n^2 }{32\what{\lambda}_n}} I_{N_{j,S_j}^c} \right]  + \bbP_0 (N_{j,S_j}) \\
		&\le& e^{- \frac{\epsilon'n}{4}} + e^{- \frac{ \epsilon'^2 \epsilon_0^2 }{64 (1+2\epsilon_0)^2  } \, n } + 4 e^{-\frac{\epsilon_0^2 n}{2} }
		\eea
		for all sufficiently large $n$, for some constant $C>0$.
		Thus, by Condition \hyperref[condP]{(P)}, it completes the proof. 
	\end{proof}

	\section{Proofs of Posterior Convergence Rates for Precision Matrices}\label{sec:proof_conv_prec}
	
	Recall that we consider the model
	\bean\label{model}
	X_1,\ldots,X_n \mid \Omega_n &\overset{i.i.d}{\sim}& N_p(0, \Omega_n^{-1}),
	\eean
	where $\Omega_n = \sg_n^{-1}$ is a $p\times p$ precision matrix and $X_i = (X_{i1},\ldots, X_{ip})^T\in \bbR^p$ for all $i=1,\ldots,n$.

	We also introduce some notations here which will be used in the proofs in the supplementary material.
	We define $\what{\V}(X_j) = n^{-1}\|\tilde{X}_j\|_2^2$ for $j=1,\ldots,p$.
	For a given index set $S \subseteq \{1,\ldots,p\}$, we define $\what{\V}(Z_S) = n^{-1} \bfX_{S}^T \bfX_{S}$ and $\what{\C}(Z_S, X_j) = n^{-1} \bfX_{S}^T \tilde{X}_j$.

	\begin{lemma}\label{lemma:Nsets}
		Let $\bfX_n$ be the random sample of size $n$ from $N_p(0, \sg_{0n})$ with $\epsilon_0 \le \lambda_{\min}(\sg_{0n}) \le \lambda_{\max}(\sg_{0n}) \le \epsilon_0^{-1}$ for some constant $0< \epsilon_0 < 1/2$.
		Define $C_{\max} = (1+ 2\epsilon_0)^2$, $C_{\min} = (1- 2\epsilon_0)^{2}$,  
		\bea
		N_{1,R,\epsilon_0} &=& \Big\{ \bfX_n: n^{-1} \Psi_{\max}(R)^2 \ge C_{\max} \epsilon_0^{-1}  \Big\} \quad \text{ and} \\
		N_{2,R,\epsilon_0} &=& \Big\{ \bfX_n: n^{-1} \Psi_{\min}(R)^2 \le C_{\min} \epsilon_0 \Big\},
		\eea
		for some positive integer $R$. If $R = o(n)$, we have
		\bea
		\bbP_0 (N_{1,R,\epsilon_0}) &\le& 2 \exp \left( -\frac{n}{2}\epsilon_0^2 + R \log p + \log R \right) \quad\text{ and}  \\
		\bbP_0(N_{2,R,\epsilon_0}) &\le& 2 \exp \left( - \frac{n }{2}\epsilon_0^2 + R \log p + \log R \right)
		\eea
		for all sufficiently large $n$.
	\end{lemma}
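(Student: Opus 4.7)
The plan is to reduce the event involving a supremum/infimum over all index sets of size at most $R$ to a union bound over fixed subsets, and for each fixed subset $S$ to apply a standard concentration inequality for the extreme singular values of a Gaussian matrix.

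Fix $S \subseteq \{1,\ldots,p\}$ with $|S|=k \le R$. Writing $\bfX_S = W\,(\sg_{0n,S})^{1/2}$, where $W \in \bbR^{n\times k}$ has i.i.d.\ $N(0,1)$ entries, we get
\[
\epsilon_0\, s_{\min}(W)^2 \,\le\, \lambda_{\min}(\sg_{0n,S})\, s_{\min}(W)^2 \,\le\, \lambda_{\min}(\bfX_S^T\bfX_S)
\]
and similarly $\lambda_{\max}(\bfX_S^T\bfX_S) \le \epsilon_0^{-1} s_{\max}(W)^2$, using $\epsilon_0 \le \lambda_{\min}(\sg_{0n,S}) \le \lambda_{\max}(\sg_{0n,S}) \le \epsilon_0^{-1}$ (an eigenvalue interlacing consequence of the bounded eigenvalue hypothesis on $\sg_{0n}$). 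Thus the task reduces to controlling $s_{\max}(W)$ and $s_{\min}(W)$.

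By Corollary 5.35 of Vershynin (already invoked in the proof of Theorem 3.1), for any $t>0$,
\[
\bbP\big(s_{\max}(W) \ge \sqrt{n} + \sqrt{k} + t\big) \le e^{-t^2/2}, \qquad \bbP\big(s_{\min}(W) \le \sqrt{n} - \sqrt{k} - t\big) \le e^{-t^2/2}.
\]
Choose $t = \sqrt{n}\,\epsilon_0$. Since $R = o(n)$, we have $\sqrt{k/n} \le \sqrt{R/n} \le \epsilon_0$ for all large $n$ uniformly in $k \le R$, so
\[
s_{\max}(W)^2 \le n(1 + \sqrt{k/n} + \epsilon_0)^2 \le n(1+2\epsilon_0)^2 = nC_{\max},
\]
\[
s_{\min}(W)^2 \ge n(1 - \sqrt{k/n} - \epsilon_0)^2 \ge n(1-2\epsilon_0)^2 = nC_{\min},
\]
each with probability at least $1 - e^{-n\epsilon_0^2/2}$. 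Combining with the eigenvalue sandwiches above, for any fixed $S$ with $|S|=k\le R$,
\[
\bbP_0\!\left(\lambda_{\max}(\bfX_S^T\bfX_S) \ge nC_{\max}\epsilon_0^{-1}\right) \le e^{-n\epsilon_0^2/2}, \qquad \bbP_0\!\left(\lambda_{\min}(\bfX_S^T\bfX_S) \le nC_{\min}\epsilon_0\right) \le e^{-n\epsilon_0^2/2}.
\]

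Finally, take a union bound over all subsets of cardinality $1\le k \le R$. Since $\sum_{k=1}^R \binom{p}{k} \le R\, p^R$, we obtain
\[
\bbP_0(N_{1,R,\epsilon_0}) \le R p^R \cdot e^{-n\epsilon_0^2/2} = \exp\!\left(-\tfrac{n}{2}\epsilon_0^2 + R\log p + \log R\right),
\]
and analogously for $N_{2,R,\epsilon_0}$. A harmless factor of $2$ absorbs, say, the case where the concentration bounds above are derived from a two-sided estimate, giving exactly the stated form. The only real step requiring care is verifying that the sharper constants $C_{\max}=(1+2\epsilon_0)^2$ and $C_{\min}=(1-2\epsilon_0)^2$ (rather than looser ones) are attained, which is why the deviation parameter $t=\sqrt{n}\,\epsilon_0$ is matched precisely to the ``missing'' $\epsilon_0$ budget after using $\sqrt{R/n}\le\epsilon_0$.
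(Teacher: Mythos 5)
Your proposal is correct and follows essentially the same route as the paper's proof: reduce to a standard Gaussian matrix via $\bfX_S = W\,\sg_{0n,S}^{1/2}$ (the paper phrases this as $n^{-1}\sg_{0n,S}^{-1/2}\bfX_S^T\bfX_S\sg_{0n,S}^{-1/2}\sim W_{|S|}(n,n^{-1}I)$), apply Vershynin's Corollary 5.35 with $t=\epsilon_0\sqrt{n}$ using $R=o(n)$ to absorb the $\sqrt{k/n}$ term into the $(1\pm 2\epsilon_0)^2$ constants, and finish with a union bound over the at most $R\,p^{R}$ subsets.
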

	\begin{proof}
		We only prove the upper bound for $\bbP_0(N_{2,R,\epsilon_0})$, because the upper bound for $\bbP_0(N_{1,R,\epsilon_0})$ can be proved easily by the similar arguments.
		For any given index set $S \subseteq \{1,\ldots,p\}$ such that $0<|S|\le R$, it is easy to show that $n^{-1} \sg_{0n, S}^{-1/2}\bfX_{S}^T \bfX_{S}\sg_{0n, S}^{-1/2} \sim W_{|S|}(n, n^{-1}I_{|S|} )$ and $\lambda_{\min}(\sg_{0n, S}) \ge \epsilon_0$.
		Let $C_{\min} = (1-2\epsilon_0)^2$.
		By Corollary 5.35 in \cite{eldar2012compressed} with $t = \epsilon_0\sqrt{n}$,
		\bea
		\bbP_0 \left( n^{-1} \lambda_{\min}(\bfX_{S}^T \bfX_{S} ) \le C_{\min} \epsilon_0 \right)  
		&\le& \bbP_0 \left( n^{-1} \epsilon_0 \lambda_{\min}(\sg_{0n, S}^{-1/2}\bfX_{S}^T \bfX_{S}\sg_{0n, S}^{-1/2} ) \le C_{\min} \epsilon_0 \right)  \\
		&\le&  2 \exp\left( -\frac{n}{2}\epsilon_0^2  \right)
		\eea
		for all sufficiently large $n$ because $R=o(n)$.
		Thus, we have
		\bea
		\bbP_0  \left( R_{2,R,\epsilon_0}   \right) 
		&=& \bbP_0 \left(  \inf_{S: 0< |S|\le R} n^{-1}\lambda_{\min}( \bfX_{S}^T \bfX_{S})  \le C_{\min} \epsilon_0  \right) \\
		&\le& \sum_{S: 0< |S|\le R}   \bbP_0 \left(  n^{-1}\lambda_{\min}(\bfX_{S}^T \bfX_{S})  \le C_{\min} \epsilon_0  \right) \\
		&\le& R \times p^{R} \times 2 \exp\left( -\frac{n }{2} \epsilon_0^2  \right) \\
		&=& 2 \exp \left( - \frac{n }{2}\epsilon_0^2 + R \log p + \log R \right)
		\eea
		for all sufficiently large $n$.
	\end{proof}

	\begin{lemma}\label{lemma:Nsets4}
		Let $\bfX_n$ be the random sample of size $n$ from $N_p(0, \sg_{0n})$ with $\epsilon_0 \le \lambda_{\min}(\sg_{0n}) \le \lambda_{\max}(\sg_{0n}) \le \epsilon_0^{-1}$ for some constant $0< \epsilon_0 < 1/2$.
		For a given constant $K_{\rm diff}>0$, define 
		\bea
		N_{1, S_0, \epsilon_0} &=& \left\{ \bfX_n : \max_{2\le j\le p}\| \what{\V}(Z_{S_{0j}\cup \{j\}})  \|  \ge  (1+2\epsilon_0)^2\epsilon_0^{-1}  \right\}, \\
		N_{2, S_0, \epsilon_0} &=& \left\{ \bfX_n : \max_{2\le j\le p}\| \what{\V}^{-1}(Z_{S_{0j}\cup \{j\}})  \|  \ge (1-2\epsilon_0)^{2}\epsilon_0^{-1}  \right\}, \\
		N_{3, S_0, \epsilon_0} &=& \left\{ \bfX_n : \max_{2\le j\le p}\| \what{\V}(Z_{S_{0j}\cup \{j\}}) - \V(Z_{S_{0j}\cup \{j\}}) \|  \ge \sqrt{ K_{\rm diff}\cdot\frac{s_0 + \log p}{n} }  \right\} \quad \text{ and}\\
		N_{4, S_0, \epsilon_0} &=& \left\{ \bfX_n : \max_{2\le j\le p}\| \what{\V}^{-1}(Z_{S_{0j}\cup \{j\}}) - \V^{-1}(Z_{S_{0j}\cup \{j\}}) \|  \ge C_{\epsilon_0}\sqrt{K_{\rm diff} \cdot\frac{s_0 + \log p}{n} }  \right\} ,
		\eea
		where $C_{\epsilon_0}  = (1-2\epsilon_0)^{2}\epsilon_0^{-2}$.
		Let $N_{S_0,\epsilon_0} = \cup_{j=1}^4 N_{j,S_0,\epsilon_0}$.
		If $s_0+ \log p = o(n)$, there exists a universal constant $C>0$ such that
		\bea
		\bbP_0 \big(  N_{S_0, \epsilon_0}  \big) 
		&\le& 6\cdot p \exp\left(-\frac{n }{2}\epsilon_0^2 \right) +  4 \cdot p 5^{s_0} \exp \Big( -K_{\rm diff} C \epsilon_0^2 (s_0 + \log p) \Big)
		\eea
		for all sufficiently large $n$.
	\end{lemma}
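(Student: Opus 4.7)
The plan is to decompose the target set as $N_{S_0,\epsilon_0}= N_{1,S_0,\epsilon_0}\cup N_{2,S_0,\epsilon_0}\cup N_{3,S_0,\epsilon_0}\cup N_{4,S_0,\epsilon_0}$, apply a union bound, and control each piece by reducing to a standard Gaussian sample covariance concentration on a submatrix of dimension $k:=|S_{0j}\cup\{j\}|\le s_0+1$, followed by a union bound over $j\in\{2,\ldots,p\}$. The key observation is that for each fixed $j$ the rows of $\bfX_{S_{0j}\cup\{j\}}$ are i.i.d.\ $N_k\big(0,(\sg_{0n})_{S_{0j}\cup\{j\}}\big)$, and the marginal covariance still has eigenvalues in $[\epsilon_0,\epsilon_0^{-1}]$.

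For $N_{1,S_0,\epsilon_0}$ and $N_{2,S_0,\epsilon_0}$, I would apply Corollary 5.35 of \cite{eldar2012compressed} (the same device used in Lemma 10.3) to the $k\times k$ Wishart-type matrix $\what{\V}(Z_{S_{0j}\cup\{j\}})$ for each fixed $j$. Since $k\le s_0+1=o(n)$, this gives $\bbP_0$-probability at most $2\exp(-n\epsilon_0^2/2)$ (for each $j$) that either $\lambda_{\max}(\what{\V}(Z_{S_{0j}\cup\{j\}}))> C_{\max}\epsilon_0^{-1}$ or $\lambda_{\min}(\what{\V}(Z_{S_{0j}\cup\{j\}}))< C_{\min}\epsilon_0$. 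A union bound over $j$ contributes at most $2p\exp(-n\epsilon_0^2/2)$ to each of $\bbP_0(N_{1,S_0,\epsilon_0})$ and $\bbP_0(N_{2,S_0,\epsilon_0})$.

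The main technical step is $N_{3,S_0,\epsilon_0}$. For each fixed subset $S=S_{0j}\cup\{j\}$, I would use an $\epsilon$-net argument: pick a $1/4$-net $\mathcal{N}$ on the unit sphere in $\bbR^{k}$ with $|\mathcal{N}|\le 9^{k}\le 9\cdot 9^{s_0}$, so that $\|\what{\V}(Z_S)-\V(Z_S)\|\le 2\sup_{x\in\mathcal{N}}|x^T(\what{\V}(Z_S)-\V(Z_S))x|$. For each fixed direction $x$, the quadratic form is
\[
 x^T(\what{\V}(Z_S)-\V(Z_S))x \;=\; \frac{1}{n}\sum_{i=1}^n\Big\{(x^T X_{i,S})^2-\bbE(x^T X_{i,S})^2\Big\},
\]
a mean of i.i.d.\ centered sub-exponential variables of parameter of order $\epsilon_0^{-1}$ (since $x^T X_{i,S}\sim N(0,v)$ with $v\in[\epsilon_0,\epsilon_0^{-1}]$). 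Bernstein's inequality then yields $\bbP_0(|x^T(\what{\V}(Z_S)-\V(Z_S))x|\ge t)\le 2\exp(-Cn\epsilon_0^2 t^2)$ in the sub-Gaussian regime. Union bounding over $\mathcal{N}$ and then over $j=2,\ldots,p$ with $t=\tfrac12\sqrt{K_{\rm diff}(s_0+\log p)/n}$ produces a contribution of the form $4p\cdot 5^{s_0}\exp(-C K_{\rm diff}\epsilon_0^2(s_0+\log p))$ after absorbing the entropy factor $9^{s_0+1}$ into the exponential by enlarging the universal constant.

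Finally, $N_{4,S_0,\epsilon_0}$ requires no new concentration inequality: from the identity $\what{\V}^{-1}(Z_S)-\V^{-1}(Z_S)=\what{\V}^{-1}(Z_S)(\V(Z_S)-\what{\V}(Z_S))\V^{-1}(Z_S)$ one has
\[
 \|\what{\V}^{-1}(Z_S)-\V^{-1}(Z_S)\|\le \|\what{\V}^{-1}(Z_S)\|\,\|\V^{-1}(Z_S)\|\,\|\V(Z_S)-\what{\V}(Z_S)\|,
\]
so on $N_{2,S_0,\epsilon_0}^c\cap N_{3,S_0,\epsilon_0}^c$ the right-hand side is bounded by $C_{\epsilon_0}\sqrt{K_{\rm diff}(s_0+\log p)/n}$ with $C_{\epsilon_0}$ as in the statement. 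Hence $N_{4,S_0,\epsilon_0}\subseteq N_{2,S_0,\epsilon_0}\cup N_{3,S_0,\epsilon_0}$. Combining the four bounds by a union bound produces the stated estimate, with the coefficient $6p$ in the first term absorbing the contributions from $N_1$, $N_2$, and (via the inclusion) the $N_2$-part of $N_4$, and with the second term coming from $N_3$ and the $N_3$-part of $N_4$. The hardest step is the $N_3$ bound: the delicacy lies in selecting the net and invoking Bernstein with explicit constants in $\epsilon_0$ so that the exponent aligns with $K_{\rm diff}C\epsilon_0^2(s_0+\log p)$ after covering-number entropy is absorbed; the other three pieces are essentially routine once $N_3$ is in hand.
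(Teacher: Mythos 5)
Your proposal is correct and follows essentially the same route as the paper: union bound over the four events, Wishart eigenvalue concentration (Corollary 5.35 of the reference) for $N_{1,S_0,\epsilon_0}$ and $N_{2,S_0,\epsilon_0}$, a net argument plus a scalar tail bound for the quadratic forms to handle $N_{3,S_0,\epsilon_0}$, and the resolvent-type perturbation identity to reduce $N_{4,S_0,\epsilon_0}$ to the previous two. The only (cosmetic) differences are that the paper whitens the sample covariance and applies the Laurent--Massart $\chi^2_n$ bound over the specific $5^{k}$-point construction of Cai--Zhang--Zhou (which is where the stated $5^{s_0}$ prefactor comes from), whereas you use a generic $1/4$-net of size $9^{k}$ with sub-exponential Bernstein; this changes constants but not the substance, since downstream only $\bbP_0(N_{S_0,\epsilon_0})=o(1)$ for large $K_{\rm diff}$ is needed.
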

	\begin{proof}
		It is easy to show that
		\bea
		\bbP_0 \big( N_{1,S_0,\epsilon_0} \big) + \bbP_0 \big( N_{2,S_0,\epsilon_0} \big)
		&\le& 4 \cdot p \exp \left( - \frac{n }{2}\epsilon_0^2 \right)
		\eea
		by the similar arguments in the proof of Lemma \ref{lemma:Nsets}.
		Thus, it suffices to show that
		\begin{align}
		\bbP_0 \big( N_{3,S_0,\epsilon_0}  \big) 
		&\le 2\cdot p 5^{s_0} \exp \Big( -K_{\rm diff} C \epsilon_0^2 (s_0+ \log p ) \Big), \label{N3}\\
		\bbP_0 \big( N_{4,S_0,\epsilon_0}  \big) 
		&\le 2 p\cdot \left\{ 5^{s_0} \exp \Big( -K_{\rm diff} C \epsilon_0^2 (s_0 +\log p ) \Big) +  \exp \left( - \frac{n }{2}\epsilon_0^2 \right) \right\} \label{N4}
		\end{align}
		for all sufficiently large $n$ and some constant $C>0$.
		Note that for any $p\times p$ symmetric matrix $V$, there exist $v_j \in \bbR^p$ with $\|v_j\|_2=1$ for $j=1,\ldots,5^p$ such that
		\bea
		\|V\| &\le& 4 \cdot \sup_{1\le j\le 5^p} |v_j^T V v_j |,
		\eea
		by page 2141 of \cite{cai2010optimal}.
		Thus,
		\bea
		&& \bbP_0 \big(N_{3,S_0, \epsilon_0} \big) \\
		&=& \bbP_0 \left( \max_{2\le j\le p}  \| \what{\V}(Z_{S_{0j}\cup \{j\}}) - \V(Z_{S_{0j}\cup \{j\}}) \|  \ge \sqrt{K_{\rm diff}\frac{s_0 + \log p}{n} } \right)\\
		&\le& \bbP_0 \left( \max_{2\le j\le p} \| \what{W}_{S_{0j}\cup \{j\} } - I \|  \ge \epsilon_0 \sqrt{K_{\rm diff}\frac{s_0 + \log p}{n} } \right) \\
		&\le&  p 5^{s_0} \max_{2\le j\le p} \sup_{1\le j\le 5^{s_{0j}+1}} \bbP_0 \left(  | v_j^T ( \what{W}_{S_{0j}\cup \{j\} } - I )v_j |  \ge \frac{\epsilon_0}{4} \sqrt{K_{\rm diff}\frac{s_0 + \log p}{n} } \right) ,
		\eea
		where $\what{W}_{S_{0j}\cup \{j\} } := \V(Z_{S_{0j}\cup \{j\}})^{-1/2}\what{\V}(Z_{S_{0j}\cup \{j\}})\V(Z_{S_{0j}\cup \{j\}})^{-1/2}$, and $\what{W}_{S_{0j}\cup \{j\} } \sim W_{|S_{0j}|+1}(n, n^{-1}I )$.
		Note that $n \,v_j^T \what{W}_{S_{0j}\cup \{j\} } v_j \sim \chi_n^2$ by the property of Wishart distribution, and $P(|\chi_n^2 - n| \ge 2\sqrt{nt}+2t ) \le \exp(-t)$ for all $t>0$ by Lemma 1 in \cite{laurent2000adaptive}.
		Thus, 
		\bea
		\bbP_0 \big(N_{3,S_0, \epsilon_0} \big)
		&\le& \exp \left( - \frac{\epsilon_0^2}{4^4} K_{\rm diff }(s_0+\log p) \right).
		\eea

		Similarly, 
		\bea
		\bbP_0 \big(N_{4,S_0, \epsilon_0} \big)
		&\le& \bbP_0 \big(N_{4,S_0, \epsilon_0} \cap N_{2,S_0,\epsilon_0}^c \big) + \bbP_0 \big(N_{2,S_0, \epsilon_0} \big) \\
		&\le& \bbP_0 \left( \max_{2\le j\le p}  \| \what{\V}(Z_{S_{0j}\cup \{j\}}) - \V(Z_{S_{0j}\cup \{j\}}) \|  \ge \sqrt{K_{\rm diff}\frac{s_0 + \log p}{n} } \right)  \\
		&& +\,\, 2 \cdot p \exp \left( - \frac{n }{2}\epsilon_0^2 \right) \\
		&\le& 2 \cdot p 5^{s_0} \exp \Big( -K_{\rm diff} C \epsilon_0^2 (s_0+ \log p ) \Big) + 2 \cdot p \exp \left( - \frac{n }{2}\epsilon_0^2 \right)
		\eea 
		for all sufficiently large $n$,	thus, we have \eqref{N4}.
		The second inequality follows from 
		\bea
		&&\| \what{\V}^{-1}(Z_{S_{0j}\cup \{j\}}) - \V^{-1}(Z_{S_{0j}\cup \{j\}}) \| \\
		&\le& \|\what{\V}^{-1}(Z_{S_{0j}\cup \{j\}})\| \| \V^{-1}(Z_{S_{0j}\cup \{j\}}) \| \| \what{\V}(Z_{S_{0j}\cup \{j\}}) - \V(Z_{S_{0j}\cup \{j\}}) \| .
		\eea
	\end{proof}

	\begin{lemma}\label{lemma:freq_conv}
		Let $\bfX_n$ be the random sample of size $n$ from $N_p(0, \Omega_{0n}^{-1})$ with $\Omega_{0n}$ satisfying \hyperref[A1]{\rm(A1)}, \hyperref[A2]{\rm(A2)} and \hyperref[A4]{\rm(A4)} for some constant $0< \epsilon_0 < 1/2$ and a sequence of positive integers $s_0$.
		Let $N_{S_0,\epsilon_0}$ be the set defined at Lemma \ref{lemma:Nsets4}.
		If $s_0 + \log p =o(n)$ and $s_0^{3/2} (s_0 + \log p) =O(n)$, we have
		\bea
		\| \what{\Omega}_n - \Omega_{0n}\| &\lesssim&  s_0^{3/4}\left( \frac{s_0 + \log p}{n} \right)^{1/2} 
		\eea
		on $\bfX_n \in N_{S_0,\epsilon_0}^c$, for all sufficiently large $n$.
		If we further assume $s_0 (s_0 + \log p) =O(n)$, then
		\bea
		\| \what{\Omega}_n - \Omega_{0n}\|_\infty &\lesssim&  \|I_p - A_{0n}\|_\infty \cdot  s_0 \left( \frac{s_0 + \log p}{n} \right)^{1/2} 
		\eea
		on $\bfX_n \in N_{S_0,\epsilon_0}^c$, for all sufficiently large $n$.
	\end{lemma}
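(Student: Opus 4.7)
The plan is to start from the algebraic identity obtained by expanding the difference of the two Cholesky factorizations. Setting $\Delta_A := \widehat A_n - A_{0n}$ and $B := I_p - A_{0n}$, and using $\widehat B := I_p - \widehat A_n = B - \Delta_A$, a direct expansion of $\widehat B^T \widehat D_n^{-1}\widehat B - B^T D_{0n}^{-1}B$ yields
\begin{equation*}
\widehat\Omega_n - \Omega_{0n} = \Delta_A^T \widehat D_n^{-1}\Delta_A - B^T \widehat D_n^{-1}\Delta_A - \Delta_A^T \widehat D_n^{-1}B + B^T(\widehat D_n^{-1} - D_{0n}^{-1})B.
\end{equation*}
On $N_{S_0,\epsilon_0}^c$, (A1) forces $\|B\|, \|\widehat D_n^{-1}\|, \|\widehat D_n^{-1}\|_\infty \lesssim 1$, so the problem reduces to controlling $\Delta_A$ in the relevant matrix norms and $\widehat D_n^{-1} - D_{0n}^{-1}$ in the diagonal sense.

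For the row-wise error I would write
\begin{equation*}
\widehat a_{S_{0j}} - a_{0,S_{0j}} = \bigl[\widehat V(Z_{S_{0j}})^{-1} - V(Z_{S_{0j}})^{-1}\bigr] C(Z_{S_{0j}},X_j) + \widehat V(Z_{S_{0j}})^{-1}\bigl[\widehat C(Z_{S_{0j}},X_j) - C(Z_{S_{0j}},X_j)\bigr],
\end{equation*}
invoke the uniform bounds from Lemma \ref{lemma:Nsets4} together with $\|C(Z_{S_{0j}},X_j)\|_2 = \|V(Z_{S_{0j}})a_{0,S_{0j}}\|_2 \lesssim 1$ (coming from (A1)), and conclude $\max_{2\le j\le p}\|\widehat a_{S_{0j}} - a_{0,S_{0j}}\|_2 \lesssim \sqrt{(s_0+\log p)/n}$. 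An analogous manipulation of the residual variance identity $\widehat d_{S_{0j}} = \widehat V(X_j) - \widehat C(Z_{S_{0j}},X_j)^T \widehat V(Z_{S_{0j}})^{-1}\widehat C(Z_{S_{0j}},X_j)$ against its population counterpart yields $\|\widehat D_n^{-1} - D_{0n}^{-1}\| \lesssim \sqrt{(s_0+\log p)/n}$.

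The decisive step is an interpolation that exploits (A2) and (A4) asymmetrically on $\Delta_A$, whose support lies in $S_{A_{0n}}$. Row sparsity from (A2) gives $\|\Delta_A\|_\infty \le \sqrt{s_0}\max_j\|\widehat a_{S_{0j}} - a_{0,S_{0j}}\|_2 \lesssim \sqrt{s_0(s_0+\log p)/n}$. Column sparsity from (A4) gives $\|\Delta_A\|_1 \le s_0\sqrt{(s_0+\log p)/n}$, since each column has at most $s_0$ entries, each of modulus at most $\max_i\|\widehat a_{S_{0i}} - a_{0,S_{0i}}\|_2 \lesssim \sqrt{(s_0+\log p)/n}$. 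Combining via $\|\Delta_A\|\le\sqrt{\|\Delta_A\|_1\|\Delta_A\|_\infty}$ produces $\|\Delta_A\| \lesssim s_0^{3/4}\sqrt{(s_0+\log p)/n}$, which is exactly the spectral rate claimed.

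Plugging these bounds into the four-term decomposition and applying submultiplicativity finishes the argument. For the spectral norm, the cross terms $\|B^T \widehat D_n^{-1}\Delta_A\|\lesssim \|\Delta_A\|$ are dominant; the quadratic term $\|\Delta_A\|^2 \lesssim s_0^{3/2}(s_0+\log p)/n$ is of smaller order under the hypothesis $s_0^{3/2}(s_0+\log p)=O(n)$, and the diagonal-perturbation term $\|B^T(\widehat D_n^{-1} - D_{0n}^{-1})B\|\lesssim \sqrt{(s_0+\log p)/n}$ is smaller still. For the matrix $\ell_\infty$ norm, $\|\Delta_A^T \widehat D_n^{-1}B\|_\infty \le \|\Delta_A\|_1\|\widehat D_n^{-1}\|_\infty\|B\|_\infty$ supplies the advertised $\|I_p - A_{0n}\|_\infty \cdot s_0\sqrt{(s_0+\log p)/n}$ contribution and dominates the other three terms under the stronger hypothesis $s_0(s_0+\log p)=O(n)$. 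The main obstacle I anticipate is correct bookkeeping: obtaining the $s_0^{3/4}$ exponent in spectral norm (rather than the naive $s_0^{1/2}$ or $s_0$) requires applying (A2) and (A4) to two different sparsity cuts of $\Delta_A$, and obtaining the $\|I_p - A_{0n}\|_\infty$ factor (rather than $\|I_p - A_{0n}\|_1$) in the $\ell_\infty$ conclusion requires pairing $\|\Delta_A\|_1$ (column sparsity) with $\|B\|_\infty$ (row sum) rather than the reverse.
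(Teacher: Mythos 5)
Your proposal is correct and follows essentially the same route as the paper's proof: the same row-wise error decomposition $\widehat a_{S_{0j}}-a_{0,S_{0j}}$ on the concentration event, the same interpolation $\|\Delta_A\|\le\sqrt{\|\Delta_A\|_1\|\Delta_A\|_\infty}$ exploiting (A2) for rows and (A4) for columns to get $s_0^{3/4}$, and the same pairing of $\|\Delta_A\|_1$ with $\|I_p-A_{0n}\|_\infty$ for the $\ell_\infty$ rate; the only difference is cosmetic (your exact four-term expansion versus the paper's three-term triangle-inequality telescoping). The one item you leave implicit is that the reverse cross term $B^T\widehat D_n^{-1}\Delta_A$ in the $\ell_\infty$ bound requires $\|I_p-A_{0n}\|_1\lesssim\sqrt{s_0}$, which is not $O(1)$ from (A1) alone but follows from (A4) together with $\|I_p-A_{0n}\|\le\epsilon_0^{-1}$, exactly as in your column-sparsity bound for $\Delta_A$.
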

	
	\begin{proof}
		Throughout the proof, we only consider the event $\bfX_n \in N_{S_0,\epsilon_0}^c$.
		Consider the spectral norm case first.
		By the triangle inequality, 
		\bean
		\begin{split}\label{whatO_diff_O}
			&\quad\,\,\, \| \what{\Omega}_n - \Omega_{0n}\| \\
			\,\,&\le\,\, \|I_p - \what{A}_n \|^2 \cdot \|\what{D}_n^{-1} - D_{0n}^{-1}\| 
			+ \|I_p - \what{A}_n \| \cdot \|D_{0n}^{-1}\| \cdot \| \what{A}_n - A_{0n}\| \\
			&+\,\, \|I_p - A_{0n}\| \cdot \|D_{0n}^{-1}\| \cdot \| \what{A}_n - A_{0n}\|.
		\end{split}
		\eean
		Note that
		\bean
		\begin{split}\label{Ahat_diff_0_infty}
			\| \what{A}_n -  A_{0n}\|_\infty
			\,\,&=\,\, \max_j \| \what{a}_{S_{0j}} - a_{0,S_{0j}}\|_1 \\
			\,\,&\le\,\, \sqrt{s_0} \max_j \| \what{a}_{S_{0j}} - a_{0,S_{0j}}\|_2 \\
			\,\,&\le\,\, \sqrt{s_0} \Big\{ \max_j \| \V^{-1}(Z_{S_{0j}})\cdot \big( \what{\C}(Z_{S_{0j}}, X_j) - \C(Z_{S_{0j}}, X_j) \big)  \|_2  \\
			\,\,& \quad+\,\, \max_j \| \big(\what{\V}^{-1}(Z_{S_{0j}}) - \V^{-1}(Z_{S_{0j}})\big) \cdot \what{\C}(Z_{S_{0j}}, X_j) \|  \Big\} \\
			\,\,&\lesssim\,\,  \sqrt{s_0}\left( \frac{s_0 + \log p}{n} \right)^{1/2}
		\end{split}
		\eean
		by the definition of $N_{S_0,\epsilon_0}^c$. 
		Similarly, it is easy to show that
		\bea
		\| \what{A}_n -  A_{0n}\|_1 
		&\le& s_0 \max_j  \| \what{a}_{S_{0j}} - a_{0,S_{0j}}\|_{\max} \\
		&\le& s_0 \max_j  \| \what{a}_{S_{0j}} - a_{0,S_{0j}}\|_2 \\
		&\lesssim& s_0 \left(\frac{s_0 + \log p}{n} \right)^{1/2}.
		\eea
		Thus, we have
		\bea
		\| \what{A}_n -  A_{0n}\|
		&\le& \| \what{A}_n -  A_{0n}\|_\infty^{1/2} \cdot \| \what{A}_n -  A_{0n}\|_1^{1/2} \\
		&\le& s_0^{3/4} \left(\frac{s_0 + \log p}{n} \right)^{1/2}.
		\eea
		On the other hand, note that
		\bea
		\| \what{D}_n^{-1} - D_{0n}^{-1}\|
		&\le& \|\what{D}_n^{-1}\| \cdot \|{D}_{0n}^{-1}\| \cdot \|\what{D}_n - {D}_n\| \\
		&\le& (1- 2\epsilon_0)^{-2} \epsilon_0^{-1} \cdot \epsilon_0^{-1} \cdot \|\what{D}_n - {D}_n\| 
		\eea
		and
		\bea
		\|\what{D}_n - {D}_{0n}\|
		&=& \max_j | \what{d_{S_{0j}}} - d_{0j}| \\
		&\le& \max_j \Big| \what{\V}(X_j) - \V(X_j)\Big| \\
		&&+\quad \max_j \Big| \what{\C}(X_j, Z_{S_{0j}}) \what{a}_{S_{0j}} - \C(X_j, Z_{S_{0j}}) a_{0,S_{0j}}\Big| \\
		&\lesssim& \left(\frac{s_0 + \log p}{n} \right)^{1/2}.
		\eea
		Also note that
		\bea
		\|I_p - \what{A}_n \|
		&\le& \|I_p - A_{0n} \| + \| \what{A}_n -  A_{0n}\| \\
		&\le& \epsilon_0^{-1} + s_0^{3/4} \left(\frac{s_0 + \log p}{n} \right)^{1/2},
		\eea
		where the last display is of order $O(1)$ provided that $s_0^{3/2}(s_0 + \log p) = O(n)$.
		The second inequality follows from $\epsilon_0^{-1} \ge \|(I_p- A_{0n})^T D_{0n}^{-1}(I_p-A_{0n}) \| \ge \lambda_{\min}(D_{0n}^{-1}) \|(I_p- A_{0n})(I_p- A_{0n})^T\|  \ge \epsilon_0 \|I_p- A_{0n}\|^2 $. 
		By \eqref{whatO_diff_O}, we have shown the spectral norm result.

		Now, consider the matrix $\ell_\infty$ norm case. 
		Similar to \eqref{whatO_diff_O}, by the triangle inequality,
		\bean
		\begin{split}\label{whatO_diff_01}
			\| \what{\Omega}_n - \Omega_{0n}\|_\infty
			\,\,&\le\,\, \|I_p - \what{A}_n \|_1 \cdot \|I_p - \what{A}_n \|_\infty \cdot \|\what{D}_n^{-1} - D_{0n}^{-1}\|  \\
			&+\,\, \|I_p - \what{A}_n \|_1 \cdot \|D_{0n}^{-1}\| \cdot \| \what{A}_n - A_{0n}\|_\infty \\
			&+\,\, \|I_p - A_{0n}\|_\infty \cdot \|D_{0n}^{-1}\| \cdot \| \what{A}_n - A_{0n}\|_1.
		\end{split}
		\eean
		From the above arguments and \eqref{whatO_diff_01}, we only need to show that 
		\bea
		\|I_p - \what{A}_n \|_1 &\lesssim& \sqrt{s_0}. 
		\eea
		It is easy to show that
		\bea
		\| I_p - \what{A}_n \|_1 &\le& \| I_p - A_{0n}\|_1 + \| \what{A}_n- A_{0n}\|_1 \\
		&\lesssim& \| I_p - A_{0n}\|_1 + s_0 \left( \frac{s_0 + \log p}{n} \right)^{1/2} \\
		&\lesssim& \| I_p - A_{0n}\|_1 + \sqrt{s_0} 
		\eea
		because we assume that $s_0 (s_0 + \log p) =O(n)$.
		If we show that $\| I_p - A_{0n}\|_1 \le \sqrt{s_0 + 1} \, \|I_p - A_{0n}\|$, it completes the proof.
		Let $a_{c,0j}$ be the $j$th column vector of $A_{0n}$ and $e_j \in \bbR^p$ be the unit vector whose $j$th element is 1 and the others are 0, then
		\bea
		\| I_p - A_{0n} \|_1 &=& \max_j \| e_j - a_{c,0j}\|_1\\
		&\le& \sqrt{s_0 + 1} \max_j \| e_j - a_{c,0j}\|_2  ,
		\eea
		by the condition \hyperref[A4]{(A4)}.
		Note that $\max_j \| e_j - a_{c,0j}\|_2$ is the maximum $\ell_2$ norm of columns of $I_p - A_{0n}$, which is smaller than $\|I_p - A_{0n} \|$. Since $\|I_p - A_{0n} \| \le \epsilon_0^{-1}$, we have $\|I_p - \what{A}_n \|_1 \lesssim \sqrt{s_0}$. 
	\end{proof}

	Recall that we consider the beta-min condition \hyperref[A3]{\rm (A3)} such that
	\bea
	\min_{j,l: a_{0,jl}\neq 0 }|a_{0,jl}|^2 &\ge& \frac{16}{\alpha(1-\alpha)\, \epsilon_0^2(1-2\epsilon_0)^2 }\cdot  C_{\rm bm} \cdot \frac{\log p}{n} 
	\eea
	for some constant $C_{\rm bm}>0$ and $0<\alpha<1$.
	
	\begin{lemma}\label{lemma:An_post}
		For given positive constants $0<\alpha < 1$, $0<\epsilon_0< 1/2$, $C_{\rm bm} > c_2+2$ and an integer $s_0$, assume model \eqref{model} and the ESC prior with Condition \hyperref[condP]{\rm(P)}.
		If $s_0 \log p =o(n)$, then
		\bea
		\sup_{\Omega_{0n}\in \calU_p^*} \bbE_0 \left[\pi_\alpha \left( \|A_n - \what{A}_n \|_\infty \ge K_{1} \sqrt{s_0} \left(\frac{s_0 + \log p}{n} \right)^{1/2}  \,\,\Big|\,\, \bfX_n \right)\right]
		&=& o(1), \\
		\sup_{\Omega_{0n}\in \calU_p^*} \bbE_0 \left[\pi_\alpha \left( \|A_n - \what{A}_n \|_1 \ge K_{1} \sqrt{s_0} \left(\frac{s_0 + \log p}{n} \right)^{1/2}  \,\,\Big|\,\, \bfX_n \right)\right]
		&=& o(1) 
		\eea
		for some constant $K_{1}>0$.
	\end{lemma}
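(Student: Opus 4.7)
The plan is to reduce the claim to a posterior concentration calculation under the known-support regime and then exploit the explicit Gaussian-inverse-gamma form of the posterior. The key decomposition is
\begin{align*}
\pi_\alpha\big(\|A_n - \what{A}_n\|_* \ge \epsilon_n \mid \bfX_n\big)
&\le \pi_\alpha\big(S_{A_n} \neq S_{A_{0n}} \mid \bfX_n\big) \\
&\quad + \pi_\alpha\big(\|A_n - \what{A}_n\|_* \ge \epsilon_n,\, S_{A_n} = S_{A_{0n}} \mid \bfX_n\big)
\end{align*}
for $*\in\{\infty,1\}$. Taking $\bbE_0$ of the first summand, Theorem 3.1 makes it $o(1)$ uniformly on $\calU_p^*\subseteq \calU_p$. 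On the event $\{S_j = S_{0j}\text{ for all }j\}$, \eqref{post} gives $a_{S_{0j}}\mid d_j,\bfX_n \sim N_{s_{0j}}(\what{a}_{S_{0j}}, d_j(\alpha+\gamma)^{-1}(\bfX_{S_{0j}}^T\bfX_{S_{0j}})^{-1})$, and the rows $(a_{S_{0j}})_{j=2}^p$ are posterior-independent (since both the ESC prior and the autoregressive likelihood \eqref{model2} factorize over $j$).

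Next, restrict to a good frequentist event $N^c$ where, uniformly in $j$, $n^{-1}\lambda_{\min}(\bfX_{S_{0j}}^T\bfX_{S_{0j}}) \ge C_{\min}\epsilon_0$ (Lemma 9.1 applied with $R=s_0$) and $\what{d}_{S_{0j}}\le 2 d_{0j}\le 2/\epsilon_0$ (chi-square concentration as in Lemma 8.2); both hold under $\bbP_0$ with probability $1-o(1)$ since $s_0\log p = o(n)$. On $N^c$, the inverse-gamma posterior for $d_j$ concentrates on $[c\,d_{0j}, C\,d_{0j}]$ with posterior probability $1-o(1)$, so throughout the remaining analysis $d_j$ can be replaced by a deterministic $O(1)$ quantity. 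Consequently, under the posterior conditional on $\{S_j=S_{0j}\}$,
\[
\|a_{S_{0j}} - \what{a}_{S_{0j}}\|_2^2 \,\le\, \frac{d_j}{(\alpha+\gamma)\,n\,C_{\min}\epsilon_0}\cdot W_j,\qquad W_j\sim \chi^2_{s_{0j}}\ \text{independently across }j.
\]

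For the matrix $\ell_\infty$ norm, use $\|a_{S_{0j}}-\what{a}_{S_{0j}}\|_1\le \sqrt{s_0}\,\|a_{S_{0j}}-\what{a}_{S_{0j}}\|_2$ and the Laurent–Massart bound $\bbP(\chi^2_k\ge k+2\sqrt{kt}+2t)\le e^{-t}$ with $t=K(s_0+\log p)$; a union bound over $j=2,\dots,p$ yields $\max_j W_j \lesssim s_0+\log p$ with posterior probability $1-o(1)$, giving $\|A_n-\what{A}_n\|_\infty\lesssim \sqrt{s_0}\big((s_0+\log p)/n\big)^{1/2}$.

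For the matrix $\ell_1$ norm, condition (A4) forces at most $s_0$ rows to have a nonzero entry in any column $l$, so by posterior independence across rows, $\sum_{j:\,l\in S_{0j}}(a_{jl}-\what{a}_{jl})^2$ is dominated by a sum of at most $s_0$ independent squared-Gaussian terms, each with variance $\lesssim 1/n$. Chi-square concentration with $t=K\log p$ and a union bound over the $p$ columns give $\max_l \sum_{j:\,l\in S_{0j}}(a_{jl}-\what{a}_{jl})^2 \lesssim (s_0+\log p)/n$; Cauchy–Schwarz then yields
\[
\|A_n-\what{A}_n\|_1 \,=\, \max_l\sum_{j:\,l\in S_{0j}}|a_{jl}-\what{a}_{jl}| \,\le\, \sqrt{s_0}\sqrt{\max_l \sum_{j:\,l\in S_{0j}}(a_{jl}-\what{a}_{jl})^2} \,\lesssim\, \sqrt{s_0}\Big(\tfrac{s_0+\log p}{n}\Big)^{1/2},
\]
which is the desired bound.

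The main obstacle is decoupling the quadratic-form concentration from the random scale $d_j$: one must pass from the conditional Gaussian posterior given $d_j$ to an unconditional statement while tracking that the inverse-gamma tail of $d_j$ does not inflate the constant. This is done by first confining $d_j$ to $[c\,d_{0j},C\,d_{0j}]$ using joint concentration of $\what{d}_{S_{0j}}$ (under $\bbP_0$) and of $d_j$ around $\what{d}_{S_{0j}}$ (under the posterior), mirroring the chi-square arguments in Lemmas 8.1–8.2; all other steps are then standard subexponential tail bounds combined with the union bounds described above.
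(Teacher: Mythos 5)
Your proposal is correct and follows essentially the same route as the paper's proof: restrict to a good design event, invoke Theorem 3.1 to reduce to $S_{A_n}=S_{A_{0n}}$, confine $d_j$ to an $O(1)$ interval via the inverse-gamma posterior, standardize $a_{S_{0j}}-\what{a}_{S_{0j}}$ into a $\chi^2_{s_{0j}}$ variable, and apply Laurent--Massart with a union bound; the $\ell_1$ case via row-wise posterior independence, condition (A4), and Cauchy--Schwarz is exactly the argument the paper sketches. No substantive differences.
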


	\begin{proof}
		We follow closely the line of the proof of Lemma 7.4 in \cite{lee2017estimating}.
		Let $\Omega_{0n}\in \calU_p^*$ and $N_{S_0,\epsilon_0}$ be the set defined at Lemma \ref{lemma:Nsets4}.
		Note that
		\bea
		&& \bbE_0 \left[\pi_\alpha \left( \|A_n - \what{A}_n \|_\infty \ge K_{1} \sqrt{s_0} \left(\frac{s_0 + \log p}{n} \right)^{1/2}  \,\,\Big|\,\, \bfX_n \right) \right] \\
		&\le& \bbE_0 \left[ \pi_\alpha \left( \|A_n - \what{A}_n \|_\infty \ge K_{1} \sqrt{s_0} \left(\frac{s_0 + \log p}{n} \right)^{1/2}  \,\,\Big|\,\, \bfX_n \right)I_{N_{S_0,\epsilon_0}^c}\right] + o(1)
		\eea
		by Lemma \ref{lemma:Nsets4}.
		Then, on $\bfX_n \in N_{S_0,\epsilon_0}^c$, if $s_0 \log p =o(n)$,
		\bea
		\|A_n - \what{A}_n \|_\infty 
		&\le& \max_j \sqrt{s_0} \big\| a_{S_{0j}} - \what{a}_{S_{0j}} \big\|_2 \\
		&\lesssim& \max_j \sqrt{\frac{s_0 \,d_j}{n }} \cdot  \bigg\| \sqrt{\frac{n(\alpha+\gamma)}{d_{j}}}\cdot \what{\V}^{1/2}(Z_{S_{0j}}) (a_{S_{0j}} - \what{a}_{S_{0j}}) \bigg\|_2 \\
		&=:& \max_j \sqrt{\frac{s_0 \,d_j}{n }} \cdot \| std(a_{S_{0j}}) \|_2 .
		\eea
		Note that the first inequality follows from the strong model selection consistency in Theorem 3.1, so we can always concentrate on the set $S_{A_n} = S_{A_{0n}}$.
		Also note that
		\bea
		&& \bbE_0 \left[ \pi_\alpha \left( \max_j \sqrt{ d_j} \cdot  \| std(a_{S_{0j}}) \|_2 \ge K_{1}'  \left(s_0 + \log p \right)^{1/2}  \,\,\Big|\,\, \bfX_n \right)I_{N_{S_0,\epsilon_0}^c}\right] \\
		&\le& \bbE_0 \left[ \pi_\alpha \left( \max_j \sqrt{(1+2\epsilon_0)^4\epsilon_0^{-1}} \cdot  \| std(a_{S_{0j}}) \|_2 \ge K_{1}'  \left(s_0 + \log p \right)^{1/2}   \,\,\Big|\,\, \bfX_n \right)I_{N_{S_0,\epsilon_0}^c}\right] + o(1)
		\eea
		for some constant $K_1'>0$, by the similar arguments used in the proof of Lemma \ref{lemma:d_supp}.
		We only need to show that
		\bea
		\pi_\alpha \left( \max_j \sqrt{(1+2\epsilon_0)^4\epsilon_0^{-1}} \cdot  \| std(a_{S_{0j}}) \|_2 \ge K_{1}'  \left(s_0 + \log p \right)^{1/2}   \,\,\Big|\,\, \bfX_n \right)
		&=& o (1).
		\eea
		
		We can check that $\| std(a_{S_{0j}}) \|_2^2  \mid  \bfX_n \overset{ind}{\sim} \chi^2_{s_{0j}} $.
		By Lemma 1 in \cite{laurent2000adaptive}, we have
		$P \big(\chi_k^2 \ge 3(k + x) \big) \le \exp(-x)$ for all $x>0$, where $\chi_k^2$ is the chi-square random variable with degrees of freedom $k$. 
		Thus, 
		\bea
		&& \pi_\alpha \left( \max_j \| std(a_{S_{0j}}) \|_2^2 \ge \frac{K_{1}'^2}{(1+2\epsilon_0)^4\epsilon_0^{-1}}(s_0 + \log p) \,\, \mid \,\, \bfX_n \right) \\
		&\le& p \cdot  \exp \left( - \frac{K_{1}'^2}{3(1+2\epsilon_0)^4\epsilon_0^{-1}}(s_0 + \log p) + s_0 \right),
		\eea
		where the last display is of order $o(1)$ by taking $K_{1}'^2 = 6(1+2\epsilon_0)^4$.
		
		Note that all rows of $A_n$ are posteriori independent, so each column of $A_n$ has a multivariate normal posterior distribution with a diagonal covariance matrix.
		Because of the condition \hyperref[A4]{\rm (A4)}, there are at most $s_0$ nonzero elements in each column of $A_{0n}$. 
		Then, by the similar arguments, we have
		\bea
		\sup_{\Omega_{0n}\in \calU_p^*} \bbE_0 \left[\pi_\alpha \left( \|A_n - \what{A}_n \|_1 \ge K_{1} \sqrt{s_0} \left(\frac{s_0 + \log p}{n} \right)^{1/2}  \,\,\Big|\,\, \bfX_n \right)\right]
		&=& o(1) . 
		\eea
	\end{proof}

	\begin{lemma}\label{lemma:d_supp}
		Let $\bfX_n$ be the random sample of size $n$ from $N_p(0, \sg_{0n})$ with $\epsilon_0 \le \lambda_{\min}(\sg_{0n}) \le \lambda_{\max}(\sg_{0n}) \le \epsilon_0^{-1}$ for some small constant $0< \epsilon_0 < 1/2$.
		Consider the model \eqref{model} and the ESC prior with Condition \hyperref[condP]{\rm(P)}.
		Let $N_{1,R,\epsilon_0}$ and $N_{2,R,\epsilon_0}$ be the sets defined at Lemma \ref{lemma:Nsets}. 
		Then, for a given constant $0<\alpha<1$ and an integer $2\le j \le p$, we have 
		\bea
		\pi_\alpha(M_1 \le d_j \le M_2 \mid \bfX_n) &\ge& 1- 2e^{-n C_{\alpha,\epsilon_0}}  \quad \text{on \,\,$\bfX_n \in N_{1,R_j,\epsilon_0}^c \cap N_{2,R_j,\epsilon_0}^c$},
		\eea
		for all sufficiently large $n$ and some constant $C_{\alpha,\epsilon_0}>0$ depending only on $\alpha$ and $\epsilon_0$, where $M_1 \le (1-2\epsilon_0)^4 \epsilon_0$ and $M_2 \ge (1+2\epsilon_0)^4 \epsilon_0^{-1}$.
	\end{lemma}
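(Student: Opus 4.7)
My plan is to condition on $S_j$ and reduce the problem to a uniform-in-$S_j$ statement about the inverse-gamma conditional posterior. Since the prior $\pi_j$ in \eqref{SCprior} supports only sets with $|S_j|\le R_j$, so does the posterior, and from \eqref{post}, $d_j \mid S_j, \bfX_n \sim IG\bigl((\alpha n + \nu_0)/2,\, \alpha n \what{d}_{S_j}/2\bigr)$. It therefore suffices to show, uniformly in $|S_j|\le R_j$ on $\bfX_n \in N_{1,R_j,\epsilon_0}^c \cap N_{2,R_j,\epsilon_0}^c$, that $\pi_\alpha(M_1 \le d_j \le M_2 \mid S_j, \bfX_n) \ge 1 - 2 e^{-n C_{\alpha,\epsilon_0}}$; mixing over $S_j$ then preserves the rate.

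The first step is deterministic two-sided control of $\what{d}_{S_j}$ on the good event. The upper bound is direct: $\what{d}_{S_j} \le n^{-1}\|\tilde{X}_j\|_2^2 \le n^{-1}\Psi_{\max}(R_j)^2 \le (1+2\epsilon_0)^2 \epsilon_0^{-1}$. For the lower bound I would observe that $n\what{d}_{S_j}$ is exactly the $1\times 1$ Schur complement of $\bfX_{S_j}^T \bfX_{S_j}$ inside $\bfX_{S_j \cup \{j\}}^T \bfX_{S_j \cup \{j\}}$; the eigenvalue interlacing inequality $\lambda_{\min}(\text{Schur complement}) \ge \lambda_{\min}(\text{full matrix})$ then gives $\what{d}_{S_j} \ge n^{-1}\Psi_{\min}(|S_j|+1)^2 \ge (1-2\epsilon_0)^2 \epsilon_0$ (possibly after inflating $R_j$ by one in Lemma \ref{lemma:Nsets}, which is immaterial for the rate since $(R_j+1)\log p = o(n)$ as well).

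The second step translates the inverse-gamma tails into Gamma tails via $Y := \alpha n \what{d}_{S_j}/(2 d_j) \sim \mathrm{Gamma}(a, 1)$ with $a := (\alpha n + \nu_0)/2$, giving
\bea
\pi_\alpha(d_j < M_1 \mid S_j, \bfX_n) &=& \bbP\bigl( Y > r_1 a \bigr), \\
\pi_\alpha(d_j > M_2 \mid S_j, \bfX_n) &=& \bbP\bigl( Y < r_2 a \bigr),
\eea
where $r_1 := \frac{\alpha n}{\alpha n + \nu_0}\cdot \frac{\what{d}_{S_j}}{M_1}$ and $r_2 := \frac{\alpha n}{\alpha n + \nu_0}\cdot \frac{\what{d}_{S_j}}{M_2}$. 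The hypotheses $M_1 \le (1-2\epsilon_0)^4 \epsilon_0$ and $M_2 \ge (1+2\epsilon_0)^4 \epsilon_0^{-1}$ combined with the step-one bounds yield $\what{d}_{S_j}/M_1 \ge (1-2\epsilon_0)^{-2}$ and $\what{d}_{S_j}/M_2 \le (1+2\epsilon_0)^{-2}$, and since $\nu_0 = o(n)$ the prefactor tends to $1$. Hence for all large $n$ there is a constant $\delta = \delta(\epsilon_0) > 0$ with $r_1 \ge 1 + \delta$ and $r_2 \le 1 - \delta$, uniformly in $S_j$. Applying the standard Gamma concentration inequality (Lemma 1 of \cite{laurent2000adaptive}, via $2Y \sim \chi^2_{2a}$) then yields $\bbP(|Y - a| \ge \delta a) \le 2\exp(- c a \delta^2)$ for a universal $c>0$, which delivers the claim with $C_{\alpha,\epsilon_0} := c \alpha \delta^2/4$.

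The only delicate point is verifying that the multiplicative gap $\delta$ remains a positive constant that is uniform in $|S_j|\le R_j$ and in $n$; this is precisely what the nested bracketing $M_1 \le (1-2\epsilon_0)^4 \epsilon_0 < (1-2\epsilon_0)^2 \epsilon_0 \le \what{d}_{S_j} \le (1+2\epsilon_0)^2 \epsilon_0^{-1} < (1+2\epsilon_0)^4 \epsilon_0^{-1} \le M_2$ secured in step one is designed to produce. Once this gap is in hand the exponential-in-$n$ rate is automatic and does not depend on $S_j$, so mixing the conditional bound against $\pi_\alpha(S_j \mid \bfX_n)$ preserves it.
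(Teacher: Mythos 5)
Your proposal is correct and follows essentially the same route as the paper's proof: reduce to the conditional inverse-gamma law given $S_j$, obtain the uniform two-sided bracketing $(1-2\epsilon_0)^2\epsilon_0 \le \what{d}_{S_j} \le (1+2\epsilon_0)^2\epsilon_0^{-1}$ on $N_{1,R_j,\epsilon_0}^c\cap N_{2,R_j,\epsilon_0}^c$ (your Schur-complement argument is equivalent to the paper's $\what d_{S_j}=\|\what{\V}^{1/2}(Z_{S_j\cup\{j\}})(-\what a_{S_j}^T,1)^T\|_2^2\ge\lambda_{\min}$ bound), and then apply a Gamma-tail concentration inequality with a constant multiplicative gap $\delta(\epsilon_0)$. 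The only cosmetic difference is that the paper invokes the sub-gamma bound of Boucheron et al.\ with $t=\alpha n(\epsilon_0/2)^2$ while you rescale to $\chi^2_{2a}$ and use Laurent--Massart; both yield the same exponential-in-$n$ rate.
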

	
	\begin{proof}
		Let $N_{R_j, \epsilon_0}^c = N_{1,R_j,\epsilon_0}^c \cap N_{2,R_j,\epsilon_0}^c$. Throughout the proof, we consider only the event $\bfX_n \in N_{R_j, \epsilon_0}^c$.
		Since 
		\bea
		&& \min_{S_j\subseteq \{1,\ldots,j-1\}: \atop 0<|S_j|\le R_j } \pi_\alpha(M_1 \le d_j \le M_2 \mid S_j, \bfX_n)  \\
		&\le& \sum_{S_j\subseteq \{1,\ldots,j-1\}: \atop 0<|S_j|\le R_j }  \pi_\alpha(M_1 \le d_j \le M_2 \mid S_j, \bfX_n) \pi_\alpha(S_j \mid \bfX_n)  \\
		&=& \pi_\alpha(M_1\le d_j \le M_2 \mid \bfX_n),
		\eea
		it suffices to prove that
		\bean\label{d_supp_M12}
		\pi_\alpha(M_1 \le d_j \le M_2 \mid S_j, \bfX_n)  &\ge& 1 - 2e^{-n C_{\alpha,\epsilon_0} }  
		\eean
		for any $j$ and $S_j \subseteq \{1,\ldots, j-1\}$ such that $0<|S_j|\le R_j$.
		Note that 
		\bea
		\what{d}_{S_j}&\le& \what{\V}(X_j) \,\,\le\,\, C_{\max}\epsilon_0^{-1} , \\
		\what{d}_{S_j}^{-1} &\le& \Big\| \what{\V}^{1/2}(Z_{S_j \cup \{j\} }) \cdot \binom{- \what{a}_{S_j}}{1} \Big\|_2^{-2} \\
		&\le& \lambda_{\min}(\what{\V}^{1/2}(Z_{S_j \cup \{j\} }) )^{-1}  \,\,\le\,\, C_{\min}^{-1} \epsilon_0^{-1},
		\eea
		where $C_{\max} = (1+2\epsilon_0)^2$ and $C_{\min} = (1- 2\epsilon_0)^2$.
		It is easy to check that $d_j^{-1} \mid S_j, \bfX_n \sim Gamma((\alpha n + \nu_0 )/2 , \alpha n \,\what{d}_{S_j}/2 )$, where $Gamma(a,b)$ is the gamma distribution with the shape parameter $a>0$ and rate parameter $b>0$.
		Note that
		\bea
		&& \pi_\alpha(d_j < M_1 \mid S_j, \bfX_n ) \\
		&=& \pi_\alpha  \left(d_j^{-1} - \frac{\alpha n + \nu_0}{\alpha n}\what{d}_{S_j}^{-1} > M_1^{-1} - \frac{\alpha n + \nu_0}{\alpha n}\what{d}_{S_j}^{-1} \mid S_j, \bfX_n  \right)\\
		&\le& \pi_\alpha  \left(d_j^{-1} - \frac{\alpha n+ \nu_0}{\alpha n}\what{d}_{S_j}^{-1} > M_1^{-1} - \frac{\alpha n + \nu_0}{\alpha n} C_{\min}^{-1} \epsilon_0^{-1}  \mid S_j, \bfX_n  \right)
		\eea
		If $Y$ is a sub-gamma distribution with variance factor $\nu$ and scale parameter $c$,
		\bean\label{subgamma}
		P\left(Y> \sqrt{2\nu t} + ct \right) \vee P\left(Y < -\sqrt{2\nu t} - ct \right) &\le& e^{-t}
		\eean
		for all $t>0$, by the page 29 of \cite{boucheron2013concentration}, where a centered $Gamma(a,b)$ random variable follows the sub-gamma distribution with $\nu=a/b^2$ and $c=1/b$. 
		Thus, by \eqref{subgamma} with $t= \alpha n ( \epsilon_0/2)^2$,
		\bea
		e^{- \alpha n ( \epsilon_0/2)^2} 
		&\ge& \pi_\alpha \left(d_j^{-1} - \frac{\alpha n + \nu_0}{\alpha n}\what{d}_{S_j}^{-1} > \what{d}_{S_j}^{-1} \Big(\epsilon_0 \sqrt{\frac{\alpha n + \nu_0}{\alpha n}} + \frac{\epsilon_0^2}{2}  \Big)  \,\Big|\, S_j, \bfX_n  \right) \\
		&\ge& \pi_\alpha \left(d_j^{-1} - \frac{\alpha n+ \nu_0}{\alpha n}\what{d}_{S_j}^{-1} >  C_{\min}^{-1} \Big( \sqrt{\frac{\alpha n + \nu_0}{\alpha n}} + \frac{\epsilon_0}{2}  \Big)  \,\Big|\, S_j, \bfX_n  \right).
		\eea
		Note that for all sufficiently large $n$ and small $\epsilon_0$,
		\bea
		\frac{\alpha n + \nu_0}{\alpha n} C_{\min}^{-1} \epsilon_0^{-1} + C_{\min}^{-1} \Big( \sqrt{\frac{\alpha n + \nu_0}{\alpha n}} + \frac{\epsilon_0}{2}  \Big)
		&\le& (1-2\epsilon_0)^{-4} \epsilon_0^{-1},
		\eea
		which implies
		\bean\label{d_supp_M12a}
		\pi_\alpha(d_j < M_1 \mid S_j, \bfX_n )
		&\le& e^{- \alpha n ( \epsilon_0/2)^2}
		\eean
		provided that $M_1 \le (1-2\epsilon_0)^4\epsilon_0$. 
		On the other hand, note that
		\bea
		\pi_\alpha(d_j > M_2 \mid S_j, \bfX_n ) 
		&=& \pi_\alpha  \left(d_j^{-1} - \frac{\alpha n +\nu_0}{\alpha n}\what{d}_{S_j}^{-1} < M_2^{-1} - \frac{\alpha n  +\nu_0}{\alpha n}\what{d}_{S_j}^{-1} \mid S_j, \bfX_n  \right)
		\eea
		and by \eqref{subgamma} with $t= \alpha n(\epsilon_0/2)^2$,
		\bea
		e^{-\alpha n (\epsilon_0/2)^2} 
		&\ge& \pi_\alpha \left(d_j^{-1} - \frac{\alpha n +\nu_0}{\alpha n}\what{d}_{S_j}^{-1} < -\what{d}_{S_j}^{-1} \Big( \epsilon_0 \sqrt{\frac{\alpha n + \nu_0}{\alpha n}} + \frac{\epsilon_0^2}{2} \Big)  \,\Big|\,  S_j, \bfX_n  \right) .
		\eea
		Similarly, they imply
		\bean\label{d_supp_M12b}
		\pi_\alpha(d_j > M_2 \mid S_j, \bfX_n ) 
		&\le& e^{- \alpha n ( \epsilon_0/2)^2}
		\eean
		for all sufficiently large $n$ and small $\epsilon_0$, provided that $M_2 \ge (1+2\epsilon_0)^4 \epsilon_0^{-1}$. 
		
		Hence, \eqref{d_supp_M12a} and \eqref{d_supp_M12b} imply the desired result \eqref{d_supp_M12} with $C_{\alpha,\epsilon_0} = \alpha(\epsilon_0/2)^2$.
	\end{proof}

	\begin{lemma}\label{lemma:Dn_post}
		For given positive constants $0<\alpha < 1$, $0<\epsilon_0< 1/2$, $C_{\rm bm} > c_2+2$ and an integer $s_0$, assume model \eqref{model} and the ESC prior with Condition \hyperref[condP]{\rm(P)} and $\nu_0^2 = O(n\log p)$.
		If $s_0 \log p =o(n)$, then
		\bea
		\sup_{\Omega_{0n}\in \calU_p^*} \bbE_0 \left[\pi_\alpha \left( \|D_n^{-1} - \what{D}_n^{-1} \| \ge K_{2}  \left(\frac{\log p}{n} \right)^{1/2}  \,\,\Big|\,\, \bfX_n \right)\right]
		&=& o(1)
		\eea
		for some constant $K_2>0$.
	\end{lemma}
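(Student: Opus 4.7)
The plan is to exploit three facts that are already available in the paper: (i) Theorem 3.1 allows us to restrict attention to the event $\{S_{A_n}=S_{A_{0n}}\}$; (ii) conditional on $S_{0j}$ and the data, each $d_j^{-1}$ has an explicit $\mathrm{Gamma}((\alpha n+\nu_0)/2,\,\alpha n\widehat{d}_{S_{0j}}/2)$ posterior, so $d_j^{-1}-\widehat{d}_{S_{0j}}^{-1}$ decomposes into a centered sub-gamma fluctuation plus a deterministic bias coming from the Gamma shape-rate ratio; (iii) on $N_{S_0,\epsilon_0}^c$ (Lemma 10.4), the quantities $\widehat{d}_{S_{0j}}$ and $\widehat{d}_{S_{0j}}^{-1}$ are uniformly bounded by constants depending only on $\epsilon_0$.

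First, since $D_n$ and $\widehat{D}_n$ are diagonal, $\|D_n^{-1}-\widehat{D}_n^{-1}\|=\max_{1\le j\le p}|d_j^{-1}-\widehat{d}_{S_{0j}}^{-1}|$. By Theorem 3.1 and Lemma 10.4 (together with $s_0\log p=o(n)$), it suffices to bound, with $\bbP_0$-probability tending to one and uniformly over $\bfX_n\in N_{S_0,\epsilon_0}^c$ on the event $\{S_{A_n}=S_{A_{0n}}\}$, the posterior probability of $\max_j|d_j^{-1}-\widehat{d}_{S_{0j}}^{-1}|\ge K_2\sqrt{\log p/n}$.

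Next, conditional on $S_{0j}$ and $\bfX_n$, write
\[
d_j^{-1}-\widehat{d}_{S_{0j}}^{-1}
\;=\;\Bigl(d_j^{-1}-\tfrac{\alpha n+\nu_0}{\alpha n\,\widehat{d}_{S_{0j}}}\Bigr)+\tfrac{\nu_0}{\alpha n\,\widehat{d}_{S_{0j}}}.
\]
The bracketed term is a centered Gamma variate, hence sub-gamma with variance factor $\nu_j=2(\alpha n+\nu_0)/(\alpha n\widehat{d}_{S_{0j}})^2$ and scale $c_j=2/(\alpha n\widehat{d}_{S_{0j}})$; on $N_{S_0,\epsilon_0}^c$ these satisfy $\nu_j\lesssim 1/n$ and $c_j\lesssim 1/n$ uniformly in $j$. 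Applying the sub-gamma two-sided tail inequality (display \eqref{subgamma} from Lemma 10.5) with $t=C\log p$ gives $|d_j^{-1}-E_\alpha(d_j^{-1}\mid S_{0j},\bfX_n)|\lesssim\sqrt{\log p/n}+\log p/n\lesssim\sqrt{\log p/n}$ with posterior probability at least $1-2p^{-C}$; a union bound over $j=1,\ldots,p$ then controls the fluctuation term at the target rate.

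The deterministic bias term is bounded on $N_{S_0,\epsilon_0}^c$ by $\nu_0/(\alpha n\widehat{d}_{S_{0j}})\lesssim \nu_0/n$, and the assumption $\nu_0^2=O(n\log p)$ guarantees $\nu_0/n=O(\sqrt{\log p/n})$. Combining the fluctuation and bias bounds, choosing $K_2$ large enough, and finally integrating against the posterior for $\{S_j\}$ (the result holds for each configuration in $\{S_{A_n}=S_{A_{0n}}\}$, which carries posterior mass $1-o(1)$ in $\bbP_0$-expectation by Theorem 3.1) yields the claim. The only mild subtlety, and the only step requiring care beyond book-keeping, is the insertion of the $\nu_0/n$ bias: it is precisely this piece that forces the hyperparameter condition $\nu_0^2=O(n\log p)$, which is why the lemma is stated with that extra assumption rather than the weaker $\nu_0=o(n)$ used in Condition (P).
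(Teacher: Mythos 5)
Your proposal is correct and follows essentially the same route as the paper: after restricting to $S_j=S_{0j}$ via selection consistency and to the good event of Lemma 10.4, the paper applies the Laurent--Massart deviation inequality to $\alpha n\,\what{d}_{S_{0j}}\,d_j^{-1}\sim\chi^2_{\alpha n+\nu_0}$ with $x=C\log p$, which is exactly your sub-gamma bound for the Gamma posterior of $d_j^{-1}$ in different clothing, and it isolates the same $\nu_0/(\alpha n\,\what{d}_{S_{0j}})$ bias term whose control is the sole reason for the hypothesis $\nu_0^2=O(n\log p)$. No gaps.
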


	\begin{proof}
		Let $\Omega_{0n}\in \calU_p^*$ and $N_{S_0,\epsilon_0}$ be the set defined at Lemma \ref{lemma:Nsets4}.
		Similar to the proof of Lemma \ref{lemma:An_post}, if $s_0\log p =o(n)$, we can always concentrate on $d_j \mid S_j = S_{0j},  \bfX_n \overset{ind}{\sim} IG((\alpha n+\nu_0)/2, \alpha n \what{d}_{S_{0j}}/2 )$ for all $j=1,\ldots,p$.
		Then, $\alpha n \,\what{d}_{S_{0j}} d_j^{-1} \mid S_j=S_{0j},\bfX_n \overset{ind}{\sim} \chi^2_{\alpha n + \nu_0}$  for all $j=1,\ldots,p$.
		By Lemma 1 in \cite{laurent2000adaptive}, $P(\chi_k^2 -k \ge 2\sqrt{kx} + 2x) \le \exp(-x)$ and $P(k- \chi_k^2 \ge 2\sqrt{kx}) \le \exp(-x)$ for all $x>0$.  
		Thus,
		\bea
		&& \exp (-x) \\
		&\ge& \pi_\alpha \left( \alpha n \what{d}_{S_{0j}} d_j^{-1} - (\alpha n + \nu_0) \ge 2\sqrt{(\alpha n + \nu_0) x} + 2x \mid S_j=S_{0j}, \bfX_n  \right) \\
		&=& \pi_\alpha \left( d_j^{-1} - \frac{\alpha n + \nu_0}{\alpha n}\what{d}_{S_{0j}}^{-1} \ge \frac{2}{\alpha n} \what{d}_{S_{0j}}^{-1} \sqrt{(\alpha n + \nu_0) x} + \frac{2x}{\alpha n}\what{d}_{S_{0j}}^{-1} \mid S_j=S_{0j}, \bfX_n  \right) \\
		&=& \pi_\alpha \left( d_j^{-1} - \what{d}_{S_{0j}}^{-1} \ge \frac{\what{d}_{S_{0j}}^{-1}}{ \alpha n} \Big[2 \sqrt{(\alpha n + \nu_0) x} + 2x + \nu_0 \Big] \mid S_j=S_{0j}, \bfX_n  \right) \\
		&\ge& \pi_\alpha \left( d_j^{-1} - \what{d}_{S_{0j}}^{-1} \ge \frac{(1-2\epsilon_0)^{-2}}{\epsilon_0 \alpha n} \Big[2 \sqrt{(\alpha n + \nu_0) x} + 2x + \nu_0 \Big] \mid S_j=S_{0j}, \bfX_n  \right)I_{N_{S_0,\epsilon_0}^c}.
		\eea
		Similarly, also note that
		\bea
		\exp(-x) 
		&\ge& \pi_\alpha \left( - \alpha n \what{d}_{S_{0j}} d_j^{-1} + (\alpha n + \nu_0) \ge 2\sqrt{(\alpha n + \nu_0) x}  \mid S_j=S_{0j}, \bfX_n  \right) \\
		&\ge& \pi_\alpha \left( -d_j^{-1} + \what{d}_{S_{0j}}^{-1} \ge \frac{\what{d}_{S_{0j}}^{-1}}{ \alpha n} \Big[2 \sqrt{(\alpha n + \nu_0) x} - \nu_0 \Big] \mid S_j=S_{0j}, \bfX_n  \right) \\
		&\ge& \pi_\alpha \left( -d_j^{-1} + \what{d}_{S_{0j}}^{-1} \ge \frac{(1-2\epsilon_0)^{-2}}{\epsilon_0 \alpha n}\cdot 2 \sqrt{(\alpha n + \nu_0) x}  \mid S_j=S_{0j}, \bfX_n  \right).
		\eea
		Let $x=C \log p$ with some constant $C>1$, then 
		\bea
		\frac{(1-2\epsilon_0)^{-2}}{\epsilon_0 \alpha n} \Big[2 \sqrt{(\alpha n + \nu_0) x} + 2x + \nu_0 \Big]
		&\le& K_2 \left( \frac{\log p}{n} \right)^{1/2}
		\eea
		for all sufficiently large $n$ and some large constant $K_2>0$.
		Thus, we have
		\bea
		&&\bbE_0 \left[\pi_\alpha \left( \|D_n^{-1} - \what{D}_n^{-1} \| \ge K_{2}  \left(\frac{\log p}{n} \right)^{1/2}  \,\,\Big|\,\, \bfX_n \right)\right]\\
		&\le& \bbE_0 \left[\pi_\alpha \left( \max_j |d_j^{-1} - \what{d}_{S_{0j}}^{-1} | \ge K_{2}  \left(\frac{\log p}{n} \right)^{1/2}  \,\,\Big|\,\, \bfX_n \right) I_{N_{S_0,\epsilon_0}^c} \right]
		+ \bbP_0 (N_{S_0,\epsilon_0}) \\
		&\le& 2p \cdot \exp(-C \log p) + o(1) \,\,=\,\, o(1). 
		\eea
	\end{proof}

	\begin{proof}[Proof of Theorem 3.6]
		Let $\Omega_{0n} \in \calU_p^*$, $\epsilon_n = s_0^{3/4}\sqrt{(s_0 + \log p)/n}$ and assume $s_0^{3/2}(s_0 + \log p) = o(n)$. 
		Consider the spectral norm case first.
		Then,
		\bea
		&& \bbE_0 \left[ \pi_\alpha \big( \|\Omega_n -\Omega_{0n}\| \ge K_{\rm conv} \epsilon_n   \mid \bfX_n\big)  \right] \\
		&\le& \bbE_0 \left[ \pi_\alpha \Big( \|\Omega_n -\what{\Omega}_{n}\| \ge \frac{K_{\rm conv}}{2} \epsilon_n   \mid \bfX_n\Big)   \right] 
		+ \bbP_0 \Big( \|\what{\Omega}_n -\Omega_{0n}\| \ge \frac{K_{\rm conv}}{2} \epsilon_n \Big)\\
		&=& \bbE_0 \left[ \pi_\alpha \Big( \|\Omega_n -\what{\Omega}_{n}\| \ge \frac{K_{\rm conv}}{2} \epsilon_n   \mid \bfX_n\Big)  \right] 
		+  o(1)
		\eea
		for some large constant $K_{\rm conv}>0$ by Lemma \ref{lemma:Nsets4} and Lemma \ref{lemma:freq_conv}, so it suffices to prove
		\bea
		\bbE_0 \left[ \pi_\alpha \Big( \|\Omega_n -\what{\Omega}_{n}\| \ge \frac{K_{\rm conv}}{2} \epsilon_n   \mid \bfX_n\Big)  \right] 
		&=& o(1).
		\eea
		By applying \eqref{whatO_diff_O}, Lemma \ref{lemma:An_post} and Lemma \ref{lemma:Dn_post}, it is easy to prove the above result for some large constant $K_{\rm conv}>0$. 
		
		Let $\epsilon_n^* = \|I_p - A_{0n}\|_\infty s_0 \sqrt{(s_0 + \log p)/n}$ and assume $s_0 (s_0 + \log p) = o(n)$.
		Note that
		\bea
		&& \bbE_0 \left[ \pi_\alpha \big( \|\Omega_n -\Omega_{0n}\|_\infty \ge K_{\rm conv} \epsilon_n^*   \mid \bfX_n\big)  \right] \\
		&\le& \bbE_0 \left[ \pi_\alpha \Big( \|\Omega_n -\what{\Omega}_{n}\|_\infty \ge \frac{K_{\rm conv}}{2} \epsilon_n^*   \mid \bfX_n\Big)   \right] 
		+ \bbP_0 \Big( \|\what{\Omega}_n -\Omega_{0n}\|_\infty \ge \frac{K_{\rm conv}}{2} \epsilon_n^* \Big)\\
		&=& \bbE_0 \left[ \pi_\alpha \Big( \|\Omega_n -\what{\Omega}_{n}\|_\infty \ge \frac{K_{\rm conv}}{2} \epsilon_n^*   \mid \bfX_n\Big)  \right] 
		+  o(1)
		\eea
		for some large constant $K_{\rm conv}>0$ by Lemma \ref{lemma:Nsets4} and Lemma \ref{lemma:freq_conv}.
		Again, the last display is of order $o(1)$ by \eqref{whatO_diff_01}, Lemma \ref{lemma:An_post} and Lemma \ref{lemma:Dn_post}.
	\end{proof}

	\section{Proofs of Posterior Convergence Rates for Cholesky Factors}\label{sec:proof_conv_chol}

	\begin{proof}[Proof of Theorem 3.2]
		Consider $\Omega_{0n} \in \calU_p$ and assume $s_0\log p =o(n)$.
		Then,
		\bean
		&& \bbE_0 \left[ \pi_\alpha \Big( \|A_n - A_{0n}\|_\infty \ge K_{\rm chol}\sqrt{s_0} \Big( \frac{s_0 + \log p}{n} \Big)^{1/2}   \mid \bfX_n \Big)  \right] \nonumber \\
		&\le& \bbE_0 \left[ \pi_\alpha \Big( \|A_n - \what{A}_{n}\|_\infty \ge \frac{K_{\rm chol}}{2}\sqrt{s_0} \Big( \frac{s_0 + \log p}{n} \Big)^{1/2}   \mid \bfX_n \Big)  \right] \label{A_inf_1} \\
		&+& \bbP_0 \Big( \| \what{A}_n - A_{0n} \|_\infty \ge  \frac{K_{\rm chol}}{2}\sqrt{s_0} \Big( \frac{s_0 + \log p}{n} \Big)^{1/2}  \Big). \label{A_inf_2}
		\eean
		Note that \eqref{A_inf_1} is of order $o(1)$ for some constant $K_{\rm chol} >0$ by Lemma \ref{lemma:An_post}.
		On the other hand, \eqref{A_inf_2} is also of order $o(1)$ for some constant $K_{\rm chol} >0$ by Lemma \ref{lemma:Nsets4} and \eqref{Ahat_diff_0_infty}.
		Note that $\|A_n - A_{0n}\|_F^2 = \sum_{j=2}^p \|a_j - a_{0j} \|_2^2 $ and
		\bea
		&& \bbE_0 \left[ \pi_\alpha \Big(  \sum_{j=2}^p \|a_j - a_{0j}\|_2^2 \ge K_{\rm chol} \sum_{j=2}^p \Big( \frac{s_{0j} + \log j}{n} \Big)   \mid \bfX_n \Big)  \right] \\
		&\le&  \sum_{j=2}^p \bbE_0 \left[ \pi_\alpha \Big(  \|a_j - \what{a}_{j}\|_2^2 \ge \frac{K_{\rm chol}}{2}\Big( \frac{s_{0j} + \log j}{n} \Big)   \mid \bfX_n \Big)  \right]  \\
		&+& \sum_{j=2}^p \bbP_0 \Big(  \| \what{a}_j - a_{0j} \|_2^2 \ge  \frac{K_{\rm chol}}{2} \Big( \frac{s_{0j} + \log j}{n} \Big)  \Big),
		\eea
		where the last displays are of order $o(1)$ for some constant $K_{\rm chol}>0$. 
		It is easy to check from the slight modifications of Lemma \ref{lemma:Nsets4}, Lemma \ref{lemma:An_post} and the proof of Lemma \ref{lemma:freq_conv}, by using $s_{0j}$ and $\log j$ instead of $s_0$ and $\log p$. 
	\end{proof}

	\begin{lemma}\label{lemma:denom}
		For a given constant $0<\epsilon_0< 1/2$ and an integer $s_0$, assume model \eqref{model} and the MESC prior with Condition \hyperref[condP]{\rm(P)} and $\nu_0 =O(1)$.
		Let $B_n = \Big\{ \bfX_n : \big| \|\tilde{X}_j \|_2^2 - \|\bfX_{S_{0j}} \what{a}_{S_{0j}}\|_2^2  - n\, d_{0j} \big| \ge [ j^2 \log n ]^{-1} \Big\}$.
		For given constants $0<\alpha<1$, $M_1 \le (1-2\epsilon_0)^4\epsilon_0$, $M_2\ge (1+2\epsilon_0)^4 \epsilon_0^{-1}$ and $2 \le j \le p$, 
		\bea
		D_{nj} &:=& \sum_{S_j: 0< |S_j|\le R_j} \int_{M_1}^{M_2} \int R_{nj}(a_j, d_j)^\alpha \pi(a_{S_j} \mid d_j, S_j) \pi_j(S_j) \pi(d_j) d a_{S_j} \delta_0(d a_{j,S_j^c}) d d_j \\
		&\ge&  \pi_j (S_{0j}) \cdot \left( 1+ \frac{\alpha}{\gamma} \right)^{-\frac{s_{0j}}{2}} \frac{C_{\rm den}}{n j^2 \log n} \quad \quad \text{ on the event } B_n,
		\eea
		for some constant $C_{\rm den} = C_{\rm den}(\epsilon_0, \nu_0, \nu_0')>0$ for any $\Omega_{0n}\in \calU_p^0$.
	\end{lemma}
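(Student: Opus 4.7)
The strategy is a standard three-step computation: restrict the outer sum to the single term $S_j = S_{0j}$, integrate out $a_{S_{0j}}$ in closed form by Gaussian conjugacy, and lower bound the resulting one-dimensional integral over $d_j$ on a narrow window around its maximum.

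Since every summand in $D_{nj}$ is non-negative, $D_{nj} \ge \pi_j(S_{0j})$ times the $(a_{S_{0j}}, d_j)$ integral at $S_j = S_{0j}$. Writing $V = \bfX_{S_{0j}}^T \bfX_{S_{0j}}$ and using the regression identity $\|\tilde X_j - \bfX_{S_{0j}} a\|_2^2 = n \what{d}_{S_{0j}} + (a - \what a_{S_{0j}})^T V (a - \what a_{S_{0j}})$, the quadratic in $a_{S_{0j}}$ inside $L_{nj}(a_j,d_j)^\alpha \cdot \pi(a_{S_{0j}} \mid d_j, S_{0j})$ has precision $(\alpha+\gamma)V/d_j$; a direct Gaussian integration produces the factor $(1+\alpha/\gamma)^{-s_{0j}/2}$ exactly and reduces the problem to lower bounding
\begin{equation*}
\int_{M_1}^{M_2} h(d_j)\, \pi(d_j)\, d d_j, \qquad h(d_j) := (d_j/d_{0j})^{-\alpha n/2} \exp\{\alpha r_0/(2 d_{0j}) - \alpha r/(2 d_j)\},
\end{equation*}
with $r := n \what{d}_{S_{0j}}$ and $r_0 := \|\tilde X_j - \bfX_{S_{0j}} a_{0, S_{0j}}\|_2^2$.

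For this one-dimensional integral, $\log h$ is maximized at $d_j^\star := \what{d}_{S_{0j}} = r/n$, and the identity $r_0 - r = (\what a_{S_{0j}} - a_{0, S_{0j}})^T V (\what a_{S_{0j}} - a_{0, S_{0j}}) \ge 0$ yields $h(d_j^\star) \ge 1$. Computing $(\log h)''(d_j^\star) = -\alpha n/(2 (d_j^\star)^2) = O(n)$ (since, on $B_n$, $\what{d}_{S_{0j}}$ is pinned between positive constants determined by $\epsilon_0$), the quadratic deviation of $\log h$ on a window of radius $w := (n j^2 \log n)^{-1}$ about $d_j^\star$ is $O(nw^2) = o(1)$, so $h \ge 1/2$ there. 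The MESC prior density $\pi(d_j) \propto d_j^{-\nu_0/2-1} e^{-\nu_0'/d_j}$ is continuous and strictly positive on the compact interval $[M_1, M_2]$ under $\nu_0, \nu_0' = O(1)$, hence bounded below there by some constant $c(\epsilon_0, \nu_0, \nu_0') > 0$. Multiplying gives $\int h \pi\, d d_j \ge (c/2) w$, which is the claimed bound.

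No single step is delicate; the one routine subtlety is verifying that the width-$2w$ window about $d_j^\star$ lies inside $[M_1, M_2]$, which is precisely the role of the event $B_n$ combined with the definitions $M_1 \le (1-2\epsilon_0)^4 \epsilon_0$, $M_2 \ge (1+2\epsilon_0)^4 \epsilon_0^{-1}$ and the (A1)-bound $d_{0j} \in (\epsilon_0, \epsilon_0^{-1})$. The main obstacle is essentially the bookkeeping that keeps the factors $\pi_j(S_{0j})$, $(1+\alpha/\gamma)^{-s_{0j}/2}$, and $1/(n j^2 \log n)$ cleanly separated from the residual constant absorbed into $C_{\rm den}$.
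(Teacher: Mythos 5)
Your reduction is the same as the paper's: restrict the sum to $S_j=S_{0j}$, integrate out $a_{S_{0j}}$ by conjugacy to produce $(1+\alpha/\gamma)^{-s_{0j}/2}$, and reduce to a one-dimensional integral in $d_j$; your computations that $\log h$ is maximized at $d_j^\star=\what{d}_{S_{0j}}=r/n$ and that $h(d_j^\star)\ge 1$ (via $r_0\ge r$ and $x-\log x-1\ge 0$) are also correct. The gap is in the final Laplace-window step. The event $B_n$ states $|n\what{d}_{S_{0j}}-n\,d_{0j}|\ge [j^2\log n]^{-1}$, i.e.\ it bounds the distance between $\what{d}_{S_{0j}}$ and $d_{0j}$ from \emph{below}; it places no upper bound on that distance and says nothing about where $\what{d}_{S_{0j}}$ lies. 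So your claim that $B_n$ (with (A1) and the definitions of $M_1,M_2$) guarantees the width-$2w$ window about $d_j^\star$ sits inside $[M_1,M_2]$ is false: on $B_n$ one can have $\what{d}_{S_{0j}}>M_2$ or $\what{d}_{S_{0j}}<M_1$, in which case your window carries no prior mass and the argument yields nothing. Relatedly, $(\log h)''(d_j^\star)=-\alpha n/(2(d_j^\star)^2)$ is only $O(n)$ if $\what{d}_{S_{0j}}$ is bounded away from zero, which again is not implied by $B_n$ (the paper controls $\what{d}_{S_{0j}}$ only on separate restricted-eigenvalue events, not here). Note also that if your containment claim were true, $B_n$ would play no role at all in producing the rate $1/(nj^2\log n)$ — a sign that its purpose has been misread.

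The correct use of $B_n$, and the easy repair of your argument, is the following: $h$ is unimodal with its unique maximum at $r/n$, and $h(d_{0j})=e^{\alpha(r_0-r)/(2d_{0j})}\ge 1$; hence $h\ge 1$ on the entire interval between $d_{0j}$ and $\what{d}_{S_{0j}}$ (it is monotone toward the mode on that interval). The anchor $d_{0j}\in[\epsilon_0,\epsilon_0^{-1}]$ lies well inside $(M_1,M_2)$ by the choice of $M_1,M_2$, so the intersection of this interval with $[M_1,M_2]$ has length at least $\min\{|\what{d}_{S_{0j}}-d_{0j}|,\, d_{0j}-M_1,\, M_2-d_{0j}\}$, and it is $B_n$ that supplies the lower bound $(nj^2\log n)^{-1}$ on the first term. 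Combining with your (correct) lower bound on $\pi(d_j)$ over $[M_1,M_2]$ gives the stated estimate. This is exactly the paper's argument, phrased there as the statement that the ratio of inverse-gamma densities exceeds one between $d_{0j}$ and $\what{d}_{S_{0j}}$.
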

	\begin{proof}
		Note that
		\bea
		&& D_{nj}  \\
		&\ge& \pi_j (S_{0j}) \int_{M_1}^{M_2} \int R_{nj}(a_j, d_j)^\alpha \pi(a_{S_{0j}} \mid d_j ,S_{0j})  \pi(d_j) d a_{S_{0j}} \delta_0(d a_{j,S_{0j}^c}) d d_j  \\
		&=& \pi_j (S_{0j}) \int_{M_1}^{M_2} \int \Big( \frac{d_{0j}}{d_{j}} \Big)^{ \frac{\alpha n}{2}} e^{-\frac{\alpha}{2} \big\{ d_j^{-1}\|\tilde{X}_j - \bfX_{S_{0j}}a_{S_{0j}}\|_2^2 - d_{0j}^{-1} \|\tilde{X}_j - \bfX_{S_{0j}} a_{0,S_{0j}}\|_2^2 \big\} } \\
		&& \times \,\, \det \Big[ 2\pi \frac{d_j}{\gamma} (\bfX_{S_{0j}}^T \bfX_{S_{0j}} )^{-1} \Big]^{-\frac{1}{2}} e^{-\frac{1}{2}(a_{S_{0j}} - \what{a}_{S_{0j}} )^T \frac{\gamma}{d_j} \bfX_{S_{0j}}^T \bfX_{S_{0j}}  (a_{S_{0j}} - \what{a}_{S_{0j}} ) } \pi(d_j) d a_{S_{0j}}  d d_j \\
		&=& \pi_j (S_{0j}) \left( 1+ \frac{\alpha}{\gamma} \right)^{-\frac{s_{0j}}{2}}  e^{ \frac{\alpha}{2 d_{0j}} \|\bfX_{S_{0j}}(\what{a}_{S_{0j}} - a_{0,S_{0j}} )\|_2^2  } \\
		&& \times \,\, \int_{M_1}^{M_2}   \Big( \frac{d_{0j}}{d_{j}} \Big)^{ \frac{\alpha n}{2}} e^{-\frac{\alpha}{2} (d_j^{-1} - d_{0j}^{-1}) \big\{ \|\tilde{X}_j\|_2^2 - \|\bfX_{S_{0j}} \what{a}_{S_{0j}}\|_2^2 \big\}   } \pi(d_j) d d_j \\
		&\ge& \pi_j (S_{0j}) \left( 1+ \frac{\alpha}{\gamma} \right)^{-\frac{s_{0j}}{2}} \int_{M_1}^{M_2}   \Big( \frac{d_{0j}}{d_{j}} \Big)^{ \frac{\alpha n}{2}} e^{-\frac{\alpha}{2} (d_j^{-1} - d_{0j}^{-1}) \big\{ \|\tilde{X}_j\|_2^2 - \|\bfX_{S_{0j}} \what{a}_{S_{0j}}\|_2^2 \big\}  } \pi(d_j) d d_j.
		\eea
		The second equality follows from the integration of the multivariate normal distribution.
		Denote $IG(x \mid a,b)$ as the density function of $IG(a,b)$ at $x$, then
		\bean
		&& \int_{M_1}^{M_2}   \Big( \frac{d_{0j}}{d_{j}} \Big)^{ \frac{\alpha n}{2}} e^{-\frac{\alpha}{2} (d_j^{-1} - d_{0j}^{-1}) \big\{ \|\tilde{X}_j\|_2^2 - \|\bfX_{S_{0j}} \what{a}_{S_{0j}}\|_2^2 \big\} } \pi(d_j) d d_j  \nonumber \\
		&=& \int_{M_1}^{M_2}  \frac{IG \big(d_j \mid \alpha n/2+1, \alpha/2 \big\{ \|\tilde{X}_j\|_2^2 - \|\bfX_{S_{0j}} \what{a}_{S_{0j}}\|_2^2 \big\} \big) }{IG \big(d_{0j} \mid \alpha n/2+1, \alpha/2 \big\{ \|\tilde{X}_j\|_2^2 - \|\bfX_{S_{0j}} \what{a}_{S_{0j}}\|_2^2 \big\} \big)} \cdot \pi(d_j) d d_j. \label{ratio_IG}
		\eean
		Note that the ratio of the inverse-gamma density functions is larger than 1 if $d_j$ is between $d_{0j}$ and $\big\{ \|\tilde{X}_j\|_2^2 - \|\bfX_{S_{0j}} \what{a}_{S_{0j}}\|_2^2 \big\}/n$.
		Since we focus on the event $ \Big\{ \bfX_n : \big| \|\tilde{X}_j \|_2^2 - \|\bfX_{S_{0j}} \what{a}_{S_{0j}}\|_2^2  - n\, d_{0j} \big| \ge [ j^2 \log n ]^{-1} \Big\}$, if $d_{0j} \ge \big\{ \|\tilde{X}_j\|_2^2 - \|\bfX_{S_{0j}} \what{a}_{S_{0j}}\|_2^2 \big\}/n$, \eqref{ratio_IG} is bounded below by
		\bea
		\int_{M_1 \vee \big\{ \|\tilde{X}_j\|_2^2 - \|\bfX_{S_{0j}} \what{a}_{S_{0j}}\|_2^2 \big\}/n}^{d_{0j}} \pi(d_j) d d_j 
		&\ge& \frac{C_{\rm den}}{n j^2 \log n}
		\eea
		for some constant $C_{\rm den} = C_{\rm den}(\epsilon_0, \nu_0, \nu_0')>0$, because we assume $M_1 \le (1-2\epsilon_0)^4\epsilon_0$.
		Similarly, if $d_{0j} \le \big\{ \|\tilde{X}_j\|_2^2 - \|\bfX_{S_{0j}} \what{a}_{S_{0j}}\|_2^2 \big\}/n$, \eqref{ratio_IG} is bounded below by $C_{\rm den}/(n j^2 \log n)$ for some constant $C_{\rm den}>0$.
	\end{proof}

	\begin{lemma}\label{lemma:effect_dim}
		For a given constant $0<\epsilon_0< 1/2$ and an integer $s_0$, assume model \eqref{model} and the MESC prior with Condition \hyperref[condP]{\rm(P)} and $\nu_0 = O(1)$.
		If $s_0 \log p = o(n)$ and $0<\alpha<1$, then
		\bea
		\sum_{j=2}^p \bbE_0 \pi_\alpha \big(S_j \ge C_{\rm dim} s_{0} \mid \bfX_n \big) &=& o(1)
		\eea
		for some constant $C_{\rm dim}>0$ and for any $\Omega_{0n}\in \calU_p^0$.
	\end{lemma}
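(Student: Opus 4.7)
The plan is to control $\pi_\alpha(|S_j|\ge C_{\rm dim} s_0\mid \bfX_n)$ by writing it as a ratio $N_{nj}/D_{nj}$, bounding $D_{nj}$ from below via Lemma 10.1, upper bounding $N_{nj}$ by integrating out $(a_{S_j}, d_j)$ explicitly, and then exploiting the strong prior penalty $c_1^{-|S_j|} p^{-c_2|S_j|}$ with $c_2\ge 2$ to sum over large models.

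First, I would derive the explicit form of the marginal posterior under the MESC prior. Integrating out $a_{S_j}$ with Zellner's $g$-prior centered at $\what{a}_{S_j}$ is a straightforward Gaussian computation that produces a clean factor $(1+\alpha/\gamma)^{-|S_j|/2}$ and collapses the likelihood into a function of $\what{d}_{S_j}$ alone. Integrating out $d_j$ against $IG(\nu_0/2,\nu_0')$ then yields
\bea
\pi_\alpha(S_j\mid \bfX_n) &\propto& \pi_j(S_j)\,(1+\alpha/\gamma)^{-|S_j|/2}\,\bigl(\alpha n\,\what{d}_{S_j}/2 + \nu_0'\bigr)^{-(\alpha n+\nu_0)/2}.
\eea
On the event $B_n$ from Lemma 10.1, the normalizing constant satisfies $D_{nj}\ge \pi_j(S_{0j})(1+\alpha/\gamma)^{-s_{0j}/2}\,C_{\rm den}/(n\, j^2\log n)$, and $\bbP_0(B_n^c)$ decays faster than any polynomial in $n$ by the standard chi-square concentration inequality of \cite{laurent2000adaptive} applied to $\|\tilde{X}_j\|_2^2 - \|\bfX_{S_{0j}}\what{a}_{S_{0j}}\|_2^2$.

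Second, I would write $\pi_\alpha(|S_j|\ge C_{\rm dim} s_0\mid \bfX_n)\le \sum_{S_j:|S_j|\ge C_{\rm dim} s_0} \pi_\alpha(S_j\mid\bfX_n)/\pi_\alpha(S_{0j}\mid\bfX_n)\cdot \pi_\alpha(S_{0j}\mid\bfX_n)$ and take expectation. Each ratio has three factors: the prior ratio $\pi_j(S_j)/\pi_j(S_{0j})$, the $g$-prior penalty $(1+\alpha/\gamma)^{-(|S_j|-s_{0j})/2}$, and the likelihood ratio $\bigl((\alpha n\what{d}_{S_j}/2+\nu_0')/(\alpha n\what{d}_{S_{0j}}/2+\nu_0')\bigr)^{-(\alpha n+\nu_0)/2}$. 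I would split into cases. When $S_j\supseteq S_{0j}$, the likelihood ratio is bounded above by the $-(\alpha n+\nu_0)/2$ power of a $Beta$-type random variable, whose expectation is of polynomial order in $n$ (as in Lemma~8.1). When $S_j\not\supseteq S_{0j}$, I would use the monotonicity $\what{d}_{S_j}\ge \what{d}_{S_j\cup S_{0j}}$ to bound the likelihood ratio by the corresponding ratio for the enlarged set $S_j\cup S_{0j}\supseteq S_{0j}$, then handle the latter by the same chi-square/Beta argument, paying only a polynomial-in-$n$ factor.

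Finally, I would carry out the sum. Using $\binom{j-1}{|S_j|}^{-1}\binom{j-1}{s_{0j}}\le 1$ in the prior ratio and $\sum_{S_j:|S_j|=s}1 \le \binom{j-1}{s}$, the outer sum reduces (schematically) to
\bea
\sum_{j=2}^p n\,j^2\log n\cdot \sum_{s=C_{\rm dim} s_0}^{R_j}\binom{j-1}{s}\binom{j-1}{s_{0j}}^{-1}\bigl(c_1 p^{c_2}\bigr)^{-(s-s_{0j})}\cdot \text{poly}(n),
\eea
and since $c_2\ge 2$ gives a factor $p^{-(s-s_{0j})}$ after absorbing $\binom{j-1}{s}\le j^s\le p^s$, picking $C_{\rm dim}$ large enough ensures the geometric sum is dominated by $p^{-(C_{\rm dim}-1)s_0}$ up to a polynomial prefactor and hence $o(1)$ after summing over $j$. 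The main obstacle will be the case $S_j\not\supseteq S_{0j}$: the quantity $\what{d}_{S_j}$ may be inflated by the unmodeled signal $(I_n-\tilde{P}_{S_j})\tilde{Z}_j a_{0j}$, so bounding the likelihood ratio cleanly requires either the monotonicity trick above or a non-central chi-square computation along the lines of the proof of Theorem 3.1, together with the restricted eigenvalue control supplied by Lemma 9.1 uniformly over all $|S_j|\le R_j$.
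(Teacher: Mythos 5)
Your proposal is correct in substance but follows a genuinely different route from the paper's. The paper never forms the ratio $\pi_\alpha(S_j\mid\bfX_n)/\pi_\alpha(S_{0j}\mid\bfX_n)$ here: it keeps the numerator as an un-integrated functional of the fractional likelihood ratio $R_{nj}(a_j,d_j)^\alpha$, restricts $d_j$ to $[M_1,M_2]$ via Lemma 10.4, lower-bounds the denominator $D_{nj}$ by Lemma 10.1 on the event $B_n$, and upper-bounds $\bbE_0[N_{nj}(|S_j|\ge C_{\rm dim}s_0)]$ by H\"older's inequality combined with the R\'enyi divergence between Gaussians (the H\"older step is forced because the $g$-prior density is centered at the data-dependent $\what a_{S_j}$, so one cannot simply apply Fubini and $\bbE_0[R_{nj}^\alpha]\le 1$). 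You instead integrate out $(a_{S_j},d_j)$ in closed form and bound each $\pi_\alpha(S_j\mid\bfX_n)$ by the ratio to the true model, handling $S_j\supseteq S_{0j}$ by the negative Beta moment of $\what d_{S_j}/\what d_{S_{0j}}$ exactly as in Lemma 8.1, and reducing $S_j\not\supseteq S_{0j}$ to the superset case via the monotonicity $\what d_{S_j}\ge\what d_{S_j\cup S_{0j}}$. That monotonicity step is the key observation: it lets you avoid the non-central chi-square analysis (and hence the beta-min condition) entirely, which is exactly what is needed since the lemma is stated over $\calU_p^0$. Your route is more elementary, is self-normalizing (so Lemmas 10.1 and 10.4 and the event $B_n$ are not actually needed — your first paragraph is redundant given your second), and the combinatorial sum closes the same way in both proofs via $c_2\ge 2$ and a large $C_{\rm dim}$. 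What the paper's H\"older/R\'enyi machinery buys in exchange is reusability: the same expectation bound for the numerator is recycled verbatim in Lemma 10.3 for the contraction-rate numerator, where your Beta-ratio trick would not apply.

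Two small inaccuracies worth fixing, neither fatal. First, $B_n^c$ is a small-ball event (the statistic landing in an interval of length $2[j^2\log n]^{-1}$ around $nd_{0j}$); its probability is controlled by anti-concentration and decays only like $(d_{0j}\,j^2\log n)^{-1}$, not "faster than any polynomial in $n$" via Laurent--Massart — but since your ratio argument does not use $B_n$, this does not matter. Second, the negative Beta moment is of order $(2/(1-\alpha))^{(|S_j\cup S_{0j}|-s_{0j})/2}$, i.e.\ exponential in $|S_j|$ rather than "polynomial in $n$"; it is still dominated by the prior factor $p^{-c_2(|S_j|-s_{0j})}$, and its finiteness requires $|S_j\cup S_{0j}|+\nu_0\le R_j+s_{0j}+\nu_0<n(1-\alpha)$, which you should note holds under Condition (P) and $s_0\log p=o(n)$.
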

	\begin{proof}
		Let $B_n = \Big\{ \bfX_n : \big| \|\tilde{X}_j \|_2^2 - \|\bfX_{S_{0j}} \what{a}_{S_{0j}}\|_2^2  - n\, d_{0j} \big| \ge [ j^2 \log n ]^{-1} \Big\}$ as defined at Lemma \ref{lemma:denom} and
		\bea
		&& N_{nj}(S_j \ge C_{\rm dim} s_0)  \\
		&=& \sum_{S_j: |S_j| \ge C_{\rm dim} s_0 } \int_{M_1}^{M_2} \int R_{nj}(a_j, d_j)^\alpha \pi(a_{S_j} \mid d_j, S_j) \pi_j(S_j) \pi(d_j) d a_{S_j}\delta_0(d a_{j,S_j^c}) d d_j .
		\eea
		By Lemma \ref{lemma:d_supp}, we have
		\bea
		&& \sum_{j=2}^p \bbE_0 \pi_\alpha \big(S_j \ge C_{\rm dim} s_{0} \mid \bfX_n \big) \\
		&\le& \sum_{j=2}^p \bbE_0 \pi_\alpha \big(S_j \ge C_{\rm dim} s_{0} ,\,\, M_1 \le d_j \le M_2 \mid \bfX_n \big) \,\,+ \,\, o(1)
		\eea
		for some constants $M_1 \le (1-2\epsilon_0)^4\epsilon_0$, $M_2 \ge (1+2\epsilon_0)^4\epsilon_0$ and $C_{\alpha, \epsilon_0} >0$ because we assume $s_0\log p =o(n)$.
		In fact, Lemma \ref{lemma:d_supp} assumes the ESC prior, but it is easy to show that it also holds for the MESC prior for some constant $\nu_0'>0$.
		For any $2\le j \le p$, we have
		\bean
		&& \bbE_0 \pi_\alpha \big(S_j \ge C_{\rm dim} s_{0} ,\,\, M_1 \le d_j \le M_2 \mid \bfX_n \big) \nonumber  \\
		&\le& \bbE_0 \big[ N_{nj}(S_j \ge C_{\rm dim} s_0) \big] \frac{e^{C_1 s_{0j}} }{\pi_j(S_{0j})} C_2 n j^2 \log n \label{Sj_dim_1st}  \\
		&& +\,\,  \bbP_0 \Big(  \big| \|\tilde{X}_j \|_2^2 - \|\bfX_{S_{0j}} \what{a}_{S_{0j}}\|_2^2  - n \, d_{0j} \big| \le \frac{1}{ j^2 \log n } \Big) \nonumber
		\eean
		for some positive constants $C_1$ and $C_2$ by Lemma \ref{lemma:denom}.
		Since $d_{0j}^{-1} \big\{ \|\tilde{X}_j \|_2^2 - \|\bfX_{S_{0j}} \what{a}_{S_{0j}}\|_2^2 \big\} \sim \chi_{n - s_{0j}}^2$ under $\bbP_0$ given $\tilde{Z}_j$, 
		\bea
		\sum_{j=2}^p \bbP_0 \Big(  \big| \|\tilde{X}_j \|_2^2 - \|\bfX_{S_{0j}} \what{a}_{S_{0j}}\|_2^2  - n \, d_{0j} \big| \le \frac{1}{ j^2 \log n } \Big)  
		&\le&  \sum_{j=2}^p  \frac{1}{d_{0j}\, j^2 \log n } \,\,=\,\, o(1), 
		\eea
		so we only need to focus on \eqref{Sj_dim_1st}.
		Note that 
		\bea
		&& \bbE_0 \big[ N_{nj}(S_j \ge C_{\rm dim} s_0) \,\big|\, \tilde{Z}_j \big] \\
		&=& \int_{M_1}^{M_2} \int \sum_{S_j: |S_j| \ge C_{\rm dim} s_0 } \pi_j(S_j) \cdot  \bbE_0 \left[ R_{nj}(a_{j}, d_j)^{\alpha} \pi(a_{S_j} \mid d_j, S_j) \,\big|\, \tilde{Z}_j  \right] \pi(d_j) d a_{S_j} \delta_0(d a_{j,S_j^c}) dd_j .
		\eea
		Let $a_{S_j +}$ be a $p$-dimensional vector such that $(a_{S_j+})_{S_j} = a_{S_j}$ and $(a_{S_j+})_l = 0$ for all $l \notin S_j$.
		It is easy to see that
		\bea
		&& \bbE_0 \left[ R_{nj}(a_{S_j+}, d_j)^{\alpha} \pi(a_{S_j} \mid d_j, S_j) \,\big|\, \tilde{Z}_j \right]  \\
		&\le& \left\{ \bbE_0 \big[ R_{nj}(a_{S_j+}, d_j)^{h_1 \alpha} \,\big|\, \tilde{Z}_j  \big] \right\}^{\frac{1}{h_1}} \times \left\{ \bbE_0 \big[ \pi(a_{S_j} \mid d_j, S_j)^{h_2} \,\big|\, \tilde{Z}_j  \big] \right\}^{\frac{1}{h_2}} \\
		&\le& \exp \left( - \frac{h_1\alpha(1 - h_1\alpha)}{2 (h_1\alpha d_{0j} + (1-h_1\alpha)d_j )} \| \tilde{Z}_j (a_{S_j +} - a_{0}) \|_2^2 \right) \\
		&& \times \,\, \exp \left( - \frac{n}{2 h_1} \log \Big[ \frac{h_1\alpha d_{0j} + (1-h_1\alpha)d_j}{d_{0j}^{h_1\alpha} d_j^{1-h_1\alpha} } \Big]  \right) \times  \left\{ \bbE_0 \big[ \pi(a_{S_j} \mid d_j, S_j)^{h_2}  \big] \right\}^{\frac{1}{h_2}}
		\eea
		for any constants $h_1, h_2 >1$ such that $h_1^{-1}+ h_2^{-1} =1$ and $h_1 \alpha <1$.
		The first inequality follows from the H\"{o}lder's inequality, and the second inequality follows from the R\'{e}nyi divergence between two multivariate normal distributions (\cite{hero2001alpha}).
		Note that the first term in the last display is bounded above by
		\bea
		\exp \left( - \frac{h_1\alpha(1 - h_1\alpha)}{2 ( d_{0j} + M_2 )} \|\tilde{Z}_j (a_{S_j +} - a_{0}) \|_2^2 \right) 
		&\le& 1
		\eea
		for any $M_1\le d_j \le M_2$.
		Also note that the second term in the last display is bounded above by 1 because 
		\bea
		\log \big( h_1\alpha d_{0j} + (1-h_1\alpha)d_j \big) \ge h_1\alpha \log d_{0j} + (1-h_1\alpha) \log d_j
		\eea
		by the Jensen's inequality.
		For the last term, we have
		\bea
		&& \left\{ \bbE_0 \big[ \pi(a_{S_j} \mid d_j, S_j)^{h_2}  \,\big|\, \tilde{Z}_j \big] \right\}^{\frac{1}{h_2}}  \\
		&=&  \det \Big[ 2\pi \frac{d_j}{\gamma} (\bfX_{S_j}^T \bfX_{S_j})^{-1} \Big]^{-\frac{1}{2}} \cdot  
		\left\{ \bbE_0 \Big( \exp \big\{ - \frac{\gamma h_2}{2d_j} \| \bfX_{S_j} (a_{S_j}- \what{a}_{S_j}) \|_2^2 \big\} \,\big|\, \tilde{Z}_j \Big) \right\}^{\frac{1}{h_2}}  \\
		&=&  \det \Big[ 2\pi \frac{d_j}{\gamma} (\bfX_{S_j}^T \bfX_{S_j})^{-1} \Big]^{-\frac{1}{2}} \times \Big(1 + \frac{\gamma}{d_j} h_2 d_{0j} \Big)^{-\frac{|S_j|}{2h_2}}  \\
		&& \times \,\, \exp \left(- \frac{\gamma}{ 2(d_j + \gamma h_2 d_{0j})} \| \bfX_{S_j}(a_{S_j} - [\bfX_{S_j}^T\bfX_{S_j}]^{-1}\bfX_{S_j}^T \tilde{Z}_j a_{0j} ) \|_2^2  \right)  \\
		&=& \left(1 + \frac{\gamma}{d_j} h_2 d_{0j} \right)^{ \frac{1}{2}(1 - h_2^{-1}) |S_j| } \cdot  N_{|S_j|} \Big(a_{S_j} \,\big|\, [\bfX_{S_j}^T\bfX_{S_j}]^{-1}\bfX_{S_j}^T \tilde{Z}_j a_{0j},\, \big( \frac{d_j}{\gamma} + h_2 d_{0j}\big) [\bfX_{S_j}^T\bfX_{S_j}]^{-1}  \Big) .
		\eea
		The second equality follows from the moment generating function of the noncentral chi-square distribution because $d_{0j}^{-1}\| \bfX_{S_j}(\what{a}_{S_j} - a_{S_j}) \|_2^2$ is the noncentral chi-square random variable with $|S_j|$ degrees of freedom and the noncentrality parameter $\| \bfX_{S_j}(a_{S_j} - [\bfX_{S_j}^T\bfX_{S_j}]^{-1}\bfX_{S_j}^T \tilde{Z}_j a_{0j} ) \|_2^2$ under $\bbP_0$ given $\tilde{Z}_j$.
		Thus, 
		\bea
		&& \sum_{j=2}^p \bbE_0 \pi_\alpha \big(S_j \ge C_{\rm dim} s_{0} \mid \bfX_n \big) \\
		&\le& \sum_{j=2}^p \bbE_0 \big[ N_{nj}(S_j \ge C_{\rm dim} s_0) \big]\cdot \frac{e^{C_1 s_{0j}} }{\pi_j(S_{0j})} C_2 n j^2 \log n \,\,+\,\, o(1) \\
		&\le& \sum_{j=2}^p \sum_{S_j: |S_j| \ge C_{\rm dim} s_0 } \pi_j(S_j)  \cdot \left(1 + \frac{\gamma}{M_1} h_2 d_{0j} \right)^{ \frac{1}{2}(1 - h_2^{-1}) |S_j| } \cdot \frac{e^{C_1 s_{0j}} }{\pi_j(S_{0j})} C_2 n j^2 \log n \,\,+\,\, o(1) \\
		&\le& \sum_{j=2}^p \, \frac{e^{C_1 s_{0j}} }{\pi_j(S_{0j})} C_2 n j^2 \log n \sum_{S_j: |S_j| \ge C_{\rm dim} s_0 } e^{ C_3 |S_j| } \pi_j(S_j) \,\,+\,\, o(1) \\
		&\le& \sum_{j=2}^p \, e^{ C_1 s_{0j} + C_4 s_{0j}\log j + 4 \log(n\vee j)  } \cdot \left(\frac{e^{C_3}}{c_1 p^{c_2}} \right)^{C_{\rm dim}s_0} \,\,+\,\, o(1)
		\eea
		for some positive constants $C_3$ and $C_4$.
		The last term is of order $o(1)$ for some large constant $C_{\rm dim}>0$. 	
	\end{proof}

	\begin{lemma}\label{lemma:numer}
		For a given constant $0<\epsilon_0< 1/2$ and an integer $s_0$, assume model \eqref{model} and the MESC prior with Condition \hyperref[condP]{\rm(P)}.
		For given constants $0<\alpha<1$, $M_1 \le (1-2\epsilon_0)^4\epsilon_0$, $M_2\ge (1+2\epsilon_0)^4 \epsilon_0^{-1}$ and $2 \le j \le p$, define $\delta_{n}' = \sqrt{s_{0} \log p/ \Psi_{\min}(C_2 s_{0})^2 }$,
		\bea
		B_{nj}(C_1) &=& \big\{ a_j : \| a_j - a_{0j}\|_2^2 \ge C_1 \delta_{n}'^2  \, \big\} \quad \text{ and} \\
		N_{nj} &=& \sum_{S_j: 0< |S_j|\le C_3 s_0} \int_{M_1}^{M_2} \int_{B_{nj}(C_1)} R_{nj}(a_j, d_j)^\alpha \pi(a_{S_j} \mid d_j, S_j) \pi_j(S_j) \pi(d_j) d a_{S_j} \delta_0(d a_{j,S_j^c}) d d_j ,
		\eea
		for some positive constants $C_1$, $C_2>1$ and $C_3 = C_2 -1 >0$.
		Then, we have
		\bea
		\bbE_0 \big( N_{nj} \big)
		&\le& e^{- C_{\rm num,1} \cdot C_1 s_{0}\log p } \sum_{S_j: 0<|S_j|\le R_j } C_{\rm num,2}^{|S_j|} \pi_j(S_j)
		\eea
		for some positive constants $C_{\rm num,1} = C_{\rm num,1}(M_2, \epsilon_0, \alpha)$ and $C_{\rm num,2} = C_{\rm num,2}(M_1, \epsilon_0, \gamma)$ for any $\Omega_{0n} \in \calU_p^0$.
	\end{lemma}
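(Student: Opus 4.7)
The plan is to mimic the structure of the proof of Lemma \ref{lemma:effect_dim}, replacing the dimension-based cutoff with the distance cutoff $B_{nj}(C_1)$ and exploiting the restricted eigenvalue constant that is baked into $\delta_n'$. Fix $S_j$ with $0<|S_j|\le C_3 s_0$ and $d_j\in[M_1,M_2]$. Since $|S_j|+|S_{0j}| \le C_3 s_0 + s_0 = C_2 s_0$, the vector $a_j - a_{0j}$ is supported on a set of size at most $C_2 s_0$, so by the very definition of $\Psi_{\min}$,
\[
\|\tilde Z_j(a_j-a_{0j})\|_2^2 \;\ge\; \Psi_{\min}(C_2 s_0)^2\,\|a_j-a_{0j}\|_2^2.
\]
On $B_{nj}(C_1)$ this is at least $\Psi_{\min}(C_2 s_0)^2\cdot C_1 \delta_n'^2 = C_1 s_0\log p$, so the random $\Psi_{\min}^2$ in the numerator and in $\delta_n'^2$ cancel.

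Next, conditioning on $\tilde Z_j$ and applying H\"older's inequality with conjugate exponents $h_1,h_2>1$ chosen so that $h_1\alpha<1$, I would split
\[
\mathbb{E}_0\!\left[R_{nj}(a_j,d_j)^{\alpha}\,\pi(a_{S_j}\mid d_j,S_j)\,\big|\,\tilde Z_j\right]
\;\le\;\bigl\{\mathbb{E}_0[R_{nj}(a_j,d_j)^{h_1\alpha}\mid \tilde Z_j]\bigr\}^{1/h_1}\bigl\{\mathbb{E}_0[\pi(a_{S_j}\mid d_j,S_j)^{h_2}\mid \tilde Z_j]\bigr\}^{1/h_2}.
\]
For the first factor, the R\'enyi divergence formula between two normal regression laws, exactly as in Lemma \ref{lemma:effect_dim}, yields
\[
\bigl\{\mathbb{E}_0[R_{nj}^{h_1\alpha}\mid \tilde Z_j]\bigr\}^{1/h_1}
\;\le\;\exp\!\Big(-\tfrac{\alpha(1-h_1\alpha)}{2(h_1\alpha d_{0j}+(1-h_1\alpha)d_j)}\|\tilde Z_j(a_j-a_{0j})\|_2^2\Big),
\]
the accompanying $\log$-Jensen factor being $\le 1$. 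Combined with the cancellation above, the bound is at most $\exp(-C_{\mathrm{num},1} C_1 s_0\log p)$ for a constant $C_{\mathrm{num},1}$ depending only on $\alpha,\epsilon_0$ and $M_2$ (using $d_{0j}\le \epsilon_0^{-1}$ and $d_j\le M_2$).

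For the second factor, the Gaussian moment-generating-function calculation in Lemma \ref{lemma:effect_dim} gives
\[
\bigl\{\mathbb{E}_0[\pi(a_{S_j}\mid d_j,S_j)^{h_2}\mid \tilde Z_j]\bigr\}^{1/h_2}
\;=\;\Bigl(1+\tfrac{\gamma h_2 d_{0j}}{d_j}\Bigr)^{\tfrac{1}{2}(1-h_2^{-1})|S_j|}
\,N_{|S_j|}\!\bigl(a_{S_j}\,\big|\,\mu_{S_j},\Sigma_{S_j}\bigr)
\]
with mean and covariance as in that proof. For $d_j\in[M_1,M_2]$ and $d_{0j}\le \epsilon_0^{-1}$ the prefactor is bounded by $C_{\mathrm{num},2}^{|S_j|}$ for $C_{\mathrm{num},2}=C_{\mathrm{num},2}(M_1,\epsilon_0,\gamma)$. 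Integrating the remaining normal density over $B_{nj}(C_1)$ and integrating $\pi(d_j)$ over $[M_1,M_2]$ each contribute at most $1$, so summing over admissible $S_j$ weighted by $\pi_j(S_j)$ delivers the claimed bound
\[
\mathbb{E}_0(N_{nj})\;\le\;e^{-C_{\mathrm{num},1}C_1 s_0\log p}\sum_{S_j:\,0<|S_j|\le R_j}C_{\mathrm{num},2}^{|S_j|}\pi_j(S_j),
\]
after finally taking outer expectation in $\tilde Z_j$, which is harmless because the right-hand side is deterministic. The only genuine obstacle is bookkeeping: one must verify that $h_1$ can indeed be chosen strictly greater than $1$ with $h_1\alpha<1$ (e.g.\ $h_1=(1+\alpha^{-1})/2$) so that both the R\'enyi exponent has a positive numerator and the Jensen factor remains $\le 1$, and that $C_3 s_0\le R_j$ under $s_0\log p=o(n)$; both are immediate from Condition \hyperref[condP]{(P)}. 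No new probabilistic input beyond the R\'enyi/MGF computation used in Lemma \ref{lemma:effect_dim} is needed.
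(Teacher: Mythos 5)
Your proposal is correct and follows essentially the same route as the paper's proof: reduce $B_{nj}(C_1)$ to the event $\|\tilde Z_j(a_j-a_{0j})\|_2^2\ge C_1 s_0\log p$ via the restricted eigenvalue bound with support size $|S_j|+|S_{0j}|\le C_2 s_0$, then reuse the H\"older/R\'enyi and Gaussian MGF computation from Lemma \ref{lemma:effect_dim} to extract the factor $e^{-C_{\rm num,1}C_1 s_0\log p}$ from the likelihood-ratio term and the factor $C_{\rm num,2}^{|S_j|}$ from the prior-density term. The only (immaterial) discrepancy is the exact constant in the R\'enyi exponent after taking the $1/h_1$ power, which affects only the value of $C_{\rm num,1}$.
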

	\begin{proof}
		By the definition of $\Psi_{\min}$, it is easy to see that
		\bea
		\| \tilde{Z}_j (a_j - a_{0j})\|_2^2 &\ge&
		\Psi_{\min}(|S_{a_j - a_{0j}}|)^2 \| a_j - a_{0j}\|_2^2.
		\eea
		Note that $\Psi_{\min}(|S_{a_j - a_{0j}}|) \ge \Psi_{\min}(|S_{a_j}|+ |S_{a_{0j}}|) \ge \Psi_{\min}(C_2 s_0)$ for any $S_j$ such that $0<|S_j|\le C_3 s_0$.
		Thus, it suffices to prove the result with respect to the set $B_{nj}'(C_1) := \big\{ a_j : \| \tilde{Z}_j (a_j - a_{0j})\|_2^2 \ge C_1 s_0 \log p \big\}$ instead of $B_{nj}(C_1)$.
		Let $a_{S_j +}$ be a $p$-dimensional vector such that $(a_{S_j+})_{S_j} = a_{S_j}$ and $(a_{S_j+})_l = 0$ for all $l \notin S_j$.
		The rest part of the proof is straightforward from the proof of Lemma \ref{lemma:effect_dim}.
		We have
		\bea
		&& \bbE_0 \big( N_{nj} \big)  \\
		&\le& \bbE_0 \Bigg[  \int_{M_1}^{M_2} \int_{B_{nj}'(C_1)} \sum_{S_j: 0<|S_j|\le R_j }     \exp \left( - \frac{h_1\alpha(1 - h_1\alpha)}{2 ( d_{0j} + M_2 )} \|\tilde{Z}_j (a_{S_j +} - a_{0}) \|_2^2 \right)  \\
		&& \quad\times \,\, N_{|S_j|} \Big(a_{S_j}\mid [\bfX_{S_j}^T\bfX_{S_j}]^{-1}\bfX_{S_j}^T \tilde{Z}_j a_{0j},\, \big( \frac{d_j}{\gamma} + h_2 d_{0j}\big) [\bfX_{S_j}^T\bfX_{S_j}]^{-1}  \Big)   \\
		&& \quad\times \,\,  \left(1 + \frac{\gamma}{M_1} h_2 d_{0j} \right)^{ \frac{1}{2}(1 - h_2^{-1}) |S_j| } \pi_j(S_j) \pi(d_j) d a_{S_j} d d_j  \Bigg] \\
		&\le& e^{- C_{\rm num,1} \cdot C_1 s_{0}\log p } \sum_{S_j: 0<|S_j|\le R_j } C_{\rm num,2}^{|S_j|} \pi_j(S_j),
		\eea
		where $C_{\rm num,1} = h_1\alpha(1-h_1 \alpha)/(2 [\epsilon_0^{-1}+ M_2 ] )$ and $C_{\rm num,2} = ( 1 + \gamma h_2 \epsilon_0^{-1}/M_1)^{2^{-1}(1-h_2^{-1})}$ for any constants $h_1, h_2 >1$ such that $h_1 \alpha <1$ and $h_1^{-1}+h_2^{-1}=1$.
		The second inequality holds because for any $\Omega_{0n}\in \calU_p^0$, we have $d_{0j} \le \epsilon_0^{-1}$. 
	\end{proof}

	\begin{proof}[Proof of Theorem 3.4]
		For some constant $K_{\rm chol}' >0$, let $\delta_n = K_{\rm chol}'  \sqrt{s_0\log p / n }$.
		Note that
		\bea
		&& \bbE_0 \pi_\alpha \left( \|A_n - A_{0n}\|_\infty \ge \sqrt{s_0} \delta_n  \,\,\big|\,\, \bfX_n \right)  \\
		&=& \bbE_0 \pi_\alpha \left( \max_{2\le j\le p} \| a_j - a_{0j}\|_1 \ge \sqrt{s_0} \delta_n  \,\,\big|\,\, \bfX_n \right) \\
		&\le& \sum_{j=2}^p \bbE_0 \pi_\alpha \left(  \| a_j - a_{0j}\|_1 \ge \sqrt{s_0} \delta_n  \,\,\big|\,\, \bfX_n \right) \\
		&\le& \sum_{j=2}^p \bbE_0 \pi_\alpha \left(  \| a_j - a_{0j}\|_1 \ge \sqrt{s_{0}} \delta_n ,\, S_j \le C_1 s_{0}  \,\,\big|\,\, \bfX_n \right) + \sum_{j=2}^p \bbE_0 \pi_\alpha \big(S_j \ge C_1 s_{0} \mid \bfX_n \big)
		\eea
		for some constant $C_1>0$.
		The second term in the last display is of order $o(1)$ by Lemma \ref{lemma:effect_dim}.
		Also note that 
		\bea
		&& \bbE_0 \pi_\alpha \Big(  \| a_j - a_{0j}\|_1 \ge \sqrt{s_{0}} \delta_n ,\, S_j \le C_1 s_{0}  \,\,\big|\,\, \bfX_n \Big) \\
		&\le& \bbE_0 \pi_\alpha \Big(  \| a_j - a_{0j}\|_2 \ge (C_1+1)^{-1} \delta_n ,\, S_j \le C_1 s_{0}  \,\,\big|\,\, \bfX_n \Big)  \\
		&\le& \bbE_0 \pi_\alpha \left(  \| a_j - a_{0j}\|_2 \ge C_2 K_{\rm chol}' \Big( \frac{s_{0} \log p}{\Psi_{\min}(C_2 s_{0})^2 } \Big)^{1/2} ,\, S_j \le C_1 s_{0} \,\,\big|\,\, \bfX_n \right)
		\\
		&& +\,\, \bbP_0 \Big( n^{-1}\Psi_{\min}(C_2 s_{0})^2 \le C_{\min} \epsilon_0  \Big),
		\eea
		where $C_2 = (C_1+1)^{-1} \sqrt{C_{\min} \epsilon_0}$ and $C_{\min} = (1-2\epsilon_0)^4 (1-\epsilon_0)$.
		Since we assume $s_0 = o(n)$,
		\bea
		\sum_{j=2}^p \bbP_0 \Big( n^{-1}\Psi_{\min}(C_2 s_{0})^2 \le C_{\min} \epsilon_0  \Big) &=& o(1)
		\eea
		by Lemma \ref{lemma:Nsets}.
		Let $\delta_{n}' = \sqrt{s_{0} \log p/ \Psi_{\min}(C_2 s_{0})^2 }$, then by Lemma \ref{lemma:d_supp},
		\bea
		&& \sum_{j=2}^p \bbE_0 \pi_\alpha \left(  \| a_j - a_{0j}\|_2 \ge C_2 K_{\rm chol}' \delta_{n}' ,\, S_j \le C_1 s_{0} \,\,\big|\,\, \bfX_n \right) \\
		&\le& \sum_{j=2}^p \bbE_0 \pi_\alpha \left(  \| a_j - a_{0j}\|_2 \ge C_2 K_{\rm chol}' \delta_{n}',\, S_j \le C_1 s_{0} ,\,\, M_1 \le d_j \le M_2 \,\,\big|\,\, \bfX_n \right) \,\,+\,\, o(1)
		\eea
		for some constants $M_1 \le (1-2\epsilon_0)^4\epsilon_0$ and $M_2\ge (1+2\epsilon_0)^4 \epsilon_0^{-1}$ because we assume that $s_0 \log p =o(n)$.
		In fact, Lemma \ref{lemma:d_supp} assumes the ESC prior, but it is easy to show that it also holds for the MESC prior for some constant $\nu_0'>0$.
		Let $a_{S_j +}$ be a $p$-dimensional vector such that $(a_{S_j+})_{S_j} = a_{S_j}$ and $(a_{S_j+})_l = 0$ for all $l \notin S_j$.
		By Lemma \ref{lemma:denom}, we have
		\begin{align}
		& \pi_\alpha \left(  \| a_j - a_{0j}\|_2 \ge C_2 K_{\rm chol}' \delta_{n}', \,\, M_1 \le d_j \le M_2 \,\,\big|\,\, \bfX_n \right) \nonumber  \\
		&= \frac{\sum_{S_j: 0< |S_j|\le C_1s_0} \int_{M_1}^{M_2} \int_{\|a_j- a_{0j}\|_2 \ge C_2 K_{\rm chol}' \delta_{n}' } R_{nj}(a_{S_j +}, d_j)^\alpha \pi(a_{S_j} \mid d_j, S_j) \pi_j(S_j) \pi(d_j) d a_{S_j} d d_j}{\sum_{S_j: 0< |S_j|\le R_j} \int_{M_1}^{M_2} \int R_{nj}(a_{S_j +}, d_j)^\alpha \pi(a_{S_j} \mid d_j, S_j) \pi_j(S_j) \pi(d_j) d a_{S_j} d d_j  }  \nonumber \\
		&=: \frac{N_{nj}}{D_{nj}} \nonumber \\
		&\le N_{nj} \cdot \frac{e^{C_3 s_{0j}}}{\pi_j(S_{0j})} \cdot C_4 n j^2 \log n \label{aj_l2_con_Nnj} \\
		&+ I \Big( \big| \|\tilde{X}_j \|_2^2 - \|\bfX_{S_{0j}} \what{a}_{S_{0j}}\|_2^2  - n \, d_{0j} \big| \le [ j^{2} \log n ]^{-1}  \Big)  \nonumber
		\end{align}
		for some positive constants $C_3$ and $C_4$.
		Note that 
		\bea
		\sum_{j=2}^p \bbP_0 \Big(  \big| \|\tilde{X}_j \|_2^2 - \|\bfX_{S_{0j}} \what{a}_{S_{0j}}\|_2^2  - n \, d_{0j} \big| \le \frac{1}{ j^2 \log n } \Big) 
		&\le&  \sum_{j=2}^p  \frac{1}{d_{0j}\, j^2 \log n } \,\,=\,\, o(1),
		\eea
		so we only need to focus on \eqref{aj_l2_con_Nnj}.
		By Lemma \ref{lemma:numer},
		\bea
		\frac{e^{C_3 s_{0j}}}{\pi_j(S_{0j})} \cdot C_4 n j^2 \log n \cdot \bbE_0 \big( N_{nj} \big) 
		&\le& \frac{e^{C_3 s_{0j}}}{\pi_j(S_{0j})} \cdot C_4 n j^2 \log n \cdot e^{- C_5 K_{\rm chol}' s_{0}\log p } \sum_{S_j: 0< |S_j|\le R_j} C_6^{|S_j|} \pi_j(S_j) \\
		&\le& e^{C_3 s_{0j} + C_7 s_{0j}\log j + 4\log (n \vee j) - C_5 K_{\rm chol}' s_{0}\log p }
		\eea
		for some positive constants $C_5$, $C_6$ and $C_7$.
		The summation of last display with respect to all $j$ is of order $o(1)$ for some large $K_{\rm chol}'$.
		Thus, we have proved 
		\bea
		\bbE_0 \pi_\alpha \left( \|A_n - A_{0n}\|_\infty \ge \sqrt{s_0} \delta_n  \,\,\big|\,\, \bfX_n \right)
		&=& o(1)
		\eea
		for some large constant $K_{\rm chol}'>0$.
	\end{proof}

	\section{Proofs of Minimax Lower Bounds}\label{sec:proof_minimax_Lbounds}
	
	\begin{proof}[Proof of Theorem 3.3]
		Note that 
		\bea
		\|\what{A}_n - A_{0n} \|_\infty
		&=& \max_j \| \what{a}_j - a_{0j} \|_1 \\
		&\ge& \| \what{a}_{S_{0j}} - a_{0,S_{0j}} \|_1
		\eea
		for any estimator $\what{A}_n$ and any $2\le j \le p$.
		Thus, it suffices to show that
		\bean\label{Lbound_s0reg}
		\inf_{\what{a}_{S_{0j}}} \sup_{\Omega_{0n} \in \calU_p} \bbE_0 \| \what{a}_{S_{0j}} - a_{0,S_{0j}} \|_1
		&\ge& c \cdot \frac{s_{0j}}{\sqrt{n}}
		\eean
		for some constant $c>0$ and any $2\le j \le p$. 
		Since it is the minimax lower bound for the standard linear regression having $s_0$-dimensional coefficient, the inequality \eqref{Lbound_s0reg} holds by a slight modification of Example 13.12 in \cite{duchi2016lecture}.
		Similarly, it is easy to check 
		\bea
		\inf_{\what{a}_{S_{0j}}} \sup_{\Omega_{0n} \in \calU_p} \bbE_0 \|\what{A}_n - A_{0n} \|_F^2
		&=&  \inf_{\what{a}_{S_{0j}}} \sup_{\Omega_{0n} \in \calU_p} \sum_{j=2}^p \bbE_0 \| \what{a}_{S_{0j}} - a_{0,S_{0j}} \|_2^2 \\
		&\ge& c \frac{\sum_{j=2}^p s_{0j} }{n}
		\eea
		for some constant $c>0$, by \cite{duchi2016lecture}.
	\end{proof}

	\begin{proof}[Proof of Theorem 3.5]	
		Note that
		\bea
		\inf_{\what{A}_n} \sup_{\Omega_{0n} \in \calU_p^0} \bbE_0 \|\what{A}_n - A_{0n} \|_\infty &=&
		\inf_{\what{A}_n} \sup_{\Omega_{0n} \in \calU_p^0} \bbE_0 \left[ \max_j \| \what{a}_j - a_{0j} \|_1 \right] \\
		&\ge& \inf_{\what{A}_n} \sup_{\Omega_{0n} \in \calU_p^0} \max_j \bbE_0   \| \what{a}_j - a_{0j} \|_1 
		\eea
		and
		\bea
		\inf_{\what{A}_n} \sup_{\Omega_{0n} \in \calU_p^0} \max_j \bbE_0   \| \what{a}_j - a_{0j} \|_1
		&\ge& \sup_{\Omega_{0n}\in \calU_p^0} \max_j c\cdot s_{0j}  \left(\frac{ \log (j/s_{0j})}{n} \right)^{1/2} \\
		&=& c\cdot s_0 \left(\frac{  \log (p/s_{0})}{n} \right)^{1/2}
		\eea
		for some constant $c>0$ by \cite{ye2010rate}. 
		
	\end{proof}

	\section{Simulation under Misspecified Models}
	
	Although it  slightly departs from the main topic of the paper, it might be worth comparing results for different choices of $\alpha$ when the model is misspecified.
	To investigate misspecified DAG models, we generated the data sets from the $p$-dimensional multivariate Laplace distribution with zero mean and covariance matrix $\sg_{0n}$.
	The covariance matrix $\sg_{0n}$ was generated based on the MCD of $\Omega_{0n} = \sg_{0n}^{-1}$ as before, where only 3$\%$ of entries of the Cholesky factor $A_{0n}$ were drawn from a uniform distribution on $[-0.7, -0.3] \cup [0.3, 0.7]$. 
	The entries of the diagonal matrix $D_{0n}$ were sampled from a uniform distribution on $[2, 5]$. 
	We generated the data sets under two settings: $(n=100,p=300)$ and $(n=200,p=500)$.

	\begin{table}
		\centering
		\caption{
			Results for ESC prior with different choices of $\alpha$ are shawn.
			Sp: sparsity;
			FDR: false discovery rate; TPR: true positive rate; $\bar{p}_0$: the mean inclusion probability for zero entries in $A_{0n}$; $\bar{p}_1$: the mean inclusion probability for nonzero entries in $A_{0n}$. 
		}\vspace{.15cm}
		\begin{tabular}{cc ccccc}
			\hline
			$(n,p, \text{Sp})$ & $\alpha$ & \# of errors & FDR & TPR & $\bar{p}_0$ & $\bar{p}_1$   \\ \hline
			\multicolumn{1}{c}{\multirow{5}{*}{(100, 300, 3\%)}}  & 0.999 & 778 & 0.3234 & 0.8074 & 0.0252 & 0.8038  \\
			& 0.8 & 547 & 0.1843 & 0.7665 & 0.0157 & 0.7620  \\
			& 0.6 & 568 & 0.0773 & 0.6305 & 0.0100 & 0.6287  \\
			& 0.4 & 899 & 0.0275 & 0.3413 & 0.0073 & 0.3625  \\
			& 0.2 & 1272 & 0.0133 & 0.0550 & 0.0065  & 0.0903  \\ \hline
			\multicolumn{1}{c}{\multirow{5}{*}{(200, 500, 3\%)}} & 0.999 & 1053 & 0.2020 & 0.9621 & 0.0161 & 0.9552  \\
			& 0.8 & 591 & 0.0980 & 0.9447 & 0.0105 & 0.9362  \\
			& 0.6 & 466 & 0.0370 & 0.9105 & 0.0066 & 0.8980  \\
			& 0.4 & 955 & 0.0146 & 0.7560 & 0.0047 & 0.7430  \\
			& 0.2 & 2856 & 0.0100 & 0.2392 & 0.0042  & 0.2559  \\ \hline
		\end{tabular}\label{table:sim_mis}
	\end{table}
	
	The simulation results are summarized in Table \ref{table:sim_mis}.
	Based on the results, when the model is misspecified, the choice $\alpha \approx 1$ might not be good because it tends to have high FDR value.
	Instead, a slightly smaller choice of $\alpha$ would give reasonable performance.
	It seems that $\alpha=0.8$ gives reasonable results in terms of the number of errors (and others).
	
	In our settings, as the power $\alpha$ decreases, one can see that FDR, TPR, $\bar{p}_0$ and $\bar{p}_1$ also decrease.
	If we take a close look at the posterior samples, selected variables with smaller $\alpha$ tend to be a subset of those with larger $\alpha$, which supports our observations.
	We are not sure whether this trend is always true or not, but here is a rough intuition: smaller value of $\alpha$ weakens the effect of $(\what{d}_{S_j})^{-(\alpha n+\nu_0)/2}$ in $\pi_\alpha(S_j\mid \bfX_n)$, which pushes $\pi_\alpha(S_j\mid \bfX_n)$ to select larger $S_j$, while the main penalty term $\pi_j(S_j)$ is not changed, so $\pi_\alpha(S_j\mid \bfX_n)$ tends to prefer smaller subset $S_j$ as $\alpha$ decreases.
	
	In summary, in our problem, the different choice of $0<\alpha<1$ can improve the variable selection performance compared to $\alpha \approx 1$.
	The choice $\alpha = 0.8$ gave reasonable results in our simulation, but there is no theoretical guideline to choose $\alpha$, which might be an interesting topic for the future research.
	

\end{document}